\makeatletter \@addtoreset{equation}{section} \makeatother
\theoremstyle{plain}
\numberwithin{equation}{section}
\newtheorem{theorem}{Theorem}[section]
\newtheorem{lemma}{Lemma}[section]
\newtheorem{Definition}{Definition}[section]
\newtheorem{remark}{Remark}[section]	
\newtheorem{corollary}{Corollary}[section]
\newtheorem{example}{Example}[section]
\begin{document}
    \begin{center}
    \textbf{\LARGE{Several new classes of MDS symbol-pair codes derived from matrix-product codes}}\footnote {
        Xiujing Zheng, Liqi Wang and Shixin Zhu are with School of Mathematics, Hefei University of Technology, Hefei, China
        
        Email:  xiujingzheng99@163.com(X. Zheng);
        liqiwangg@163.com(L. Wang);    
        zhushixin@hfut.edu.cn(S. Zhu)
        }\\
    \end{center}

    \begin{center}
        {Xiujing Zheng \ Liqi Wang \ Shixin Zhu}
    \end{center}

    \begin{center}
    \textit{School of Mathematics, Hefei University of Technology, Hefei, 230009, P. R. China}
    \end{center}

    \begin{abstract}
        In order to correct the pair-errors generated during the transmission of modern high-density data storage that the outputs of the
        channels consist of overlapping pairs of symbols,
        a new coding scheme named symbol-pair code is proposed. 
        The error-correcting capability of the symbol-pair code is determined by its minimum symbol-pair distance. For such codes, the larger the minimum symbol-pair distance, the better. It is a challenging task to construct symbol-pair codes with optimal parameters, especially, maximum-distance-separable (MDS) symbol-pair codes. 
        In this paper, the permutation equivalence codes of matrix-product codes with underlying matrices of orders $3$ and $4$ are used 
        to extend the minimum symbol-pair distance, four new classes of MDS symbol-pair codes and a new class of AMDS symbol-pair codes are derived.

    \noindent {\bf Keywords:} symbol-pair codes $\cdot$ MDS symbol-pair codes  $\cdot$ matrix-product codes $\cdot$ symbol-pair distance.
    \end{abstract}

    \section{Introduction}\label{sec1}

    In the traditional information transmission model, we typically divide the information into individual information units to analyze the noisy channel.
    With the increasing demand for data storage, the information sometimes needs to be transmitted in the form of overlapping symbols.
    When the information is transmitted in a channel that outputs pairs of overlapping symbols,
    there are always errors in the writing and reading of symbols.
    To overcome the errors of information transmitted in such overlapping symbols, 
    symbol-pair codes were firstly proposed by Cassuto and Blaum \cite{CB10}. 
    The symbol-pair coding theory further matured by  Cassuto and Blaum \cite{CB11}, where they studied asymptotic bounds on the code rate.
    Shortly after, Cassuto and Litsyn \cite{CL11} presented the Gilbert-Varshamov bounds on code rates for symbol-pair codes. 
    
    In general, a code of length $n$ with size $M$ and minimum symbol-pair distance $d_{sp}(\mathcal{C})$ is called an $(n,M,d_{sp}(\mathcal{C}))$ symbol-pair code.
    Analogous to classical error-correcting codes, the parameters of symbol-pair codes are mutually restricted, 
    and there is also a so-called Singleton bound for symbol-pair codes. 

    \begin{lemma}\cite{CKW12}\label{le1.1}(Singleton bound with respect to symbol-pair codes)
        Let $\mathbb{F}_q$ be the finite field and $2\leq d_{sp}(\mathcal{C})\leq n$. If $\mathcal{C}$ is a symbol-pair code with parameters $(n,M,d_{sp}(\mathcal{C}))_q$,  
        then $M\leq q^{n-d_{sp}(\mathcal{C})+2}$. 
        Particularly, if such bound is attained, then the symbol-pair codes are called  maximum-distance-separable (MDS) symbol-pair codes, abbreviated as MDS $(n,d_{sp}(\mathcal{C}))_q$  symbol-pair codes. 
        If $M= q^{n-d_{sp}(\mathcal{C})+1}$, then the symbol-pair codes are called almost  maximum-distance-separable (AMDS) symbol-pair codes, abbreviated as AMDS $(n,d_{sp}(\mathcal{C}))_q$  symbol-pair codes. 
    \end{lemma}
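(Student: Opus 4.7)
The plan is to adapt the classical puncturing proof of the Singleton bound to the symbol-pair metric. First I would choose an arbitrary block of $d_{sp}(\mathcal{C}) - 2$ consecutive coordinate positions (indices taken cyclically if the pair vector $\pi(c) = ((c_0,c_1),\ldots,(c_{n-1},c_0))$ is defined cyclically) and puncture $\mathcal{C}$ on this block, producing a $q$-ary code $\mathcal{C}'$ of length $n - d_{sp}(\mathcal{C}) + 2$. Trivially $|\mathcal{C}'| \leq q^{n - d_{sp}(\mathcal{C}) + 2}$, so the bound will follow at once if the puncturing map $\mathcal{C} \to \mathcal{C}'$ is shown to be injective.

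Injectivity is then the content of the proof. I would argue by contradiction: suppose two distinct codewords $c_1, c_2 \in \mathcal{C}$ agree outside the chosen window, so that the nonzero difference $z = c_1 - c_2$ has its Hamming support contained in a block of $d_{sp}(\mathcal{C}) - 2$ consecutive positions. The only pairs $(z_j, z_{j+1})$ that can possibly be nonzero are then those whose index $j$ lies in the window itself together with the single index immediately to the left of the window, giving at most
$$w_{sp}(z) \leq (d_{sp}(\mathcal{C}) - 2) + 1 = d_{sp}(\mathcal{C}) - 1$$
nonzero pairs. Since $z \neq 0$, this produces a pair of codewords at symbol-pair distance strictly less than $d_{sp}(\mathcal{C})$, contradicting the definition of the minimum symbol-pair distance. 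Hence $c_1 = c_2$ and the bound is proved.

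The main obstacle, though mild, is the careful bookkeeping of nonzero pairs at the boundary of the chosen window: one must verify that confining the support of $z$ to a block of length $d_{sp}(\mathcal{C}) - 2$ really produces at most $d_{sp}(\mathcal{C}) - 1$ pairs that can carry a nonzero entry, irrespective of whether the pair vector is formed cyclically or not, and that the window length $d_{sp}(\mathcal{C}) - 2$ is nonnegative, which is guaranteed by the hypothesis $d_{sp}(\mathcal{C}) \geq 2$. Once this combinatorial check is in place, the rest of the argument is a direct translation of the classical Singleton proof into the symbol-pair setting.
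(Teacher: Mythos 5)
Your puncturing argument is correct: deleting a window of $d_{sp}(\mathcal{C})-2$ consecutive coordinates and checking that two codewords agreeing off the window would have at most $(d_{sp}(\mathcal{C})-2)+1=d_{sp}(\mathcal{C})-1$ differing pairs (the window positions plus the one position just before it, indices cyclic) gives injectivity and hence $M\leq q^{n-d_{sp}(\mathcal{C})+2}$. The paper itself states this lemma without proof, citing \cite{CKW12}; your argument is essentially the standard proof from that reference, so there is nothing to reconcile.
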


    The research on symbol-pair codes attracted the interest of many scholars. 
    In \cite{YBS16}, Yaakobi et al. generalized some results for symbol-pair codes to more general $b$-symbol codes. 
    Concomitantly, the Singleton-type bound for $b$-symbol codes was given by Ding et al.\cite{DZG18}. 
    Furthermore, some results on the distributions of symbol-pair weights for linear codes were obtained(see \cite{DWLS19,SZW18,ML20}).
    In \cite{LXY18,HTM14,YBS12,HMH15}, various decoding algorithms for symbol-pair codes were also proposed.

    One of the main tasks in symbol-pair coding theory is to design symbol-pair codes with good parameters.
    Specially, MDS symbol-pair codes are optimal since they have the highest error detection and correction capability for the same code length and code size.
    In \cite{CJKWY13,CKW12}, Chee et al. employed a variety of methods to construct MDS symbol-pair codes, 
    such as Euler diagrams, classical MDS codes, interleaving and extending classical MDS codes.
    Afterwards, numerous scholars have devoted to the construction of MDS symbol-pair codes(see\cite{DGZZZ18,KZL15,ZYST21,LELP23,CLL17} and the relevant references therein). 
    Due to the nice algebraic structure of constacyclic codes, including cyclic codes and negacyclic codes, they have been applied extensively to the construction of symbol-pair codes, especially constacyclic codes with repeated roots. According to these constacyclic codes, MDS symbol-pair codes with various lengths and minimum symbol-pair distances have been obtained. (see \cite{KZL15,KZZLC18,LG17,CLL17,ML22.3,ML22.5,TL23,L23,DNS20}  and the relevant references therein).
    At the same time, some MDS symbol-pair codes over finite rings have also been derived from constacyclic codes(see\cite{DKKSSY19,DKSSGM21,DST21,DNSS18} and the relevant references therein).

    Matrix-product codes over finite fields were first introduced by Blackmore et al. \cite{BN01}, which can be viewed as generalizations
    of Reed-Muller codes and some special constructions of codes, such as  Plotkin's $(u \vert u+v)$-construction, 
    ternary $ (u+v+\omega \vert 2u+v \vert u)$-construction.
    Since then, the research on the properties and applications of matrix-product codes attracted the attentions of a lot of scholars(see \cite{A08,FLL13,LEL23,HLR09} and the relevant references therein).  
    In \cite{HLR09}, Hernando et al. considered the case in which the codes that form the matrix-product codes are nested linear codes and the minimum distances were also computed.
    Recently, Luo et al.\cite{LELP23} proposed a lower bound for the symbol-pair distance of matrix-product codes 
    and some new classes of MDS symbol-pair codes were derived from matrix-product codes.

    \begin{table}
        \caption{Some known MDS symbol-pair codes over finite fields}
        \begin{center}
        	\renewcommand{\arraystretch}{1.3}
            \begin{tabular}{ccccc}
                \toprule
                $q$ &    length $\mathcal{N}$ & $d_{sp}(\mathcal{C})$ &  Refs.  \\
                \midrule
                \multirow{2}{*}{prime power}  & $\mathcal{N}=lp^s$, $l\geq t(q+1)$ coprime to $p$,     & \multirow{2}{*}{4}  & \multirow{2}{*}{\cite{ZYST21}}         \\
                &  $t\vert l$, $\frac{l}{t}\vert (q^2-1)$      &                     &                     \\

                prime power   & $\mathcal{N}\vert (q^2-1)$ and $\mathcal{N}>q+1$  & 5  &     \cite{KZL15}\\
                
                prime power   & $5\leq \mathcal{N}\leq q^2+q+1$                    & 5  &     \cite{DGZZZ18}\\
                prime power   & max $\{6,q+2\}\leq \mathcal{N}\leq q^2$  &  6& \cite{DGZZZ18}\\

                prime power, $q\equiv 1\ ({\rm mod}\ 3)$ & $\frac{q^2+q+1}{3}$  & 6 & \cite{KZL15}\\
                prime power   & $q^2+1$  &  6  & \cite{KZL15}\\

                \multirow{2}{*}{prime power} & $r\vert (q-1) $, $\mathcal{N}\vert (q^3-1)$,              & \multirow{2}{*}{5} & \multirow{2}{*}{\cite{LG17}} \\
                                             & $\mathcal{N}r \nmid(q-1)$, $(\frac{q-1}{r},\mathcal{N})=1$ &                    &  \\
    
                \multirow{2}{*}{prime power} & $\mathcal{N}r\vert (q-1)(q^2+1)$, $\mathcal{N}r\nmid (q^2-1)$,   & \multirow{2}{*}{6}  & \multirow{2}{*}{\cite{LG17}}  \\
                                             &    $(\frac{q-1}{r},\mathcal{N})=1$                     &                     &  \\
                
                \multirow{2}{*}{prime power} & $\mathcal{N}\vert (q^2-1)$, $\mathcal{N}$ odd or        & \multirow{2}{*}{6}& \multirow{2}{*}{\cite{LG17}} \\	
                                             & $\mathcal{N}$ even and $v_2(\mathcal{N})<v_2(q^2-1)$    &                   &    \\

                odd prime power $q\geq 3$  & $\mathcal{N}\geq q+4$, $\mathcal{N}\vert (q^2-1)$ & 6 & \cite{CLL17}\\
    
                odd prime $p\geq 5$ & $\mathcal{N}=lp$, $l>2$, ${\rm gcd}(l,p)=1$, $l\vert(p-1)$ & 5 & \cite{CLL17}\\
                odd prime $p\geq 5$ & $3p$ & 6,\ 7& \cite{CLL17} \\

                odd prime $p$ & $p^2+p$  &  6 & \cite{KZZLC18}   \\
                odd prime $p$ & $2p^2-2p$  &  6 & \cite{KZZLC18}   \\

                odd prime, $p\equiv 1\ ({\rm mod}\ 3)$  & $3p$  & $ 8,10,12$ & \cite{CLL17} \cite{ML22.3} \cite{TL23} \\

                odd prime $p$   & $4p$  &  7 & \cite{KZZLC18}  \cite{ML22.5}  \\

                odd prime, $p\equiv 1\ ({\rm mod}\ 5)$  & $5p$  &  7,\ 8,\ 12 & \cite{L23}  \cite{ML22.5} \\

                even prime power    & $2q+2$   &  7, $2q-1$  & \cite{LELP23} \\    

                \multirow{2}{*}{prime power} & \multirow{2}{*}{$2n$, $n\in[2,q]$}   & $ d_{sp}(\mathcal{C})=2l+1$, &\multirow{2}{*}{\cite{LELP23}}\\
                                             &                                      & $1\leq l\leq n-1$        &                              \\

                prime power       & $3n$, $n\in[3,q]$   &  6 & \cite{LELP23} \\    
                prime power and $q\neq 4,5$,     &  \multirow{2}{*}{$q^2+q$}     &    \multirow{2}{*}{6}   &     \multirow{2}{*}{\cite{LELP23}}  \\
                $q\neq 2^t$ with odd integer $t$ &                               &                         &            \\
                    
                odd prime power, $q\equiv 1\ ({\rm mod}\ 3)$  & $3mp$, $m\in[1,\frac{q}{p}]$   &  7 & \cite{LELP23} \\
                prime power, $q\equiv 1\ ({\rm mod}\ 3)$  & $3mp$, $m\in[1,\frac{q}{p}]$   &  10 & \cite{LELP23} \\

                prime power, $q\equiv 1\ ({\rm mod}\ 3)$  & $3n$, $n\in[4,q]$   &  7 & Theorem  \ref{the3.2} \\
                prime power, $q\equiv 1\ ({\rm mod}\ 3)$  & $3n$, $n\in[4,q]$   &  8 & Theorem  \ref{the3.1}\\
                prime power, $q\equiv 1\ ({\rm mod}\ 3)$  & $3n$, $n\in[5,q]$   &  10 & Theorem \ref{the3.3} \\

                prime power, $q\equiv 1\ ({\rm mod}\ 4)$  & $4n$, $n\in[4,q]$   &  6 & Theorem  \ref{the3.4} \\

                \bottomrule
            \end{tabular}
        \end{center}
        \label{tab1}
    \end{table}

    In Table 1, we summarize some of the known MDS symbol-pair codes over finite fields.
    It can be seen that most MDS symbol-pair codes with minimum symbol-pair distance $d_{sp}(\mathcal{C})>6 $ are restricted to the finite field $\mathbb{F}_p$, where $p$ is a prime number. 
    Going on the line of the study in construting symbol-pair codes from matrix-product codes, five new classes of symbol-pair codes are derived.
    The MDS symbol-pair codes obtained in such paper have more flexible lengths and are over more general fields.
    Particularly, the MDS symbol-pair codes constructed in Theorems \ref{the3.2} and \ref{the3.3} generalize the lengths of two classes of MDS symbol-pair codes obtained in \cite{LELP23}.

    The rest of the paper is organized as follows: 
    In Sect.\ref{sec2}, some  notations and basic results about linear codes and symbol-pair codes over finite fields are reviewed. 
    In Sect.\ref{sec3}, several new classes of MDS symbol-pair codes and AMDS symbol-pair codes are derived from matrix-product codes. 
    We conclude this paper in Section \ref{sec4}.

    \section{Preliminaries}\label{sec2}

	Throughout this paper, let $q$ be a power of a prime $p$, $\mathbb{F}_{q}$ be a finite field with $q$ elements and $\mathbb{F}^*_{q}=\mathbb{F}_{q}\setminus \{0\}$. 
	A $q$-ary linear code $\mathcal{C}$ of length $n$ with dimension $k$, denoted by $[n,k]_{q}$, is a $k$-dimensional subspace of $\mathbb{F}^n_{q}$.
	The Hamming distance of any two codewords $\textbf{x}=(x_0,x_1,\cdots,x_{n-1})$, $\textbf{y}=(y_0,y_1,\cdots,y_{n-1})\in \mathcal{C}$ is $d_H(\textbf{x},\textbf{y})=wt(\textbf{x}-\textbf{y})$,
    where $wt(\textbf{x}-\textbf{y})$, called the Hamming weight of $\textbf{x}-\textbf{y}$, denotes the number of nonzero components of $\textbf{x}-\textbf{y}$.
	The minimum Hamming distance $d_H$ of $\mathcal{C}$ is the minimum Hamming distance between any two distinct codewords of $\mathcal{C}$.
	An $[n,k,d_H]_{q}$ linear code is an $[n,k]_{q}$ linear code with minimum  Hamming distance $d_H$. 
    If $d_H=n-k+1$, then the linear code $\mathcal{C}$ is called an MDS code. The dual of a linear code $\mathcal{C}$ is defined by the set
	$$\mathcal{C}^{\bot}=\{\textbf{z}\in \mathbb{F}_{q}^n : \sum_{i = 0}^{n-1}x_iz_i=0,for \ all \ \textbf{x}\in \mathcal{C}\}.$$

     Generalized Reed-Solomon (GRS) code, one of the best known MDS code families, is defined as follows.
	Suppose that $\textbf{a}=(\alpha_0,\alpha_1, \cdots, \alpha_{n-1}) \in \mathbb{F}_{q}^n$ and $\textbf{v}=(v_0, v_1,\cdots, v_{n-1}) \in (\mathbb{F}^*_{q})^n$,
	where $\alpha_0,\alpha_1, \cdots,$ $\alpha_{n-1}$ are $n$ distinct elements of $\mathbb{F}_{q}$ and
	$v_0, v_1, \cdots, v_{n-1}$ are $n$ nonzero elements of  $\mathbb{F}_{q}$ ($v_i$ can be the same).
	For an integer $k$ with $1\leq k\leq n$, let 
	$$\mathbb{F}_{q}[x]_k=\{f(x)\in \mathbb{F}_{q}[x] : {\rm deg}(f(x))\leq k-1 \}.$$
	Then a GRS code of length $n$ associated with $\textbf{a}$ and $\textbf{v}$ is defined as 
	$$GRS_k(\textbf{a},\textbf{v})=\{v_0f(\alpha_0), v_1f(\alpha_1), \cdots, v_{n-1}f(\alpha_{n-1}) : for \ all \ f(x)\in \mathbb{F}_{q}[x]_k\}.$$
	The elements $\alpha_0, \alpha_1, \cdots, \alpha_{n-1}$ are called the evaluation points of $GRS_k(\textbf{a},\textbf{v})$, 
	and $v_0, v_1, \cdots, v_{n-1}$ are called the column multipliers of $GRS_k(\textbf{a},\textbf{v})$.

	It is well known that a GRS code $GRS_k(\textbf{a},\textbf{v})$ is an MDS code with parameters $[n,k,n-k+1]_{q}$ and its generator matrix  is 
	
	$$ G_k  =  \begin{pmatrix}
		v_0              & v_1                   & \cdots & v_{n-1}             \\
		v_0\alpha_0     & v_1\alpha_1           & \cdots & v_{n-1}\alpha_{n-1}                 \\
		\vdots           & \vdots                & \ddots & \vdots                 \\
		v_0\alpha_0^{k-1} & v_1\alpha_1^{k-1}   & \cdots & v_{n-1}\alpha_{n-1}^{k-1} \\
	
	\end{pmatrix}.$$

    As we know, the dual of a GRS code $GRS_k(\textbf{a},\textbf{v})$ is still a GRS code $GRS_{n-k}(\textbf{a},\textbf{v}^{\prime})$ 
    for a vector $\textbf{v}^{\prime}=(v_0^{\prime}, v_1^{\prime}, \cdots, v_{n-1}^{\prime})\in (\mathbb{F}^*_{q})^{n}$.
    Additionally, the generator matrix of $GRS_{n-k}(\textbf{a},\textbf{v}^{\prime})$ is the parity-check matrix  of $GRS_k(\textbf{a},\textbf{v})$.
    Then the parity-check matrix of $GRS_k(\textbf{a},\textbf{v})$  has the following form

	$$ H_{n-k}  =  \begin{pmatrix}
		v_0^{\prime}              & v_1^{\prime}                   & \cdots & v_{n-1}^{\prime}             \\
		v_0^{\prime}\alpha_0     & v_1^{\prime}\alpha_1           & \cdots & v_{n-1}^{\prime}\alpha_{n-1}                 \\
		\vdots           & \vdots                & \ddots & \vdots                 \\
		v_0^{\prime}\alpha_0^{n-k-1} & v_1^{\prime}\alpha_1^{n-k-1}  & \cdots & v_{n-1}^{\prime}\alpha_{n-1}^{n-k-1} \\
	
	\end{pmatrix}.$$

    \subsection{Matrix-product codes}
    Assume that $A=(a_{ij})$ is an $M \times N$ matrix with entries in $\mathbb{F}_{q}$,
    and $\mathcal{C}_1, \mathcal{C}_2, \cdots, \mathcal{C}_{M}$ is a family of codes of length $n$ over $\mathbb{F}_{q}$.
    The matrix-product (MP) code $\mathcal{C}=[\mathcal{C}_1,\mathcal{C}_2,\cdots, \mathcal{C}_M]\cdot A$ of length $nN$ is the set of all matrix products as follows:
    $$\{(\textbf{c}_1,\textbf{c}_2,\cdots,\textbf{c}_M)\cdot A : \textbf{c}_i\in \mathcal{C}_i,\  i=1,2,\cdots, M  \} $$
    $$=\{(\sum_{m=1}^{M}a_{m1}\textbf{c}_m,\sum_{m=1}^{M}a_{m2}\textbf{c}_m,\cdots,\sum_{m=1}^{M}a_{mN}\textbf{c}_m): \textbf{c}_i\in \mathcal{C}_i,\  i=1,2,\cdots, M \}.$$
   
    If $\mathcal{C}_i$ is a linear code with generator matrix $G_i$, where $i=1,2,\cdots, M$, 
    then the generator matrix $G$ of the matrix-product code $\mathcal{C}$ is 

	$$ G  =  \begin{pmatrix}
		a_{11}G_1    &   a_{12}G_1     & \cdots & a_{1N}G_1     \\
		a_{21}G_2    &   a_{22}G_2     & \cdots & a_{2N}G_2     \\
		\vdots       &  \vdots         & \ddots & \vdots        \\
		a_{M1}G_M    &  a_{M2}G_M      & \cdots & a_{MN}G_M    \\
	
	\end{pmatrix}.$$

    Recall that a square matrix $A$ is non-singular if there exists a square matrix $B$ such that  $A\cdot B=B\cdot A=I$. Let $A_t$ be the matrix consisting of the first $t$ rows of $A=(a_{ij})_{M\times N}$. For $1 \leq j_1\leq j_2\leq \cdots \leq j_t\leq N $, 
    let $A_t(j_1,j_2,\cdots,j_t)$ be a $t\times t$ matrix consisting of columns $j_1,j_2,\cdots,j_t$ of $A_t$. A matrix that is non-singular by columns (NSC) is defined as follows.
    
    \begin{Definition}\cite{BN01}
        A matrix $A=(a_{ij})_{M\times N}$ is called {\bf non-singular by columns (NSC)} 
        if for any $1\leq t\leq M$ and $1\leq j_1,j_2,\cdots,j_t\leq N$, the square matrix $A_t(j_1,j_2,\cdots,j_t)$ is non-singular.       
    \end{Definition}

    \begin{remark}
        It was also shown in \cite{BN01} that for $M\geq 2$,  there exists an NSC $M\times N$ matrix over $\mathbb{F}_q$ if and only if $M\leq N\leq q$.
        Thus, we will default to $M\leq N$ as we consider the NSC matrix.
    \end{remark}

    For each integer $1\leq t\leq M$, denote the linear code of length $N$ generated by the matrix $A_t$ as $\mathcal{A}_t$.
    The following lemma gives a relationship between the parameters of a matrix-product code and the parameters of the codes that make it up.

    \begin{lemma}\cite{BN01}\label{le2.1}
        Let $\mathcal{C}=[\mathcal{C}_1,\mathcal{C}_2,\cdots, \mathcal{C}_M]\cdot A$ be an MP code and $\mathcal{C}_i$ be an $[n,k_i,d_H(\mathcal{C}_i)]_q$ linear code,
        where $1\leq i\leq M$ and $rank(A)=M$. Then $\mathcal{C}$ is an $[nN,\sum_{i = 1}^{M} k_i,d_H(\mathcal{C}) ]_q$ code, where
        $$d_H(\mathcal{C})\geq min \{d_H(\mathcal{C}_1)\cdot d_H(\mathcal{A}_1), d_H(\mathcal{C}_2)\cdot d_H(\mathcal{A}_2), \cdots, d_H(\mathcal{C}_M)\cdot d_H(\mathcal{A}_M)\}.$$
        Specially, if the matrix $A$ is also an NSC matrix, then $d_H(\mathcal{A}_i)=M-i+1$ for $1 \leq i\leq M$.
    \end{lemma}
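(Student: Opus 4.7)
The plan is to verify the length, dimension, and distance statements in turn, then the NSC addendum. The length $nN$ is immediate from the block structure of the generator matrix $G$: each of the $N$ block columns contributes an $n$-block. For the dimension, I would establish injectivity of the evaluation map $\varphi:(\mathbf{c}_1,\dots,\mathbf{c}_M)\mapsto (\mathbf{c}_1,\dots,\mathbf{c}_M)\cdot A$ on $\mathcal{C}_1\times\cdots\times\mathcal{C}_M$. If its image is zero, then comparing coordinate $\ell\in\{0,\dots,n-1\}$ inside the $j$-th block produces $\sum_{i=1}^{M}a_{ij}(\mathbf{c}_i)_{\ell}=0$ for every $j$, which says that the $M$-tuple $\bigl((\mathbf{c}_1)_{\ell},\dots,(\mathbf{c}_M)_{\ell}\bigr)$ lies in the left kernel of $A$. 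Since $\mathrm{rank}(A)=M$, that kernel is trivial, hence every $\mathbf{c}_i=0$ and $\dim\mathcal{C}=\sum_{i=1}^{M}k_i$.

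The principal step is the weight lower bound, and this is where I expect the main work to lie. Given a nonzero codeword $\mathbf{c}=(\mathbf{c}_1,\dots,\mathbf{c}_M)\cdot A$, I would arrange its coordinates into an $n\times N$ array $C$ by $C_{\ell,j}=\sum_{i}a_{ij}(\mathbf{c}_i)_{\ell}$, so that $\mathrm{wt}(\mathbf{c})$ equals the number of nonzero entries of $C$. Let $t$ be the largest index with $\mathbf{c}_t\neq 0$; then the $\ell$-th row of $C$ equals $\sum_{i=1}^{t}(\mathbf{c}_i)_{\ell}\,\mathbf{a}_i$, a codeword of $\mathcal{A}_t$ (where $\mathbf{a}_i$ denotes the $i$-th row of $A$). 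Whenever $(\mathbf{c}_t)_{\ell}\neq 0$, the coefficient of $\mathbf{a}_t$ in this combination is nonzero and, because $\mathrm{rank}(A)=M$ implies linear independence of $\mathbf{a}_1,\dots,\mathbf{a}_t$, the $\ell$-th row of $C$ is a nonzero codeword of $\mathcal{A}_t$ and hence has weight at least $d_H(\mathcal{A}_t)$. Counting at least $\mathrm{wt}(\mathbf{c}_t)\geq d_H(\mathcal{C}_t)$ such rows yields $\mathrm{wt}(\mathbf{c})\geq d_H(\mathcal{C}_t)\cdot d_H(\mathcal{A}_t)\geq \min_{1\leq i\leq M}d_H(\mathcal{C}_i)\cdot d_H(\mathcal{A}_i)$, the desired bound.

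Finally, $\mathcal{A}_t$ is the $[N,t]_q$ code generated by $A_t$, and the NSC hypothesis that every $t\times t$ column-submatrix of $A_t$ is non-singular is exactly the classical criterion for $A_t$ to generate an MDS code, so $d_H(\mathcal{A}_t)$ attains its Singleton bound. The delicate part of the overall plan is the correct selection of the pivot index $t$ in the weight estimate together with the observation that linear independence of the top $t$ rows of $A$ is automatically inherited from the rank assumption; once these details are in place, everything else reduces to direct counting over the $n\times N$ array.
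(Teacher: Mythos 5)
The paper does not prove this lemma; it is quoted directly from the reference [BN01], so there is no internal proof to compare against. Your argument is correct and is essentially the standard one from the matrix-product literature: injectivity of $(\mathbf{c}_1,\dots,\mathbf{c}_M)\mapsto(\mathbf{c}_1,\dots,\mathbf{c}_M)\cdot A$ via the trivial left kernel of a rank-$M$ matrix gives the dimension, the row-wise analysis of the $n\times N$ array with the pivot $t=\max\{i:\mathbf{c}_i\neq 0\}$ gives $\mathrm{wt}(\mathbf{c})\geq d_H(\mathcal{C}_t)\,d_H(\mathcal{A}_t)$, and the NSC condition is exactly the every-$t$-columns-independent criterion making $\mathcal{A}_t$ MDS. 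One small remark: your Singleton-bound conclusion yields $d_H(\mathcal{A}_i)=N-i+1$, whereas the lemma as printed says $M-i+1$; these agree only in the square case $M=N$ (the only case the paper uses), and your version is in fact the correct general statement, so nothing in your proof needs to change.
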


    If $A$ is a square matrix, then for the dual of the MP code, we have the following result. 
    \begin{lemma}\cite{BN01}\label{le2.2}
        Suppose that $\mathcal{C}=[\mathcal{C}_1,\mathcal{C}_2,\cdots, \mathcal{C}_M]\cdot A$ is an MP code and $A$ is an $M\times M$ non-singular matrix.
        Then $\mathcal{C}^{\bot }$ is also an MP code and is given by 
        $$([\mathcal{C}_1,\mathcal{C}_2,\cdots, \mathcal{C}_M]\cdot A)^{\bot }=[\mathcal{C}_1^{\bot },\mathcal{C}_2^{\bot },\cdots, \mathcal{C}_M^{\bot }]\cdot (A^{-1})^T.$$
    \end{lemma}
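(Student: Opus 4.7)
The plan is to establish the identity in two stages: first a dimension count, then an inclusion. Writing $\mathcal{D} = [\mathcal{C}_1^{\bot},\mathcal{C}_2^{\bot},\cdots,\mathcal{C}_M^{\bot}]\cdot (A^{-1})^T$, I would note that since $A$ is $M\times M$ and non-singular, so is $(A^{-1})^T$, and in particular both matrices have rank $M$. Applying the dimension formula implicit in Lemma~\ref{le2.1} (which for non-singular square matrices forces the generator matrix to have full row rank), the dimension of $\mathcal{C}$ is $\sum_{i=1}^{M}k_i$ and the dimension of $\mathcal{D}$ is $\sum_{i=1}^{M}(n-k_i)$. Hence $\dim\mathcal{C}+\dim\mathcal{D}=nM$, which is the length of both codes, so it suffices to prove $\mathcal{D}\subseteq \mathcal{C}^{\bot}$.

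For the inclusion, I would pick arbitrary codewords $\mathbf{c}=(\mathbf{c}_1,\ldots,\mathbf{c}_M)\cdot A\in\mathcal{C}$ with $\mathbf{c}_i\in\mathcal{C}_i$, and $\mathbf{d}=(\mathbf{d}_1,\ldots,\mathbf{d}_M)\cdot (A^{-1})^T\in\mathcal{D}$ with $\mathbf{d}_i\in\mathcal{C}_i^{\bot}$. Writing $A=(a_{ij})$ and $B=(A^{-1})^T=(b_{ij})$, the $j$-th block of $\mathbf{c}$ is $\sum_{m=1}^M a_{mj}\mathbf{c}_m$ and of $\mathbf{d}$ is $\sum_{l=1}^M b_{lj}\mathbf{d}_l$. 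Since the Euclidean inner product on $\mathbb{F}_q^{nM}$ splits blockwise,
\begin{align*}
\langle \mathbf{c},\mathbf{d}\rangle
&=\sum_{j=1}^{M}\Big\langle \sum_{m=1}^M a_{mj}\mathbf{c}_m,\ \sum_{l=1}^M b_{lj}\mathbf{d}_l\Big\rangle
=\sum_{m=1}^M\sum_{l=1}^M\Big(\sum_{j=1}^M a_{mj}b_{lj}\Big)\langle \mathbf{c}_m,\mathbf{d}_l\rangle .
\end{align*}
The inner sum is exactly the $(m,l)$-entry of $AB^{T}=A\cdot A^{-1}=I$, so only the diagonal terms survive, giving $\langle\mathbf{c},\mathbf{d}\rangle=\sum_{m=1}^M\langle\mathbf{c}_m,\mathbf{d}_m\rangle$, which vanishes because $\mathbf{d}_m\in\mathcal{C}_m^{\bot}$. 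Therefore $\mathcal{D}\subseteq\mathcal{C}^{\bot}$, and the dimension count forces equality.

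There is essentially no hard obstacle here; the only delicate point is recognizing the key algebraic identity $A(A^{-1})^{TT}=AA^{-1}=I$ that collapses the double sum to a diagonal sum. I would make sure to flag that this is precisely why the transpose-inverse $(A^{-1})^T$ (rather than some other natural candidate like $A^{-1}$ or $(A^T)^{-1}=(A^{-1})^T$ — they coincide, but the formulation matters) is the correct choice: it is dictated by the requirement that the row-blocks of $A$ describing $\mathcal{C}$ pair up orthogonally with the row-blocks of $(A^{-1})^T$ describing $\mathcal{D}$.
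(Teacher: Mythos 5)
Your proof is correct: the blockwise computation $\sum_j a_{mj}b_{lj}=(AB^T)_{ml}=(AA^{-1})_{ml}=\delta_{ml}$ gives $\mathcal{D}\subseteq\mathcal{C}^{\bot}$, and the dimension count via Lemma~\ref{le2.1} (valid since both $A$ and $(A^{-1})^T$ have rank $M$) upgrades this to equality. The paper itself offers no proof of this lemma --- it is imported from \cite{BN01} --- and your argument is essentially the standard one from that source, matching the structure the paper implicitly relies on in Corollary~\ref{co3.1}, so there is nothing to add.
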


    Surely, from Lemma \ref{le2.2}, the following corollary holds.

    \begin{corollary}\label{co3.1}
    Let $A^{-1}=(b_{ij})_{M \times M}$ and $H_i$ be the parity-check matrix of $\mathcal{C}_i$, where $i=1,2,\cdots,M$. 
    Then the parity-check matrix of the MP code $\mathcal{C}$ is

	$$ H  =  \begin{pmatrix}
		b_{11}H_1    &   b_{21}H_1     & \cdots & b_{M1}H_1     \\
		b_{12}H_2    &   b_{22}H_2     & \cdots & b_{M2}H_2     \\
		\vdots       &  \vdots         & \ddots & \vdots        \\
		b_{1M}H_M    &  b_{2M}H_M      & \cdots & b_{MM}H_M    \\
	\end{pmatrix}.$$ 

    \end{corollary}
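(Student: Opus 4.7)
The plan is to use the fact that a parity-check matrix of $\mathcal{C}$ is precisely a generator matrix of the dual code $\mathcal{C}^{\bot}$, so the task reduces to writing down an explicit generator matrix of $\mathcal{C}^{\bot}$ in block form and checking that it coincides with the matrix $H$ in the statement.

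First I would invoke Lemma~\ref{le2.2} to express
$$\mathcal{C}^{\bot} = [\mathcal{C}_1^{\bot}, \mathcal{C}_2^{\bot}, \cdots, \mathcal{C}_M^{\bot}] \cdot (A^{-1})^T.$$
Since each $H_i$ is a parity-check matrix of $\mathcal{C}_i$, it is also a generator matrix of $\mathcal{C}_i^{\bot}$. Denoting the $(i,j)$-entry of $(A^{-1})^T$ by $c_{ij}$, we have by definition of the transpose $c_{ij}=b_{ji}$. I would then apply the explicit generator-matrix template for matrix-product codes displayed just above Lemma~\ref{le2.1}, but now with $H_i$ in place of $G_i$ and $c_{ij}$ in place of $a_{ij}$. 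The block sitting in row $i$, column $j$ becomes $c_{ij}H_i = b_{ji}H_i$, which is exactly the block in the corresponding position of the matrix in the statement. Because generator matrices of $\mathcal{C}^{\bot}$ serve as parity-check matrices of $\mathcal{C}$, this finishes the argument.

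No real obstacle is anticipated; the corollary is an unpacking of Lemma~\ref{le2.2} together with the block-generator formula for matrix-product codes. The one place that invites a slip is the index bookkeeping introduced by the transposition in $(A^{-1})^T$: one must be careful that the $(i,j)$-slot of the dual generator picks up $b_{ji}$ rather than $b_{ij}$. I would double-check this by working out a small $M=2$ instance at the end, verifying that the transposition produces the indices shown in the statement and that the row/column grouping (all blocks in row $i$ share the factor $H_i$, while the block indices $b_{ji}$ vary across the row) reproduces the displayed form of $H$ exactly.
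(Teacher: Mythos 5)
Your proposal is correct and matches the paper's (implicit) argument exactly: the paper derives the corollary directly from Lemma~\ref{le2.2} together with the block generator-matrix formula for matrix-product codes, which is precisely your route. Your index check that the $(i,j)$ block of the generator of $\mathcal{C}^{\bot}$ is $\bigl((A^{-1})^T\bigr)_{ij}H_i = b_{ji}H_i$ is the only point of substance, and you handle it correctly.
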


    The following lemma is very important for the proofs in the sequel.

    \begin{lemma}\cite{LELP23}\label{le2.3}
        Let $\mathcal{C}_M\subseteq \cdots \subseteq\mathcal{C}_1$ be nested linear codes of length $n$ over $\mathbb{F}_q$ and $A$ be an $M\times N$ NSC matrix.
        Suppose that $\textbf{c}$ is a codeword of the corresponding MP code $\mathcal{C}=[\mathcal{C}_1,\mathcal{C}_2,\cdots, \mathcal{C}_M]\cdot A$.
        Then $\textbf{c}$ can be written as $\textbf{c}=(\textbf{c}_1,\textbf{c}_2,\cdots,\textbf{c}_N)$, where $\textbf{c}_i$ is a vector of length $n$, $i=1,2\cdots, N$,
        and if for $1\leq k\leq M-1$, there are precisely $k$ of $\textbf{c}_1,\textbf{c}_2,\cdots,\textbf{c}_N$ that are zero vectors, then for any $i=1,2\cdots, N$, $\textbf{c}_i\in \mathcal{C}_{k+1}$.
        If the number of zero vectors among $\textbf{c}_1,\textbf{c}_2,\cdots,\textbf{c}_N$ is greater than $M-1$, then $\textbf{c}=\textbf{0}$.

    \end{lemma}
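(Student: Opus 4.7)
The plan is to parametrise codewords of $\mathcal{C}$ through their underlying $M$-tuples of generators and translate ``many blocks vanish'' into a linear system on those generators. Any $\textbf{c}\in \mathcal{C}$ can be written as $\textbf{c}=(\textbf{c}'_1,\textbf{c}'_2,\ldots,\textbf{c}'_M)\cdot A$ with $\textbf{c}'_m \in \mathcal{C}_m$, and splitting column-wise its $j$-th length-$n$ block equals
\[
\textbf{c}_j=\sum_{m=1}^{M} a_{mj}\,\textbf{c}'_m,\qquad j=1,2,\ldots,N.
\]
The first step is to fix the positions $1\le j_1<\cdots<j_k\le N$ at which $\textbf{c}_{j}=\mathbf{0}$ and read off the resulting relations on $\textbf{c}'_1,\ldots,\textbf{c}'_M$.

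The identities $\sum_{m=1}^{M} a_{m j_s}\textbf{c}'_m=\mathbf{0}$ for $s=1,\ldots,k$ form a linear system whose coefficients on the ``head'' vectors $\textbf{c}'_1,\ldots,\textbf{c}'_k$ make up the transpose of the submatrix $A_k(j_1,\ldots,j_k)$. Because $A$ is NSC, this $k\times k$ matrix is invertible over $\mathbb{F}_q$, so I can solve
\[
\textbf{c}'_l \;=\; \sum_{m=k+1}^{M} \lambda_{l,m}\,\textbf{c}'_m,\qquad l=1,\ldots,k,
\]
for suitable scalars $\lambda_{l,m}\in \mathbb{F}_q$. This is exactly where nesting enters: every $\textbf{c}'_m$ with $m\ge k+1$ already lies in $\mathcal{C}_m\subseteq \mathcal{C}_{k+1}$, so the display above forces each $\textbf{c}'_l$ with $l\le k$ into $\mathcal{C}_{k+1}$ as well. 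Feeding this back into the formula for $\textbf{c}_j$ then shows $\textbf{c}_j\in \mathcal{C}_{k+1}$ for every $j$, as claimed.

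For the last assertion, if at least $M$ blocks vanish I pick $M$ of their column indices $j_1,\ldots,j_M$ and rewrite the relations as the matrix equation $(\textbf{c}'_1,\ldots,\textbf{c}'_M)\cdot A_M(j_1,\ldots,j_M)=\mathbf{0}$; by NSC the square submatrix is non-singular, forcing $\textbf{c}'_m=\mathbf{0}$ for every $m$ and hence $\textbf{c}=\mathbf{0}$. The only delicate point I anticipate is the bookkeeping of indices: the NSC condition is phrased via the \emph{first} $t$ rows of $A$, so the argument needs the ``head'' generators $\textbf{c}'_1,\ldots,\textbf{c}'_k$ to be eliminated in terms of the ``tail'' $\textbf{c}'_{k+1},\ldots,\textbf{c}'_M$, and this orientation is precisely what pairs with the ordering $\mathcal{C}_M\subseteq\cdots\subseteq\mathcal{C}_1$ to place the tail in the smallest admissible code $\mathcal{C}_{k+1}$. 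Apart from this alignment, the whole argument is routine linear algebra.
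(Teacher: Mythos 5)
Your proof is correct: eliminating the head generators via the non-singular $k\times k$ submatrix $A_k(j_1,\ldots,j_t)$ (transposed) to write $\textbf{c}'_1,\ldots,\textbf{c}'_k$ as $\mathbb{F}_q$-combinations of $\textbf{c}'_{k+1},\ldots,\textbf{c}'_M\in\mathcal{C}_{k+1}$, and invoking non-singularity of $A_M(j_1,\ldots,j_M)$ when at least $M$ blocks vanish, is exactly the standard argument, and the index alignment you flag (NSC phrased on the first $t$ rows versus the nesting $\mathcal{C}_M\subseteq\cdots\subseteq\mathcal{C}_1$) is handled correctly. The present paper only quotes this lemma from \cite{LELP23} without reproducing a proof, and your argument coincides with the one given there, so there is nothing further to compare.
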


    \subsection{Symbol-pair codes}

    In this subsection, we will review some basic notions and properties of symbol-pair codes.
    
    Let $\Omega  $ denote an alphabet consists of $q$ elements, and we call the elements in $\Omega$ as symbols.  
    For any vector $\textbf{u}=(u_0,u_1,\cdots,u_{n-1})\in \Omega^n$, the symbol-pair read vector of $\textbf{u}$ is defined as follows.
    $$\varTheta  (\textbf{u})=((u_0,u_1),(u_1,u_2),\cdots ,(u_{n-2},u_{n-1}),(u_{n-1},u_0)).$$
    Clearly, for any vector $\textbf{u}\in \Omega^n$, there exists a unique symbol-pair read vector $\varTheta  (\textbf{u})$ in $(\Omega\times \Omega)^n $.
    In this paper, we assume that $\Omega=\mathbb{F}_{q}$. 

    Let $\textbf{u}=(u_0,u_1,\cdots,u_{n-1})$, $\textbf{v}=(v_0,v_1,\cdots,v_{n-1})$ be any two vectors in $\mathbb{F}_{q}^n$, 
    the symbol-pair distance by using Hamming distance from $\textbf{u}$ to $\textbf{v}$ is  
    $$d_{sp}(\textbf{u},\textbf{v})=d_H(\varTheta (\textbf{u}),\varTheta (\textbf{v}))=| \{0\leq i\leq n-1:(u_i,u_{i+1})\neq (v_i,v_{i+1}) \}| ,$$
    where the subscripts are reduced modulo $n$. The  minimum symbol-pair distance of a symbol-pair code is defined as
    $$d_{sp}(\mathcal{C})=min \{ d_{sp}(\textbf{u},\textbf{v}): \textbf{u},\textbf{v}\in \mathcal{C},\textbf{u}\neq \textbf{v}\}.$$
    An $(n,M,d_{sp}(\mathcal{C}))$ symbol-pair code $\varTheta (\bf{\mathcal{C}}) $ is a subset $\mathcal{C}\subset \mathbb{F}_{q}^n$ of length $n$ with size $M$ and minimum
    symbol-pair distance $d_{sp}(\mathcal{C})$, where $M=|\mathcal{C} | $.
    For any vector $\textbf{u}\in \mathbb{F}_{q}^n$, we define the symbol-pair weight of $\textbf{u}$ as 
    $$w_{sp}(\textbf{u})=w_H(\Theta(\textbf{u}))=| \{0\leq i\leq n-1:(u_i,u_{i+1})\neq (0,0) \}|.$$
    Specially, if $\mathcal{C}$ is a linear code, then we can get
    $$d_{sp}(\mathcal{C})=min \{ w_{sp}(\textbf{u}): \textbf{u}\in \mathcal{C}\}.$$
   
    Assume that  $\textbf{c}=(c_0,c_1,\cdots, c_{n-1})$ is a codeword of length $n$ in $\mathcal{C}$, then the symbol-pair read vector of $\textbf{c}$ is 
    $$\Theta(\textbf{c})=\{(c_i,c_{i+1}): 0\leq i\leq n-1\}.$$ 

    Define two subsets from $\Theta(\textbf{c})$,
    $$\Theta_1(\textbf{c})=\{({c_i},c_{i+1})\in \Theta(\textbf{c}): c_i\neq 0,\ 0\leq i\leq n-1 \},$$
    and 
    $$\Theta_2(\textbf{c})=\{({c_i},c_{i+1})\in \Theta(\textbf{c}): c_i\neq 0,c_{i+1}= 0,\ 0\leq i\leq n-1 \}.$$
    From the definitions of Hamming weight and symbol-pair weight of $\textbf{c}$, one can get 
    $w_H(\textbf{c})=|\Theta_1(\textbf{c})| $
    and
    $$w_{sp}(\textbf{c})=w_H(\textbf{c})+|\Theta_2(\textbf{c})|=|\Theta_1(\textbf{c})|+|\Theta_2(\textbf{c})|.$$ 
    Denote $I=|\Theta_2(\textbf{c})|,$ then $I=w_{sp} (\textbf{c})-w_H(\textbf{c})\leq n-w_H(\textbf{c})$. If $0<d_H(\mathcal{C})<n$, combining with $1\leq I\leq w_H(\textbf{c})$, we can get 
    $$w_H(\textbf{c})+1\leq w_{sp}(\textbf{c})\leq min\{2w_H(\textbf{c}),n\}.$$
    Additionally, there exists a connection  between the minimum Hamming distance and the minimum symbol-pair distance which was proved in \cite{CB11}. When $0<d_H(\mathcal{C})<n$,
    $$d_H(\mathcal{C})+1\leq d_{sp}(\mathcal{C})\leq min\{2d_H(\mathcal{C}),n \}.$$
    Particularly, if $d_H(\mathcal{C})=0$ or $n$, then we can easily get $ d_{sp}(\mathcal{C})=d_H(\mathcal{C})$.

    The concept of code equivalence is very important in coding theory. As we know, the equivalent codes have the same parameters.
    Let $\mathcal{C}_1$ and $\mathcal{C}_2$ be two linear codes over $\mathbb{F}_q$, 
    then they are said to be equivalent if $\mathcal{C}_1$ can be obtained from $\mathcal{C}_2$ by any combination of the following transformations.
    (1): The permutation of the code coordinates.
    (2): Multiplication of elements in a fixed position by a non-zero scalar in $\mathbb{F}_q$.
    (3): A field automorphism $\tau : \mathbb{F}_q\rightarrow \mathbb{F}_q$ to each component of the code.

    If $\mathcal{C}_1$ is  obtained from $\mathcal{C}_2$ only by (1), then $\mathcal{C}_1$ and $\mathcal{C}_2$ are called permutation equivalent.
    Permutation equivalent codes of course have the same minimum Hamming distance, however, they do not retain the symbol-pair distance.
    Therefore, we can find a code that permutates equivalent to a certain code to expand its symbol-pair distance.

\section{New symbol-pair codes from MP codes}\label{sec3}
In this section, assume that $q$ is a prime power, we will construct several new classes of symbol-pair codes from the codes that  permutate equivalent to the MP codes.
To better state our proof, we need the following definition.

\begin{Definition}
    Let $\textbf{c}=(c_0,c_1,\cdots,c_{n-1})$ be a vector of length $n$. Then the support of $\textbf{c}$  is defined by
    $$supp(\textbf{c})=\{0\leq i\leq n-1:c_i\neq 0  \} .$$
    
\end{Definition}

Denoting the number of elements in $supp(\textbf{c})$ as $S$, i.e., $S=|supp(\textbf{c})|$.
Specially, if $\textbf{c}$ is a codeword of the code $\mathcal{C}$, then $S$ is exactly the Hamming weight of $\textbf{c}$.
For example, assume that $\textbf{c}=(1,0,1,0,0,1,1,0)\in \mathcal{C}$, then $supp(\textbf{c})=\{0,2,5,6\},$ and $S=4.$

Let $\textbf{a}=(\alpha_0,\alpha_1,\cdots,\alpha_{n-1})$ 
and $\textbf{v}=(1,1,\cdots,1)$, where $\alpha_0,\alpha_1,\cdots,\alpha_{n-1}$ are $n$ distinct elements of $\mathbb{F}_q$.
Then $GRS_{i}$ is defined as the GRS code with parameters $[n,n-i,i+1]_q$ whose parity-check matrix is 
$$ H_i=\begin{pmatrix}
    1   & 1   & \cdots & 1 \\
    \alpha_0        & \alpha_1       & \cdots & \alpha_{n-1} \\
    \alpha^2_0        & \alpha^2_1       & \cdots & \alpha^2_{n-1} \\
    \vdots          & \vdots         & \ddots & \vdots    \\
    \alpha^{i-1}_0 & \alpha^{i-1}_1  & \cdots & \alpha^{i-1}_{n-1} \\
\end{pmatrix}_{(i\times n)}.$$
It is easy to see that such GRS codes are nested, namely, $GRS_{n-1}\subseteq GRS_{n-2} \subseteq \cdots \subseteq GRS_1$.

In the following, we use the shorthand notation $[a, b]:=\{a,a+1,\cdots,b\}$ for integers $a<b$.

\subsection{MP codes with the square matrix $A$ of order $3$}
In this subsection, we will construct three classes of MDS symbol-pair codes of length $\mathcal{N}=3n$ from MP codes with the square matrix $A$ of order $3$. 
Let $3\vert(q-1)$, then there must exist a primitive $3$-th root of unity $\omega$ in $\mathbb{F}_q$.
Suppose that $A$ is a $3\times 3$ NSC matrix with the following form:   
$$ A=\begin{pmatrix}
    1   & 1   & 1 \\
    1   & \omega   & \omega^2 \\
    1   & \omega^2    & \omega  \\
\end{pmatrix}.$$

Obviously, we can get 
$$ (A^{-1})^T=\frac{1}{3}\begin{pmatrix}
    1   & 1   & 1 \\
    1   & \omega^2   & \omega\\
    1   & \omega   & \omega^2  \\
\end{pmatrix}.$$ 

We give the following permutations for codewords, which are useful for our constructions.

Let $\textbf{c}=(c_0,c_1,\cdots,c_{3n-1})$ be a codeword of $\mathcal{C}$ of length $3n$, whose coordinates is indexed by the set $[0,3n-1]$.
For each $l\in [0,3n-1]$, we write $l=in+j$, where $i=0,1,2$, $j=0,1,2,\cdots, n-1$.
Then each entry of the vector $\textbf{c}$ can be represented as $c_{i,j}$.

Define a permutation $\rho$ as $\rho(in+j)=i+3j$ and a permutation $\phi$ as 

\begin{equation}
    \phi(i+3j)=\left\{ 
    \begin{array}{l} \notag 
     i+3j,\ {\rm if}\ i=0,2,  \\ 
    \\
     i+3(j+1),\ {\rm if}\ i=1.\\
    \end{array}
    \right.
\end{equation}

Namely,

\begin{equation}\label{eq3.1}
    \begin{matrix}
        c_{0,0},c_{0,1},c_{0,2},c_{0,3},c_{0,4},c_{0,5},\cdots,c_{0,n-1},c_{1,0},\cdots,c_{1,n-1},c_{2,0},\cdots,c_{2,n-1}\\
\downarrow\ \rho\\
c_{0,0},c_{1,0},c_{2,0},c_{0,1},c_{1,1},c_{2,1},c_{0,2},c_{1,2},c_{2,2},\cdots,\cdots, c_{0,n-1},c_{1,n-1},c_{2,n-1}\\
\downarrow\ \phi\\
c_{0,0},c_{1,1},c_{2,0},c_{0,1},c_{1,2},c_{2,1},c_{0,2},c_{1,3},c_{2,2},\cdots,\cdots, c_{0,n-1},c_{1,0},c_{2,n-1}.\\
    \end{matrix}
\end{equation}

\subsubsection{Symbol-pair distance $d_{sp}(\mathcal{C})=8$}

Assume that $\mathcal{C}_1$ is the GRS code $GRS_{1}$ with parameters $[n,n-1,2]_q$ whose parity-check matrix is 
$$ H_1=\begin{pmatrix}
    1   & 1   & \cdots & 1 \\
\end{pmatrix},$$
 $\mathcal{C}_2$ is the GRS code $GRS_{2}$ with parameters $[n,n-2,3]_q$ whose parity-check matrix is 
$$ H_2=\begin{pmatrix}
    1   & 1   & \cdots & 1 \\
    \alpha_0    & \alpha_1   & \cdots & \alpha_{n-1} \\
\end{pmatrix},$$
and $\mathcal{C}_3$ is the GRS code $GRS_{3}$ with parameters $[n,n-3,4]_q$ whose  parity-check matrix is 

$$ H_3=\begin{pmatrix}
    1   & 1   & \cdots & 1 \\
    \alpha_0    & \alpha_1   & \cdots & \alpha_{n-1} \\
    \alpha^2_0    & \alpha^2_1   & \cdots & \alpha^2_{n-1} \\
\end{pmatrix}.$$

Define the MP code

\begin{align}\label{eq3.2}
    \mathcal{C}=[\mathcal{C}_1,\mathcal{C}_2,\mathcal{C}_3]\cdot A.
\end{align}

Obviously, according to Lemma \ref{le2.1}, $\mathcal{C}$ is a $[3n,3n-6]_q$ code, and from Corollary \ref{co3.1}, the parity-check matrix of $\mathcal{C}$ is shown below: 

  \[\begin{split} H &=\frac{1}{3}\begin{pmatrix}
    H_1   & H_1   & H_1 \\
    H_2    & \omega^2H_2   & \omega H_2 \\
    H_3   & \omega H_3  & \omega^2H_3 \\   
\end{pmatrix},\\
&=\frac{1}{3}\begin{pmatrix}  
    1             & \cdots        & 1            & 1                & \cdots     & 1                       & 1        & \cdots     & 1    \\
    1             & \cdots       & 1             &  \omega^2        & \cdots     & \omega^2                 & \omega   & \cdots    & \omega \\
    \alpha_0      & \cdots       & \alpha_{n-1}  & \omega^2\alpha_0 & \cdots     & \omega^2\alpha_{n-1}     & \omega\alpha_0 & \cdots    & \omega\alpha_{n-1}    \\                                      
    1             & \cdots       & 1            &    \omega         & \cdots     & \omega                  &    \omega^2        & \cdots    & \omega^2               \\                                             
    \alpha_0      & \cdots       & \alpha_{n-1}    &  \omega\alpha_0   & \cdots   & \omega\alpha_{n-1}        &  \omega^2\alpha_0    & \cdots & \omega^2\alpha_{n-1}     \\
    \alpha^2_0    & \cdots        & \alpha^2_{n-1}  &   \omega\alpha^2_0 & \cdots &  \omega\alpha^2_{n-1}    &   \omega^2\alpha^2_0    & \cdots   &  \omega^2\alpha^2_{n-1} \\
\end{pmatrix}.
\end{split} \]

The following lemma determines the support for codewords in $\mathcal{C}$ whose Hamming weights do not exceed $6$.

\begin{lemma}\label{le3.1}
    Let $\mathcal{C}$ be the MP code as defined in (\ref{eq3.2}), and $\textbf{c}$ be a codeword of $\mathcal{C}$ with coordinates indexed by the set $[0,3n-1]$.
    Then the following results hold.
    
    (i) There are no codewords in  $\mathcal{C}$ with Hamming weight less than $4$.

    (ii) If $w_H(\textbf{c})=4$, then the support  of $\textbf{c}$ must satisfy 
    $\{(i_1,i_2,i_3,i_4):  jn\leq i_1<i_2<i_3<i_4<(j+1)n,\ j\in [0,2] \} $.
   
    (iii) If $w_H(\textbf{c})=5$, then the support of $\textbf{c}$ must satisfy 
    $\{(i_1,i_2,i_3,i_4,i_5):  jn\leq i_1<i_2<i_3<i_4<i_5<(j+1)n,\ j\in [0,2] \} $.
    
    (iv) If $w_H(\textbf{c})=6$, then the support of $\textbf{c}$ must satisfy  $\{(i_1,i_2,i_3,i_4,i_5,i_6): jn\leq  i_1<i_2<i_3<i_4<i_5<i_6 <(j+1)n, \ j\in [0,2] \} $   
  or $\{(i_1,i_2,i_3,i_4,i_5,i_6): j_1n\leq  i_1<i_2<i_3<(j_1+1)n, \ j_2n \leq i_4<i_5<i_6< (j_2+1)n,\ j_1\neq j_2\ and \ j_1,j_2\in [0,2]  \} $  
    or $\{(i_1,i_2,i_3,i_4,i_5,i_6): j_1n\leq  i_1<i_2<(j_1+1)n, \ j_2n \leq i_3<i_4< (j_2+1)n, \ j_3n \leq i_5<i_6< (j_3+1)n,\  j_1\neq j_2\neq j_3\ and\  j_1,j_2,j_3\in [0,2]\}$ 
    with $\alpha_{i_1}+\alpha_{i_2}=\alpha_{i_3}+\alpha_{i_4}=\alpha_{i_5}+\alpha_{i_6}$, and the subscripts are reduced modulo $n$.  
\end{lemma}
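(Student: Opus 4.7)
My plan is to partition any codeword as $\mathbf{c} = (\mathbf{c}_1, \mathbf{c}_2, \mathbf{c}_3)$ with each $\mathbf{c}_j$ of length $n$, and then perform a case analysis on the number $k$ of zero blocks. Because the chain $\mathcal{C}_3 \subseteq \mathcal{C}_2 \subseteq \mathcal{C}_1$ consists of GRS codes with minimum distances $4, 3, 2$ and $A$ is an NSC matrix, Lemma~\ref{le2.3} forces each nonzero block to lie in $\mathcal{C}_{k+1}$ whenever $k \in \{1, 2\}$. For $k = 0$ the first three scalar rows of the parity-check matrix $H$ already force $\sum_l c_{j,l} = 0$ for each block, so each block is automatically in $\mathcal{C}_1$.

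These observations yield the weight lower bounds: $k = 0$ contributes weight $\geq 2 + 2 + 2 = 6$; $k = 1$ contributes weight $\geq 3 + 3 = 6$; $k = 2$ contributes weight $\geq 4$; and $k = 3$ gives $\mathbf{c} = \mathbf{0}$. Part (i) follows at once, and any codeword of weight $4$ or $5$ must have $k = 2$, so its support is confined to the single nonzero block, which establishes (ii) and (iii). At weight $6$ all three values $k = 0, 1, 2$ are feasible and correspond exactly to the three support patterns of (iv): $k = 2$ puts all six nonzero positions inside one block (Case A); $k = 1$ splits the support $3+3$ across two distinct blocks (Case B); $k = 0$ forces a $2 + 2 + 2$ distribution across all three blocks (Case C).

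The crux of the proof is extracting the stated $\alpha$-sum condition in Case C. Since each weight-$2$ block lies in $\mathcal{C}_1$, I would parametrize $\mathbf{c}_j = a_j(\mathbf{e}_{P_j} - \mathbf{e}_{Q_j})$ with $a_j \neq 0$ and substitute into the outstanding rows of $H$ — those that encode $\mathbf{d}_2 \in \mathcal{C}_2$ and $\mathbf{d}_3 \in \mathcal{C}_3$. The two $\alpha$-level rows can be combined (using the primitive cube root $\omega$) to force the three quantities $a_j(\alpha_{P_j} - \alpha_{Q_j})$ to share a common nonzero value $b$; feeding this back into the $\alpha^2$-level row and using $\alpha_{P_j}^2 - \alpha_{Q_j}^2 = (\alpha_{P_j} - \alpha_{Q_j})(\alpha_{P_j} + \alpha_{Q_j})$ collapses that row to a linear relation in the pair-sums $s_j := \alpha_{P_j} + \alpha_{Q_j}$. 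This final step, together with the $\omega$-bookkeeping required to pass from the raw relation to the equality of the three $s_j$, is the main technical obstacle; the remaining parts of the lemma drop out routinely from Lemma~\ref{le2.1} and the weight inequalities above.
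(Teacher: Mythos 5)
Your handling of (i)--(iii) and of the case split in (iv) is correct and is essentially the paper's own route: Lemma \ref{le2.3} (together with the observation that for $k=0$ every block lies in $\mathcal{C}_1$) and the minimum distances $2,3,4$ of the nested GRS codes give the lower bounds $6,6,4$ for $k=0,1,2$ zero blocks, which yields (i), confines weight-$4$ and weight-$5$ codewords to a single block, and at weight $6$ produces exactly the three support shapes $6$, $3+3$ and $2+2+2$.

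The genuine gap is the step you yourself flag as ``the main technical obstacle,'' and it cannot be closed in the way you describe. Carrying out your computation in the $2+2+2$ case: with $\mathbf{c}_j=a_j(\mathbf{e}_{P_j}-\mathbf{e}_{Q_j})$, the two $\alpha$-level parity rows indeed force $a_1(\alpha_{P_1}-\alpha_{Q_1})=a_2(\alpha_{P_2}-\alpha_{Q_2})=a_3(\alpha_{P_3}-\alpha_{Q_3})=b\neq 0$, but the single $\alpha^2$-level row then collapses to the \emph{one} relation $s_1+\omega s_2+\omega^2 s_3=0$, where $s_j=\alpha_{P_j}+\alpha_{Q_j}$ and the blocks are taken in their natural order. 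No amount of $\omega$-bookkeeping upgrades this single linear relation among three field elements to the pairwise equality $s_1=s_2=s_3$: the equality implies the relation (since $1+\omega+\omega^2=0$), but not conversely. Concretely, for $q=n=7$, $\omega=2$ and evaluation points $0,1,\dots,6$, the pairs $\{3,4\}$, $\{0,2\}$, $\{1,5\}$ placed in blocks $1,2,3$ have sums $0,2,6$ and satisfy $0+2\cdot 2+4\cdot 6\equiv 0 \pmod 7$; choosing $a_j=b/(\alpha_{P_j}-\alpha_{Q_j})$ one checks all six parity rows, so this is a weight-$6$ codeword of $\mathcal{C}$ whose pair-sums are not equal. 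Hence the step you defer is not routine bookkeeping but exactly the point at which the asserted equality of the three sums outruns what the parity-check equations deliver; your plan, executed faithfully, proves a condition different from (and strictly weaker than) the one stated in (iv). For comparison, the paper's proof of this subcase simply asserts the ``linearly dependent if and only if $\alpha_{i_1}+\alpha_{i_2}=\alpha_{i_3}+\alpha_{i_4}=\alpha_{i_5}+\alpha_{i_6}$'' claim without computation, so there is no hidden argument there for you to recover; if you complete your calculation you should record the twisted relation $s_1+\omega s_2+\omega^2 s_3=0$ as the actual constraint (and note that the later case analysis in Theorem \ref{the3.1} has to be re-examined against it).
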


\begin{proof}
    The proof is based on the parity-check matrix of the MP code defined in (\ref{eq3.2}) and Lemma \ref{le2.3}.
    Writing the codeword $\textbf{c}$ of $\mathcal{C}$ as $\textbf{c}=(\textbf{c}_1,\textbf{c}_2,\textbf{c}_3)$, 
    where $\textbf{c}_i$ is a vector of length $n$. 

    (i) If $\textbf{c}_1$, $\textbf{c}_2$ and  $\textbf{c}_3$ are all nonzero  vectors, then we can get  $\textbf{c}_i\in \mathcal{C}_1$. 
    As $\mathcal{C}_1$ is an $[n,n-1,2]_q$ code, then $w_H(\textbf{c})\geq 6$.

    If one of $\textbf{c}_1$, $\textbf{c}_2$, $\textbf{c}_3$ is a zero vector, then we can get  $\textbf{c}_i\in \mathcal{C}_2$. 
    As $\mathcal{C}_2$ is an $[n,n-2,3]_q$ code, then $w_H(\textbf{c})\geq 6$. 

    If one of $\textbf{c}_1$, $\textbf{c}_2$, $\textbf{c}_3$ is a nonzero vector, then we can get  $\textbf{c}_i\in \mathcal{C}_3$. 
    As $\mathcal{C}_3$ is an $[n,n-3,4]_q$ code, then $w_H(\textbf{c})\geq 4$. 
    
    Hence, one can get the Hamming weight of \textbf{c} is always greater than or equal to $4$. (i) holds.

    (ii) For $w_H(\textbf{c})=4$, only one of $\textbf{c}_1$, $\textbf{c}_2$, $\textbf{c}_3$ is a nonzero vector. 
    Since for any $\{(i_1,i_2,i_3,i_4):  jn\leq i_1<i_2<i_3<i_4<(j+1)n,\ j\in [0,2] \}$, from the form of the parity-check matrix,
    the corresponding column vectors of the parity-check matrix are linearly dependent. Hence, the corresponding codewords exist. (ii) holds.

    (iii) For $w_H(\textbf{c})=5$, only one of $\textbf{c}_1$, $\textbf{c}_2$, $\textbf{c}_3$ is a nonzero vector. 
    Since for any $\{(i_1,i_2,i_3,i_4,i_5):  jn\leq i_1<i_2<i_3<i_4<i_5<(j+1)n,\ j\in [0,2] \} $,
    the corresponding column vectors of the parity-check matrix are linearly dependent, the corresponding codewords also exist. (iii) holds.

    (iv) $w_H(\textbf{c})=6$:
    
    If one of $\textbf{c}_1$, $\textbf{c}_2$, $\textbf{c}_3$ is a nonzero vector, 
    then the nonzero vector $\textbf{c}_i$ must satisfy $|supp(\textbf{c}_i)|=6$.
    Since for $\{(i_1,i_2,i_3,i_4,i_5,i_6): jn\leq  i_1<i_2<i_3<i_4<i_5<i_6 <(j+1)n, \ j\in [0,2] \}$, 
    the corresponding column vectors of the parity-check matrix are linearly dependent, the corresponding codewords exist.

    If one of $\textbf{c}_1$, $\textbf{c}_2$, $\textbf{c}_3$ is a zero vector, 
    then the nonzero vector $\textbf{c}_i$ must satisfy $|supp(\textbf{c}_i)|=3 $.
    Since for 
    $\{(i_1,i_2,i_3,i_4,i_5,i_6): j_1n\leq  i_1<i_2<i_3<(j_1+1)n, \ j_2n \leq i_4<i_5<i_6< (j_2+1)n,\  j_1\neq j_2\ and \ j_1,j_2\in [0,2] \} $,  
    the corresponding column vectors of the parity-check matrix are linearly dependent, the corresponding codewords also exist.
    
    If $\textbf{c}_1$, $\textbf{c}_2$ and  $\textbf{c}_3$ are all nonzero vectors,
    then the nonzero vector $\textbf{c}_i$ must satisfy $|supp(\textbf{c}_i)|=2 $.
    Since for  
    $\{(i_1,i_2,i_3,i_4,i_5,i_6): j_1n\leq  i_1<i_2<(j_1+1)n, \ j_2n \leq i_3<i_4< (j_2+1)n, \ j_3n \leq i_5<i_6< (j_3+1)n,\  j_1\neq j_2\neq j_3\ and\  j_1,j_2,j_3\in [0,2]\}$,
    the corresponding column vectors of the parity-check matrix are linearly dependent if and only if  $\alpha_{i_1}+\alpha_{i_2}=\alpha_{i_3}+\alpha_{i_4}=\alpha_{i_5}+\alpha_{i_6}$, where the subscripts are reduced modulo $n$. 
    Hence, (iv) holds.

\end{proof}

With the help of the above lemma, the following theorem holds.

\begin{theorem}\label{the3.1}
    Suppose that $q$ is a power of a prime number $p$ with $q\equiv 1\ ({\rm mod}\ 3)$, then for each $n\in [4,q]$, there exists an MDS $(3n,8)_q$ symbol-pair code.
\end{theorem}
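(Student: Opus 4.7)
The plan is to form the MP code $\mathcal{C}=[\mathcal{C}_1,\mathcal{C}_2,\mathcal{C}_3]\cdot A$ of (\ref{eq3.2}) and then pass to its permutation-equivalent code $\mathcal{C}'$ obtained by applying $\phi\circ\rho$ from (\ref{eq3.1}) to every codeword. By Lemma \ref{le2.1}, $\mathcal{C}$ has length $3n$ and dimension $(n-1)+(n-2)+(n-3)=3n-6$, so the Singleton bound of Lemma \ref{le1.1} gives $d_{sp}(\mathcal{C}')\leq 8$, and equality would make $\mathcal{C}'$ an MDS $(3n,8)_q$ symbol-pair code. Hence the task is to prove $w_{sp}(\textbf{c}')\geq 8$ for every nonzero $\textbf{c}'\in \mathcal{C}'$.

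Since permutation preserves Hamming weight, I will apply Lemma \ref{le3.1} to the preimage $\textbf{c}$ of $\textbf{c}'$ and analyze the position of each nonzero. The key observation is that in $\textbf{c}'$ the coordinate at position $3j$ is $c_{0,j}$, at $3j+1$ is $c_{1,(j+1)\bmod n}$, and at $3j+2$ is $c_{2,j}$; thus two nonzero entries from the same original block lie at positions differing by a nonzero multiple of $3$ and are never cyclically adjacent. When $w_H(\textbf{c})\geq 7$, the bound $w_{sp}\geq w_H+1$ already gives $w_{sp}(\textbf{c}')\geq 8$. When $w_H(\textbf{c})\in\{4,5\}$, Lemma \ref{le3.1}(ii)--(iii) forces the support into a single original block, so the nonzero positions in $\textbf{c}'$ are pairwise non-adjacent and $w_{sp}(\textbf{c}')=2w_H(\textbf{c})\geq 8$. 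For the $3+3$ configuration in Lemma \ref{le3.1}(iv), only one cross-block adjacency is possible per nonzero, so the adjacency graph is a bipartite matching of size at most $3$; this leaves at least three runs and $w_{sp}(\textbf{c}')\geq 9$.

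The delicate case is the $2+2+2$ configuration of Lemma \ref{le3.1}(iv), where each of the six nonzeros may contribute up to two cross-block adjacencies. On the cycle of length $3n\geq 12$, the induced subgraph on a $6$-element subset contains no cycle, hence is a disjoint union of paths with exactly $r=6-a$ components where $a$ is the adjacency count, giving $w_{sp}(\textbf{c}')=12-a$. It therefore suffices to rule out $a=5$, which corresponds to the six nonzero positions forming a single run of six consecutive positions. I would split on the starting position $p\bmod 3$: in each of the three residue classes the block-indices of the six consecutive nonzeros take the form $\{j,j+1\}$ in two of the blocks and $\{j+1,j+2\}$ in the third (possibly with a cyclic shift). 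Feeding these into the constraint $\alpha_{i_1}+\alpha_{i_2}=\alpha_{i_3}+\alpha_{i_4}=\alpha_{i_5}+\alpha_{i_6}$ forces an equation of the shape $\alpha_a=\alpha_b$ with $a\not\equiv b\pmod n$ (using $n\geq 4$), contradicting the distinctness of the evaluation points in $\mathbb{F}_q$.

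The main obstacle is precisely this $2+2+2$ analysis: unlike the other subcases, which resolve by pure positional/combinatorial reasoning, here one has to genuinely invoke the algebraic sum constraint and enumerate the residue classes carefully (including wraparound) so that the contradiction $\alpha_a=\alpha_b$ survives. Once $a\leq 4$ is established, combining all subcases yields $w_{sp}(\textbf{c}')\geq 8$ for every nonzero $\textbf{c}'$, with equality realized, for instance, by any weight-$4$ codeword of $\mathcal{C}$ supported in a single block. Since $|\mathcal{C}'|=|\mathcal{C}|=q^{3n-6}$ saturates the symbol-pair Singleton bound, the code $\mathcal{C}'$ is the claimed MDS $(3n,8)_q$ symbol-pair code.
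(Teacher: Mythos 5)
Your proposal is correct and follows essentially the same route as the paper: the same matrix-product code (\ref{eq3.2}) and permutation $\phi\circ\rho$, Lemma \ref{le3.1} for the support patterns, and the reduction of the only delicate case (the $2+2+2$ weight-$6$ configuration) to excluding a single run of six consecutive nonzero positions, which the pair-sum condition kills via distinctness of the evaluation points --- precisely the paper's Cases III-1 to III-3. One minor imprecision: for a run starting at a position $\equiv 2 \pmod 3$ the three block-index pairs are $\{m,m+1\},\{m+1,m+2\},\{m+2,m+3\}$ (a staircase, not two equal pairs and one shifted copy), but the forced equality $\alpha_m=\alpha_{m+2}$ (or $\alpha_{m+1}=\alpha_{m+3}$), indices mod $n$ with $n\geq 4$, still yields the contradiction, so nothing breaks.
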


\begin{proof}
    Since in terms of cosets, we can write $\mathbb{F}_q \triangleq \bigcup_{i=0}^{\frac{q}{p}-1}(\chi_i+\mathbb{F}_p)$, where $\chi_0=0$. Let 
    $$\mathcal{F}=(0,1,2,\cdots,p-1,\chi_1,\chi_1+1,\cdots,\chi_1+p-1,\cdots \cdots,\chi_{\frac{q}{p}-1},\cdots,\chi_{\frac{q}{p}-1}+p-1 ),$$
    which consists of all the distinct elements in $\mathbb{F}_q$, and let $\textbf{a}=(\alpha_0,\alpha_1,\cdots,\alpha_{n-1})$ be a vector formed by the first $n$ elements of $\mathcal{F}$.
    Suppose that $\mathcal{C}$ is an MP code as defined in (\ref{eq3.2}) and $\mathcal{D}$ is a code that permutates equivalent to the code $\mathcal{C}$ under 
    the specific permutations $\rho$ and $\phi$ (see (\ref{eq3.1})). Namely,
    \begin{equation*}
        \begin{aligned}
            \mathcal{D}: =&\phi(\rho(\mathcal{C}))  \\
                       : =&\{\phi(\rho(\textbf{c})), \  \forall \textbf{c}\in \mathcal{C}\}. \\
        \end{aligned}
    \end{equation*}
    The codes $\mathcal{C}$ and $\mathcal{D}$ have the same parameters $[3n,3n-6,4]_q$ because they are permutation equivalent.
    We will illustrate that $d_{sp}(\mathcal{D})= 8$ with the help of the support distribution of the codewords $\textbf{c}$ of $\mathcal{C}$.
   
    From Lemma \ref{le3.1}, there are no codewords with Hamming weight less than $4$. 
    Combining with $w_{sp}(\textbf{c})\geq w_H(\textbf{c})+1$ for $w_H(\textbf{c})<3n$, and $w_{sp}(\textbf{c})=w_H(\textbf{c})$ for $w_H(\textbf{c})=3n$,
    we only need to discuss the cases $4 \leq w_H(\textbf{c})\leq  6$.

    Case I: $w_H(\textbf{c})=4$

    According to Lemma \ref{le3.1}, the support of $\textbf{c}$ of weight $4$ must satisfy 
    $\{(i_1,i_2,i_3,i_4):  jn\leq i_1<i_2<i_3<i_4<(j+1)n,\ j\in [0,2] \} $, 
    then after permutations $\rho$ and $\phi$, we can easily get $I=|\Theta_2(\phi(\rho(\textbf{c})))|= 4$ and 
    $w_{sp}(\phi(\rho(\textbf{c})))=w_H(\textbf{c})+I=8$.

    Case II: $w_H(\textbf{c})=5$
    
    The support of $\textbf{c}$  of weight $5$ must satisfy 
    $\{(i_1,i_2,i_3,i_4,i_5):  jn\leq i_1<i_2<i_3<i_4<i_5<(j+1)n,\ j\in [0,2]   \}$, then $I=5$ and $w_{sp}(\phi(\rho(\textbf{c})))=10$.
   
    Case III: $w_H(\textbf{c})=6$

    According to Lemma \ref{le3.1},
    if the support of $\textbf{c}$ satisfy 
    $\{(i_1,i_2,i_3,i_4,i_5,i_6): jn\leq  i_1<i_2<i_3<i_4<i_5<i_6 <(j+1)n, \ j\in [0,2] \} $ 
    or $\{(i_1,i_2,i_3,i_4,i_5,i_6): j_1n\leq  i_1<i_2<i_3<(j_1+1)n, \ j_2n \leq i_4<i_5<i_6< (j_2+1)n,\ j_1\neq j_2\ and \ j_1,j_2\in [0,2]  \}$,  
    after permutations $\rho$ and $\phi$, we have $I\geq 3$ and $w_{sp}(\phi(\rho(\textbf{c})))\geq 9$.
    
    If the support of $\textbf{c}$ satisfy $\{(i_1,i_2,i_3,i_4,i_5,i_6): j_1n\leq  i_1<i_2<(j_1+1)n, \ j_2n \leq i_3<i_4< (j_2+1)n,\ j_3n \leq i_5<i_6< (j_3+1)n,\  j_1\neq j_2\neq j_3\ and\  j_1,j_2,j_3\in [0,2]\}$,    
    with $\alpha_{i_1}+\alpha_{i_2}=\alpha_{i_3}+\alpha_{i_4}=\alpha_{i_5}+\alpha_{i_6}$, and the subscripts are reduced modulo $n$, 
    then we can get $w_{sp}(\phi(\rho(\textbf{c})))\geq 8$ except there exists a codeword satisfying $(w_H(\textbf{c}),w_{sp}(\phi(\rho(\textbf{c}))))=(6,7)$, 
    which means that there must exist six consecutive nonzero entries after permutations $\rho$ and $\phi$ with the following three cases:
   
    Case III-1: $c_{0,m},c_{1,m+1},c_{2,m},c_{0,m+1},c_{1,m+2},c_{2,m+1}$, where $m\in [0,n-2]$ and the subscripts of $c_{i,j}$ are reduced modulo $n$.

    Since $\textbf{a}$ is a vector which consists of the first $n$ elements of  $\mathcal{F}$ with different elements,
    then we can get $\alpha_{m}\neq \alpha_{m+2}$, which implies that $\alpha_{m}+\alpha_{m+1}\neq \alpha_{m+1}+\alpha_{m+2}$.
    This is a contradiction due to $\alpha_{i_1}+\alpha_{i_2}=\alpha_{i_3}+\alpha_{i_4}=\alpha_{i_5}+\alpha_{i_6}$.

    Case III-2: $c_{1,m+1},c_{2,m},c_{0,m+1},c_{1,m+2},c_{2,m+1},c_{0,m+2}$, where $m\in [0,n-3]$. 

    We can get a similar contradiction because of $\alpha_{m}\neq  \alpha_{m+2}$.

    Case III-3: $c_{2,m},c_{0,m+1},c_{1,m+2},c_{2,m+1},c_{0,m+2},c_{1,m+3}$, where $m\in [0,n-3]$. 

    We can get a similar contradiction because of $\alpha_{m+1} \neq \alpha_{m+2}\neq \alpha_{m+3}$.

    By classification, we find that the existence of $w_{sp}(\phi(\rho(\textbf{c})))=7$ contradicts the fact  $\alpha_{i_1}+\alpha_{i_2}=\alpha_{i_3}+\alpha_{i_4}=\alpha_{i_5}+\alpha_{i_6}$.

    Therefore, by discussing the codewords satisfying $4 \leq w_H(\textbf{c})\leq  6$, respectively,
    we can get $w_{sp}(\mathcal{D})=w_{sp}(\mathcal{\phi(\rho(\textbf{c}))})\geq 8$. 
    According to Lemma \ref{le1.1}, $d_{sp}(\mathcal{D})\leq 3n-(3n-6)+2=8$. 
    So $d_{sp}(\mathcal{D})=8$ and $\mathcal{D}$ is an MDS $(3n,8)_q$ symbol-pair code.
\end{proof}

\begin{remark}
    In \cite{LELP23}, the authors constructed the following two classes of MDS symbol-pair codes from the MP codes:

(1) Suppose that $q$ is a power of an odd prime $p$ with $q\equiv 1\ ({\rm mod}\ 3)$, and  $n=mp$, then for each $m\in [1,\frac{q}{p}]$, 
there exists an MDS symbol-pair code with parameters $(3n,7)_q$. 

(2) Suppose that $q$ is a power of a prime $p$ with $q\equiv 1\ ({\rm mod}\ 3)$, and $n=mp$, then for each $m\in [1,\frac{q}{p}]$, 
there exists an MDS symbol-pair code with parameters $(3n,10)_q$. 

Notably, it is discontinuous for $n$ taking values in the interval $[1,q]$(Surely, the length $\mathcal{N}=3n$ must be guaranteed to be greater than or equal to the symbol-pair distance), i.e., $n$ is intermittent in the interval $[1,q]$.
Does the same MDS symbol-pair codes still exist when $n$ takes consecutive values in the interval  $[1,q]$?
We will discuss it in the following two subsections which improves their conclusions.

\end{remark}

\subsubsection{Symbol-pair distance $d_{sp}(\mathcal{C})=7$ }

Assume that $\mathcal{C}_1=\mathcal{C}_2$ is the GRS code $GRS_{1}$ with parameters $[n,n-1,2]_q$, and
$\mathcal{C}_3$ is the GRS code $GRS_{3}$ with parameters $[n,n-3,4]_q$.

Define the MP code
\begin{align}\label{eq3.3}
    \mathcal{C}=[\mathcal{C}_1,\mathcal{C}_2,\mathcal{C}_3]\cdot A.
\end{align}
Obviously, $\mathcal{C}$ is a $[3n,3n-5]_q$ code. 
The support for the codewords of the MP code $\mathcal{C}$ defined in (\ref{eq3.3}) has the following lemma.

\begin{lemma}\cite{LELP23}\label{le3.2}
    Let $\mathcal{C}$ be the MP code as defined in (\ref{eq3.3}), and $\textbf{c}$ be a codeword of $\mathcal{C}$ with coordinates indexed by the set $[0,3n-1]$.
    Then the following results hold.
    
    (i) There are no codewords in code $\mathcal{C}$ with Hamming weight less than $4$.

    (ii) If $w_H(\textbf{c})=4$, then the support  of $\textbf{c}$ must satisfy 
    $\{(i_1,i_2,i_3,i_4):  jn\leq i_1<i_2<i_3<i_4<(j+1)n,\ j\in [0,2] \} $ 
    or $\{(i_1,i_2,i_3,i_4): j_1n\leq  i_1<i_2<(j_1+1)n, \ j_2n \leq i_3<i_4< (j_2+1)n, \ j_1\neq j_2\ and \ j_1,j_2\in [0,2] \}$ 
    with $\alpha_{i_1}+\alpha_{i_2}=\alpha_{i_3}+\alpha_{i_4}$, where the subscripts are reduced modulo $n$.

    (iii) If $w_H(\textbf{c})=5$, then the support of $\textbf{c}$ must satisfy 
    $\{(i_1,i_2,i_3,i_4,i_5):  jn\leq i_1<i_2<i_3<i_4<i_5<(j+1)n,\ j\in [0,2] \} $ or 
    $\{(i_1,i_2,i_3,i_4,i_5): j_1n\leq  i_1<i_2<i_3<(j_1+1)n, \ j_2n \leq i_4<i_5<(j_2+1)n,\ j_1\neq j_2\ and \ j_1,j_2\in [0,2]  \}$.

\end{lemma}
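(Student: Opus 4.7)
The plan is to mirror the proof of Lemma \ref{le3.1}: write any codeword as $\textbf{c}=(\textbf{c}_1,\textbf{c}_2,\textbf{c}_3)$ with each $\textbf{c}_i\in\mathbb{F}_q^n$, apply Lemma \ref{le2.3} to classify $\textbf{c}$ by the number of zero $\textbf{c}_i$, and then read the finer support constraints off the parity-check matrix provided by Corollary \ref{co3.1}. Since $H_1=H_2=(1\ \cdots\ 1)$ each contribute a single row and $H_3$ contributes three Vandermonde-type rows, the matrix $H$ here is a $5\times 3n$ matrix with the familiar $\omega$-twisted block structure.

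For part (i), Lemma \ref{le2.3} gives the trichotomy: if all three $\textbf{c}_i$ are nonzero then each lies in $\mathcal{C}_1=GRS_1$, forcing $w_H(\textbf{c})\geq 6$; if exactly one is zero then each nonzero one lies in $\mathcal{C}_2=GRS_1$, forcing $w_H(\textbf{c})\geq 4$; if exactly two are zero then the remaining block lies in $\mathcal{C}_3=GRS_3$, again forcing $w_H(\textbf{c})\geq 4$. Combining, $w_H(\textbf{c})\geq 4$.

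For part (ii), $w_H(\textbf{c})=4$ immediately rules out the ``all nonzero'' case. Two zero blocks force the weight-$4$ support to lie entirely in one block, realizable because $\mathcal{C}_3$ is MDS with $d_H=4$. Exactly one zero block forces a $2{+}2$ split across two distinct blocks $j_1,j_2$, with each nonzero $\textbf{c}_i\in GRS_1$ having its two supported entries opposite. I would then substitute this into the two Vandermonde rows of $H\textbf{c}^T=0$: after the row-$1$--$3$ equations pin down the sums of the coefficients (forcing them to vanish in each block pair via the $\omega$-Vandermonde), the remaining two equations contract to $c_{1,i_1}(\alpha_{i_1}-\alpha_{i_2})+\gamma\, c_{2,i_3}(\alpha_{i_3}-\alpha_{i_4})=0$ and $c_{1,i_1}(\alpha_{i_1}^2-\alpha_{i_2}^2)+\gamma\, c_{2,i_3}(\alpha_{i_3}^2-\alpha_{i_4}^2)=0$ for some nonzero $\gamma\in\{1,\omega,\omega^2\}$ depending on $\{j_1,j_2\}$. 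Dividing and using $\alpha^2-\beta^2=(\alpha-\beta)(\alpha+\beta)$ yields $\alpha_{i_1}+\alpha_{i_2}=\alpha_{i_3}+\alpha_{i_4}$ uniformly across all three choices of $\{j_1,j_2\}$.

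For part (iii), the analogous case split gives: all three nonzero is impossible, two zero blocks yields the $5$-in-one-block support, and exactly one zero block gives a $3{+}2$ split. No extra sum condition arises here because in the $3{+}2$ case, after reducing by Lemma \ref{le2.3}, the two Vandermonde rows of $H$ impose only two constraints on three or more free parameters, so the support structure alone is enough. The main obstacle will be the linear-algebra bookkeeping in the $2{+}2$ case of (ii): one must carefully track the $\omega$-twists for each ordered pair $(j_1,j_2)\in\{0,1,2\}^2$ and verify that the $\omega$-factors cancel symmetrically so that the same sum equation $\alpha_{i_1}+\alpha_{i_2}=\alpha_{i_3}+\alpha_{i_4}$ falls out in every subcase.
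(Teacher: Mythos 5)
Your proposal is correct: the case analysis via Lemma \ref{le2.3} (counting zero blocks), combined with the three all-ones rows forcing the block sums to vanish and the two Vandermonde rows of $H_3$ collapsing to $\alpha_{i_1}+\alpha_{i_2}=\alpha_{i_3}+\alpha_{i_4}$ in the $2{+}2$ case, is exactly the technique the paper uses for the analogous Lemma \ref{le3.1}, and the $\omega$-twists indeed cancel uniformly for every pair of blocks. Note that the paper itself states Lemma \ref{le3.2} as a citation from \cite{LELP23} without reproving it, so your argument is essentially the same approach as the paper's own proof of its companion lemma.
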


With the help of the above lemma, the following theorem holds

\begin{theorem}\label{the3.2}
    Suppose that $q$ is a power of an odd prime number $p$ with $q\equiv 1\ ({\rm mod}\ 3)$, then for each $n\in [4,q]$, there exists an MDS $(3n,7)_q$ symbol-pair code.
\end{theorem}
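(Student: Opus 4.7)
The plan is to adapt the proof of Theorem~\ref{the3.1} to this setting. I set $\mathcal{D}:=\phi(\rho(\mathcal{C}))$, where $\mathcal{C}$ is the MP code from~(\ref{eq3.3}) and $\rho,\phi$ are the permutations of~(\ref{eq3.1}), and choose $\textbf{a}$ to be the first $n$ elements of $\mathcal{F}$, exactly as in Theorem~\ref{the3.1}. Since permutation equivalence preserves dimension and Hamming distance, $\mathcal{D}$ is a $[3n,3n-5,4]_q$ linear code, and the Singleton bound from Lemma~\ref{le1.1} gives $d_{sp}(\mathcal{D})\leq 3n-(3n-5)+2=7$. The task therefore reduces to showing $w_{sp}(\phi(\rho(\textbf{c})))\geq 7$ for every nonzero codeword $\textbf{c}\in\mathcal{C}$.

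By Lemma~\ref{le3.2} every nonzero $\textbf{c}$ satisfies $w_H(\textbf{c})\geq 4$, and the inequality $w_{sp}(\textbf{c})\geq w_H(\textbf{c})+1$ (valid for $0<w_H<3n$) disposes of the case $w_H(\textbf{c})\geq 6$ at once. When $w_H(\textbf{c})=4$ and all four nonzero coordinates lie in a single block, their images under $\phi\circ\rho$ share a residue modulo~$3$ and are spaced at least $3$ positions apart, yielding four singleton runs and $w_{sp}=8$. When $w_H(\textbf{c})=5$, each subcase of Lemma~\ref{le3.2}(iii) leaves at least one of the three blocks empty, so the five nonzero positions after permutation occupy at most two of the three residue classes modulo~$3$; since any $5$ consecutive positions cover all three residues, no single run of length $5$ can form, forcing $|\Theta_{2}|\geq 2$ and $w_{sp}\geq 7$.

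The delicate case is $w_H(\textbf{c})=4$ with the $2{+}2$ split from Lemma~\ref{le3.2}(ii) and its sum condition $\alpha_{i_{1}}+\alpha_{i_{2}}=\alpha_{i_{3}}+\alpha_{i_{4}}$. The same mod-$3$ observation forbids runs of length $\geq 3$, so $w_{sp}\leq 6$ is possible only in a $(2,2)$-run pattern. Using the position formulas---$c_{0,j}\mapsto 3j$, $c_{1,j}\mapsto 3j-2$, and $c_{2,j}\mapsto 3j+2$ modulo $3n$---I enumerate the three splits $(j_{1},j_{2})\in\{(0,1),(0,2),(1,2)\}$ and use the adjacency constraints among these three arithmetic progressions to show that a $(2,2)$ pattern forces the block-$j_{2}$ relative indices to be those of block~$j_{1}$ shifted by $\pm 1$ modulo~$n$. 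Combined with the sum condition, this reduces the obstacle to proving
\begin{equation*}
\alpha_{a}+\alpha_{b}\neq\alpha_{a+1}+\alpha_{b+1}\qquad\text{for all }0\leq a<b\leq n-2\text{ with }a+1<b.
\end{equation*}

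The hard part will be verifying this non-equality for the specific $\textbf{a}$. Writing $\delta_{k}:=\alpha_{k+1}-\alpha_{k}$, the condition becomes $\delta_{a}+\delta_{b}\neq 0$. From the structure of $\mathcal{F}$, interior indices give $\delta_{k}=1\in\mathbb{F}_{p}$, while boundary indices $k=ip-1$ give $\delta_{k}=\chi_{i}-\chi_{i-1}-(p-1)\in\mathbb{F}_{q}\setminus\mathbb{F}_{p}$, because $\chi_{i}$ and $\chi_{i-1}$ represent distinct cosets of $\mathbb{F}_{p}$. Two interior summands yield $2\neq 0$ since $p$ is odd, and an interior-plus-boundary mixture yields a sum in $\mathbb{F}_{q}\setminus\mathbb{F}_{p}$, hence nonzero. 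The remaining both-boundary sub-case collapses to $(\chi_{i}-\chi_{i-1})+(\chi_{j}-\chi_{j-1})=-2$; I plan to close this out via the telescoping identity $(\chi_{i}-\chi_{i-1})+(\chi_{i+1}-\chi_{i})=\chi_{i+1}-\chi_{i-1}$ for adjacent boundary pairs, which would force $\chi_{i+1}$ and $\chi_{i-1}$ into the same coset of $\mathbb{F}_{p}$ against their distinctness, and by an analogous compact argument on the coset representatives for the non-adjacent pairs. Completing this chain of case analyses yields $d_{sp}(\mathcal{D})=7$, so $\mathcal{D}$ is an MDS $(3n,7)_{q}$ symbol-pair code.
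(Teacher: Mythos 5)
Your route is the paper's route: the same MP code (\ref{eq3.3}), the same permutations $\rho,\phi$, the same Singleton reduction, Lemma \ref{le3.2}, and the same final appeal to the coset ordering $\mathcal{F}$ and the oddness of $p$; your mod-$3$ run analysis (no run of length $\geq 3$, hence $|\Theta_2|\geq 2$ for weight $5$ and a forced $(2,2)$ pattern for weight $4$) is a correct repackaging of the paper's Cases I--II. The genuine gap is in your reduction of the $(2,2)$ case to $\alpha_a+\alpha_b\neq\alpha_{a+1}+\alpha_{b+1}$ for $0\leq a<b\leq n-2$. Lemma \ref{le3.2} reduces subscripts modulo $n$, and after $\phi\circ\rho$ the entry $c_{1,0}$ sits immediately next to $c_{0,n-1}$ (similarly $c_{0,0}$ next to $c_{2,n-1}$), so the wrap-around pairs with $b=n-1$ and $b+1\equiv 0$ are also legitimate $(2,2)$ patterns and must be excluded; your stated range silently drops them, and they are not harmless. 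Concretely, take $n=p+2\leq q$ (possible whenever $q>p$) and the support $c_{0,p-1},c_{1,p},c_{0,n-1},c_{1,0}$: the required identity $\alpha_{p-1}+\alpha_{n-1}=\alpha_0+\alpha_p$ reads $(p-1)+(\chi_1+1)=\chi_1$, which holds in characteristic $p$, and a direct computation with the parity-check matrix shows the corresponding weight-$4$ codeword of $\mathcal{C}$ exists and has symbol-pair weight $6$ after the permutation. So this boundary case is exactly where the argument breaks; it cannot be completed ``as in the interior case'' and needs either a separate treatment or a different choice of $\textbf{a}$. (The paper's own proof is shaky at the same spot: its subcase analysis assumes $\alpha_{m+1}=\alpha_m+1$ or $\alpha_{m+1}=\chi_{i+1}$ even when $m=n-1$, where in fact $\alpha_{m+1}=\alpha_0=0$.)

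A second, smaller gap: in the both-boundary subcase your telescoping trick only works for adjacent boundary pairs. For non-adjacent pairs, $(\chi_i-\chi_{i-1})+(\chi_j-\chi_{j-1})=-2$ is not contradicted by coset-distinctness alone; with an unlucky choice of representatives it can actually hold (e.g.\ in $\mathbb{F}_{p^3}$ one can arrange $\chi$'s with $(\chi_2-\chi_1)+(\chi_4-\chi_3)=-2$), so there is no ``analogous compact argument.'' You need to normalize the representatives, for instance taking all $\chi_i$ in an $\mathbb{F}_p$-linear complement of $\mathbb{F}_p$ in $\mathbb{F}_q$, so that the sum of the two differences lies in the complement and can never equal $-2\neq 0$; the paper's corresponding assertion (``impossible because $p$ is odd'') tacitly requires the same kind of choice. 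Finally, the excluded case $b=a+1$ should at least be mentioned, though it is trivial (it would force $\alpha_a=\alpha_{a+2}$).
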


\begin{proof}
    As in terms of cosets, we can write $\mathbb{F}_q \triangleq \bigcup_{i=0}^{\frac{q}{p}-1}(\chi_i+\mathbb{F}_p)$, where $\chi_0=0$. Let 
    $$\mathcal{F}=(0,1,2,\cdots,p-1,\chi_1,\chi_1+1,\cdots,\chi_1+p-1,\cdots \cdots,\chi_{\frac{q}{p}-1},\cdots,\chi_{\frac{q}{p}-1}+p-1 ),$$
    which consists of all the distinct elements in $\mathbb{F}_q$, and let $\textbf{a}=(\alpha_0,\alpha_1,\cdots,\alpha_{n-1})$ be a vector formed by the first $n$ elements of $\mathcal{F}$.
    Suppose that $\mathcal{C}$ is an MP code as defined (\ref{eq3.3}) and $\mathcal{D}$ is a code that permutates equivalent to the code $\mathcal{C}$ under 
    the specific permutations  $\rho$ and $\phi$ (see (\ref{eq3.1})).
    Namely,
    \begin{equation*}
        \begin{aligned}
            \mathcal{D}: =&\phi(\rho(\mathcal{C}))  \\
                       : =&\{\phi(\rho(\textbf{c})), \  \forall \textbf{c}\in \mathcal{C}\}. \\
        \end{aligned}
    \end{equation*}

    The codes $\mathcal{C}$ and $\mathcal{D}$ have the same parameters $[3n,3n-5,4]_q$ because they are permutation equivalent.
    We will illustrate that $d_{sp}(\mathcal{D})=7$ with the help of the support distribution of the codewords $\textbf{c}$ of $\mathcal{C}$.
   
    From Lemma \ref{le3.2}, there are no codewords with Hamming weight less than $4$. 
    Combining with $w_{sp}(\textbf{c})\geq w_H(\textbf{c})+1$ for $w_H(\textbf{c})<3n$, and $w_{sp}(\textbf{c})=w_H(\textbf{c})$ for $w_H(\textbf{c})=3n$,
    we only need to discuss the cases $4 \leq w_H(\textbf{c})\leq  5$.

    Case I: $w_H(\textbf{c})=4$

    If the support of $\textbf{c}$ satisfy 
    $\{(i_1,i_2,i_3,i_4):  jn\leq i_1<i_2<i_3<i_4<(j+1)n,\ j\in [0,2] \} $, then $w_{sp}(\phi(\rho(\textbf{c})))=8$.
    
    If the support of $\textbf{c}$ satisfy 
    $\{(i_1,i_2,i_3,i_4): j_1n\leq  i_1<i_2<(j_1+1)n, \ j_2n \leq i_3<i_4< (j_2+1)n, \ j_1\neq j_2\ and \ j_1,j_2\in [0,2]  \}$ 
    with $\alpha_{i_1}+\alpha_{i_2}=\alpha_{i_3}+\alpha_{i_4}$, where the subscripts are reduced modulo $n$,  
    then after permutations, $w_{sp}(\phi(\rho(\textbf{c})))\geq 7$ except the codewords satisfy the following cases:
   
    Case I-1: $c_{0,m_1},c_{1,m_1+1},c_{0,m_2},c_{1,m_2+1}$, where $m_1\neq m_2$ and $m_1,m_2\in [0,n-1]$. 
    
    We will find contradictions with $\alpha_{i_1}+\alpha_{i_2}=\alpha_{i_3}+\alpha_{i_4}$ (Here it is specified as $\alpha_{m_1+1}+\alpha_{m_2+1}=\alpha_{m_1}+\alpha_{m_2}$).
    It will be stated separately in terms of whether $\alpha_{m_1},\alpha_{m_2}$ belong to a subset $\mathbf{U}$ of $\mathcal{F}$ with the form
    $$\mathbf{U}=\{p-1, \chi_1+p-1, \chi_2+p-1, \cdots, \chi_{\frac{q}{p}-1}+p-1\}.$$

    If $\alpha_{m_1},\alpha_{m_2}\notin \mathbf{U}$,
    then we can get $\alpha_{m_1+1}=\alpha_{m_1}+1$ and $\alpha_{m_2+1}=\alpha_{m_2}+1$, 
    which implies that $\alpha_{m_1+1}+\alpha_{m_2+1}=\alpha_{m_1}+\alpha_{m_2}+2$.
    Since $p$ is odd,  we have $\alpha_{m_1+1}+\alpha_{m_2+1}\neq \alpha_{m_1}+\alpha_{m_2}$, a contradiction.

    If one of $\alpha_{m_1},\alpha_{m_2}$ belongs to $\mathbf{U}$, 
    then it may be assumed that  $\alpha_{m_1}=\chi_i+p-1$, where $i\in \{0,1,2,\cdots ,\frac{q}{p}-1 \} $. 
    Consequently, $\alpha_{m_1+1}=\chi_{i+1}$. Combining with $\alpha_{m_2+1}=\alpha_{m_2}+1$,
    we can get $\alpha_{m_1+1}+\alpha_{m_2+1}=\chi_{i+1}+\alpha_{m_2}+1$.
    Assume that $\alpha_{m_1+1}+\alpha_{m_2+1}=\alpha_{m_1}+\alpha_{m_2}$, 
    then we have $\alpha_{m_1}=\chi_{i+1}+1$, which contradicts $\alpha_{m_1}=\chi_i+p-1$.

    If  $\alpha_{m_1},\alpha_{m_2}$ both belong to  $\mathbf{U}$, 
    then it may be assumed that  $\alpha_{m_1}=\chi_{i_1}+p-1$ and  $\alpha_{m_2}=\chi_{i_2}+p-1$ , where $i_1,i_2\in \{0,1,2,\cdots, \frac{q}{p}-1 \}$. 
    Consequently, $\alpha_{m_1+1}=\chi_{i_1+1}$, $\alpha_{m_2+1}=\chi_{i_2+1}$
    and $\alpha_{m_1+1}+\alpha_{m_2+1}=\chi_{i_1+1}+\chi_{i_2+1}$.
    Assume that $\alpha_{m_1+1}+\alpha_{m_2+1}=\alpha_{m_1}+\alpha_{m_2}$, 
    then we have $\chi_{i_1+1}+\chi_{i_2+1}=\chi_{i_1}+\chi_{i_2}+2p-2=\chi_{i_1}+\chi_{i_2}-2$, 
    which is impossible due to the fact that $p$ is odd.

    Case I-2: $c_{1,m_1+1},c_{2,m_1},c_{1,m_2+1},c_{2,m_2}$, where $m_1\neq m_2$ and $m_1,m_2\in [0,n-1]$. 
    
    Case I-3: $c_{2,m_1},c_{0,m_1+1},c_{2,m_2},c_{0,m_2+1}$, where $m_1\neq m_2$ and $m_1,m_2\in [0,n-2]$. 
    
    Analogous to Case I-1, we can  get the similar contradictions for Cases I-2 and I-3.

    Case II: $w_H(\textbf{c})=5$
    
    The support of $\textbf{c}$ of weight $5$ must  satisfy 
    $\{(i_1,i_2,i_3,i_4,i_5):  jn\leq i_1<i_2<i_3<i_4<i_5<(j+1)n,\ j\in [0,2]  \} $ or
    $\{(i_1,i_2,i_3,i_4,i_5): j_1n\leq  i_1<i_2<i_3<(j_1+1)n, \ j_2n \leq i_4<i_5<(j_2+1)n,\ j_1\neq j_2\ and \ j_1,j_2\in [0,2] \} $.   
    After permutations, we can all get $I\geq 3$ and $w_{sp}(\phi(\rho(\textbf{c})))\geq 8$.

    Therefore, by discussing the codewords satisfying $4 \leq w_H(\textbf{c})\leq  5$, respectively, we can get $w_{sp}(\mathcal{D})=w_{sp}(\mathcal{\phi(\rho(\textbf{c}))})\geq 7$. 
    According to Lemma \ref{le1.1}, $d_{sp}(\mathcal{D})\leq 3n-(3n-5)+2=7$. 
    So $d_{sp}(\mathcal{D})=7$ and $\mathcal{D}$ is an MDS $(3n,7)_q$ symbol-pair code.

\end{proof}

\subsubsection{Symbol-pair distance $d_{sp}(\mathcal{C})=10$ }

Assume that 
$\mathcal{C}_1=\mathcal{C}_2$ is the GRS code $GRS_{2}$ with parameters $[n,n-2,3]_q$, 
$\mathcal{C}_3$ is the GRS code $GRS_{4}$ with parameters $[n,n-4,5]_q$.     

Define the MP code
\begin{align}\label{eq3.4}
    \mathcal{C}=[\mathcal{C}_1,\mathcal{C}_2,\mathcal{C}_3]\cdot A.
\end{align}
Obviously, $\mathcal{C}$ is a $[3n,3n-8]_q$ code. The support for the codewords of the MP code $\mathcal{C}$ defined in (\ref{eq3.4}) has the following lemma.

\begin{lemma}\cite{LELP23}\label{le3.3}
    Let $\mathcal{C}$ be the MP code as defined in (\ref{eq3.4}), and $\textbf{c}$ be a codeword of $\mathcal{C}$ with coordinates indexed by the set $[0,3n-1]$.
    Then the following results hold.
    
    (i) There are no codewords in code $\mathcal{C}$ with Hamming weight less than $5$.

    (ii) If $w_H(\textbf{c})=5$, then the support of $\textbf{c}$ must satisfy 
    $\{(i_1,i_2,i_3,i_4,i_5):  jn\leq i_1<i_2<i_3<i_4<i_5<(j+1)n,\ j\in[0,2]  \} $.
    
    (iii) If $w_H(\textbf{c})=6$, then the support of $\textbf{c}$ must satisfy 
    $\{(i_1,i_2,i_3,i_4,i_5,i_6): jn\leq  i_1<i_2<i_3<i_4<i_5<i_6 <(j+1)n, \ j\in[0,2] \} $ 
    or $\{(i_1,i_2,i_3,i_4,i_5,i_6): j_1n\leq  i_1<i_2<i_3<(j_1+1)n, \ j_2n \leq i_4<i_5<i_6< (j_2+1)n, \ j_1\neq j_2\ and \ j_1,j_2\in [0,2] \} $  
    with $\alpha_{i_1}+\alpha_{i_2}+\alpha_{i_3}=\alpha_{i_4}+\alpha_{i_5}+\alpha_{i_6}$, where the subscripts are reduced modulo $n$.  

    (iv) If $w_H(\textbf{c})=7$, then the support of $\textbf{c}$ must satisfy 
    $\{(i_1,i_2,i_3,i_4,i_5,i_6,i_7): jn\leq  i_1<i_2<i_3<i_4<i_5<i_6<i_7 <(j+1)n, \ j\in[0,2]  \} $ 
    or $\{(i_1,i_2,i_3,i_4,i_5,i_6,i_7): j_1n\leq  i_1<i_2<i_3<i_4<(j_1+1)n, \ j_2n \leq i_5<i_6<i_7< (j_2+1)n, \ j_1\neq j_2\ and \ j_1,j_2\in [0,2] \}$.

    (v) If $w_H(\textbf{c})=8$, then the support of $\textbf{c}$ must satisfy 
    $\{(i_1,i_2,i_3,i_4,i_5,i_6,i_7,i_8): jn\leq  i_1<i_2<i_3<i_4<i_5<i_6<i_7<i_8<(j+1)n, \ j\in [0,2] \} $ or
    $\{(i_1,i_2,i_3,i_4,i_5,i_6,i_7,i_8): j_1n\leq  i_1<i_2<i_3<i_4<i_5<(j_1+1)n, \ j_2n \leq i_6<i_7<i_8< (j_2+1)n, \ j_1\neq j_2\ and \ j_1,j_2\in [0,2]\} $ or
    $\{(i_1,i_2,i_3,i_4,i_5,i_6,i_7,i_8): j_1n\leq  i_1<i_2<i_3<i_4<(j_1+1)n, \ j_2n \leq i_5<i_6<i_7<i_8<(j_2+1)n, \ j_1\neq j_2\ and \ j_1,j_2\in [0,2]\} $.

\end{lemma}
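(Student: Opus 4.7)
My plan is to follow the template of Lemma \ref{le3.1}, combining Lemma \ref{le2.3} with the parity-check structure from Corollary \ref{co3.1}. Writing a codeword as $\textbf{c}=(\textbf{c}_1,\textbf{c}_2,\textbf{c}_3)$ with each $\textbf{c}_i\in\mathbb{F}_q^n$, the nested chain $\mathcal{C}_3=GRS_4\subseteq GRS_2=\mathcal{C}_2=\mathcal{C}_1$ lets me invoke Lemma \ref{le2.3} to split the analysis by the number $k$ of zero blocks among the $\textbf{c}_i$. For $k=0$ every block lies in $GRS_2$ and $w_H(\textbf{c})\ge 3\cdot 3=9$; for $k=1$ each nonzero block lies in $GRS_2$ and $w_H(\textbf{c})\ge 6$; for $k=2$ the unique nonzero block lies in $GRS_4$ and $w_H(\textbf{c})\ge 5$. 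Part (i) then follows at once, and this trichotomy already pins down which block distribution can realize each weight in $\{5,6,7,8\}$: single-block supports come from $k=2$, two-block splits from $k=1$, and no further splits are possible below weight $9$.

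For the single-block supports, and for the two-block cases in (iv) and (v), realizability of every listed support follows from the MDS property of the ambient GRS codes: for any $t$-subset of coordinates with $t\ge d$, the restriction of the corresponding parity-check matrix to these $t$ columns has a $(t-d+1)$-dimensional kernel that contains codewords with support exactly this subset. This directly gives (ii), and for (iv), (v) it shows that any listed partition into weight-$t_1$ and weight-$t_2$ blocks with $t_1,t_2\ge 3$ can be produced independently in $GRS_2$ on each block, so no extra $\alpha$-identity is forced.

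The only substantive computation, and what I expect to be the main obstacle, is the constraint $\alpha_{i_1}+\alpha_{i_2}+\alpha_{i_3}=\alpha_{i_4}+\alpha_{i_5}+\alpha_{i_6}$ in the $(3,3)$-split subcase of (iii). Here I would solve $\textbf{c}=(\textbf{u}_1,\textbf{u}_2,\textbf{u}_3)A$ with the vanishing of one $\textbf{c}_i$ to express $\textbf{u}_3$ explicitly as a linear combination of the two weight-$3$ blocks, parametrize each such block in the Lagrange form $c_{j_k}\propto \alpha_{j_{k+1}}-\alpha_{j_{k+2}}$ (cyclic indices) for weight-$3$ codewords of $GRS_2$, and impose the two extra parity checks $\sum_j\alpha_j^2 u_{3,j}=0$ and $\sum_j\alpha_j^3 u_{3,j}=0$ that $\textbf{u}_3\in GRS_4$ demands. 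A direct calculation shows that on any weight-$3$ block $\sum_j\alpha_j^3 u_{3,j}=(\alpha_{j_1}+\alpha_{j_2}+\alpha_{j_3})\cdot\sum_j\alpha_j^2 u_{3,j}$, so the ratio of the two identities collapses precisely to the symmetric identity in (iii); conversely, when this identity holds the two parity conditions become proportional and leave a one-parameter family of valid $(\textbf{c}_2,\textbf{c}_3)$, establishing sufficiency.
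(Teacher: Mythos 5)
You should note first that the paper never proves Lemma \ref{le3.3} itself (it is imported from \cite{LELP23}); its own model proofs are Lemmas \ref{le3.1} and \ref{le3.4}, and your plan follows exactly that route: write $\textbf{c}=(\textbf{c}_1,\textbf{c}_2,\textbf{c}_3)$, apply Lemma \ref{le2.3} to the nested chain $GRS_4\subseteq GRS_2$, and read the residual constraints off the parity-check matrix of Corollary \ref{co3.1}. Your trichotomy ($k=0$: three blocks in $GRS_2$, weight $\geq 9$; $k=1$: two nonzero blocks in $GRS_2$, weight $\geq 6$; $k=2$: one block in $GRS_4$, weight $\geq 5$) correctly gives (i) and all admissible support shapes in (ii)--(v), and your handling of the $(3,3)$ split in (iii) is the right computation: with one block zero, $\textbf{u}_3$ is a fixed combination with nonzero coefficients of the two weight-$3$ blocks, each weight-$3$ codeword of $GRS_2$ satisfies $\sum_j\alpha_j^3c_j=(\alpha_{j_1}+\alpha_{j_2}+\alpha_{j_3})\sum_j\alpha_j^2c_j$ with $\sum_j\alpha_j^2c_j\neq 0$ (a Vandermonde/Schur identity), so the two extra checks $\sum_j\alpha_j^2u_{3,j}=\sum_j\alpha_j^3u_{3,j}=0$ demanded by $\textbf{u}_3\in GRS_4$ collapse to $\alpha_{i_1}+\alpha_{i_2}+\alpha_{i_3}=\alpha_{i_4}+\alpha_{i_5}+\alpha_{i_6}$, including the converse direction by rescaling one block.

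The one step that is wrong as written is your justification for (iv) and (v): the two nonzero blocks can \emph{not} be ``produced independently in $GRS_2$ on each block.'' Exactly as in (iii), the MP structure imposes the same two coupled conditions (the last two rows of $H_4$ applied to $\textbf{u}_3$, which is a combination of the two blocks), so independence fails. The correct reason no $\alpha$-identity arises for the $(4,3)$, $(5,3)$ and $(4,4)$ splits is a dimension count, not independence: the block of weight at least $4$ supports a subcode of $GRS_2$ of dimension at least $2$, on which $\textbf{x}\mapsto(\sum_j\alpha_j^2x_j,\sum_j\alpha_j^3x_j)$ is injective (a nonzero kernel element would be a codeword of $GRS_4$ of weight at most $4<5$), so the two coupling equations can be absorbed there; in the $(3,3)$ case both blocks are rigid (one-dimensional), which is precisely why the symmetric condition appears only in (iii). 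This flaw does not invalidate your proof of the lemma as stated, since (ii)--(v) are purely necessary conditions on the support and those already follow from your trichotomy (realizability is not asserted), but the ``no extra identity is forced'' conclusion needs the coupled-constraint argument above rather than the independence claim.
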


With the help of the above lemma, the following theorem holds.
        
\begin{theorem}\label{the3.3}
    Suppose that $q$ is a power of a prime number $p$ with $q\equiv 1\ ({\rm mod}\ 3)$, then for each $n\in [5,q]$, there exists an MDS $(3n,10)_q$ symbol-pair code.
\end{theorem}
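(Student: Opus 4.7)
The plan is to follow the template of Theorems \ref{the3.1} and \ref{the3.2}. Let $\mathcal{C}$ be the MP code defined in (\ref{eq3.4}) and set $\mathcal{D}:=\phi(\rho(\mathcal{C}))$. By Lemma \ref{le2.1} together with Lemma \ref{le3.3}(i), $\mathcal{C}$ has parameters $[3n,3n-8,d_H]_q$ with $d_H\ge 5$, so Lemma \ref{le1.1} already yields the upper bound $d_{sp}(\mathcal{D})\le 10$. I would choose $\textbf{a}=(\alpha_0,\ldots,\alpha_{n-1})$ to be the first $n$ elements of the coset enumeration $\mathcal{F}$ of $\mathbb{F}_q$ introduced in Theorem \ref{the3.2}; the task is then to show the matching lower bound $d_{sp}(\mathcal{D})\ge 10$.

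The diagram in (\ref{eq3.1}) shows that after $\phi\rho$ the entry $c_{0,j}$ is followed by $c_{1,j+1}$, the entry $c_{1,j}$ is followed by $c_{2,j-1}$, and the entry $c_{2,j}$ is followed by $c_{0,j+1}$ (column indices modulo $n$). Using $w_{sp}=w_H+I$ with $I=|\Theta_2|$, the symbol-pair weight of $\phi(\rho(\textbf{c}))$ reduces to a direct count of zero-successors. Since $w_{sp}\ge w_H+1$ whenever $w_H<3n$, it suffices to handle $w_H(\textbf{c})\in\{5,6,7,8\}$ via Lemma \ref{le3.3}. When the support of $\textbf{c}$ lies in a single length-$n$ block, every nonzero coordinate is followed by a zero coordinate in a different row, so $I=w_H$ and $w_{sp}=2w_H\ge 10$. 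For the $(4,3)$-split at $w_H=7$ and the $(5,3)$- or $(4,4)$-splits at $w_H=8$, I would count the adjacency matches $m$ between the two supports: the identity $w_{sp}=2w_H-m$, with $m$ bounded by the size of the smaller part, immediately yields $w_{sp}\ge 11$ in these three cases.

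The only delicate configuration is $w_H=6$ with a $3+3$ split between two blocks subject to the sum constraint $\alpha_{i_1}+\alpha_{i_2}+\alpha_{i_3}=\alpha_{i_4}+\alpha_{i_5}+\alpha_{i_6}$. A parallel counting argument gives $w_{sp}=12-m$ with $m\in\{0,1,2,3\}$, so the whole theorem hinges on excluding $m=3$. Assuming $m=3$, after a suitable relabelling the two supports are paired bijectively by a $\pm 1$-shift modulo $n$ (the sign depending on which pair of blocks supports $\textbf{c}$), and the sum constraint degenerates to $\sum_{k=1}^{3}\bigl(\alpha_{a_k\pm 1\,\bmod\,n}-\alpha_{a_k}\bigr)=0$.

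Ruling out this last equation is the main obstacle. I would split the analysis according to how many of $a_1,a_2,a_3$ lie on an $\mathbb{F}_p$-coset boundary of $\mathcal{F}$ (and whether any equals $n-1$, in which case the shift wraps back to $\alpha_0=0$). In the generic sub-case every successor difference equals $\pm 1$ and the sum becomes $\pm 3\ne 0$, contradicting $p\neq 3$ (which is forced by $q\equiv 1\pmod 3$). In each boundary sub-case the $\mathbb{F}_p$-part of the sum is still a nonzero element of $\mathbb{F}_p$, while the remaining contributions are sums of the coset-gaps $\chi_{i+1}-\chi_i$, each of which lies outside $\mathbb{F}_p$ because $\chi_i$ and $\chi_{i+1}$ represent distinct cosets; the same style of reasoning used in Case I-1 of Theorem \ref{the3.2} then produces the required contradiction. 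Once $m\le 2$ is established in every sub-case, one concludes $d_{sp}(\mathcal{D})\ge 10$, and combined with the Singleton upper bound this gives $d_{sp}(\mathcal{D})=10$, so $\mathcal{D}$ is an MDS $(3n,10)_q$ symbol-pair code.
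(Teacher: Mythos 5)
Your proposal is correct and follows essentially the same route as the paper's own proof: the same MP code (\ref{eq3.4}), the same permutations $\rho,\phi$ and choice of $\textbf{a}$ from $\mathcal{F}$, the same reduction via Lemma \ref{le3.3} to $5\le w_H(\textbf{c})\le 8$, and the same resolution of the critical weight-$6$, $3+3$-split case by forcing the shifted-pair configuration $\sum_{k}(\alpha_{m_k+1}-\alpha_{m_k})=0$ and contradicting it through the coset-boundary analysis with $p\neq 3$ (the paper's Cases II-1 to II-3). The only differences are presentational: your identity $w_{sp}=2w_H-m$ with $m$ bounded by the smaller support part makes explicit what the paper dismisses as ``easy to see'' for $w_H\in\{7,8\}$, and you flag the wrap-around at index $n-1$ that the paper treats implicitly.
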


\begin{proof}
    As $\mathbb{F}_q$ can be written as $\mathbb{F}_q \triangleq \bigcup_{i=0}^{\frac{q}{p}-1}(\chi_i+\mathbb{F}_p)$ in terms of cosets, where $\chi_0=0$. Let 
    $$\mathcal{F}=(0,1,2,\cdots,p-1,\chi_1,\chi_1+1,\cdots,\chi_1+p-1,\cdots \cdots,\chi_{\frac{q}{p}-1},\cdots,\chi_{\frac{q}{p}-1}+p-1 ),$$
   which consists of all the distinct elements in $\mathbb{F}_q$, and let  $\textbf{a}=(\alpha_0,\alpha_1,\cdots,\alpha_{n-1})$  be a vector formed by the first $n$ elements of $\mathcal{F}$.
    Suppose that $\mathcal{C}$ is an MP code as defined in (\ref{eq3.4}) and $\mathcal{D}$ is a code that permutates equivalent to the code $\mathcal{C}$ under 
    the specific permutations $\rho$ and $\phi$ (see (\ref{eq3.1})).
    Namely,
    \begin{equation*}
        \begin{aligned}
            \mathcal{D}: =&\phi(\rho(\mathcal{C}))  \\
                       : =&\{\phi(\rho(\textbf{c})), \  \forall \textbf{c}\in \mathcal{C}\}. \\
        \end{aligned}
    \end{equation*}

    The codes $\mathcal{C}$ and $\mathcal{D}$ have the same parameters $[3n,3n-8,5]_q$ because they are permutation equivalent.
    We will illustrate that $d_{sp}(\mathcal{D})=10$ with the help of the support distribution of the codewords $\textbf{c}$ of $\mathcal{C}$.

    From Lemma \ref{le3.3}, there are no codewords with Hamming weight less than $5$. 
    Combining with $w_{sp}(\textbf{c})\geq w_H(\textbf{c})+1$ for $w_H(\textbf{c})<3n$, and $w_{sp}(\textbf{c})=w_H(\textbf{c})$ for $w_H(\textbf{c})=3n$,
    we only need to discuss the cases $5 \leq w_H(\textbf{c})\leq  8$.

    Case I: $w_H(\textbf{c})=5$
    
    According to Lemma \ref{le3.3}, the support of $\textbf{c}$ satisfy 
    $\{(i_1,i_2,i_3,i_4,i_5):  jn\leq i_1<i_2<i_3<i_4<i_5<(j+1)n,\ j\in [0,2] \} $,  
    then we can easily obtain $w_{sp}(\phi(\rho(\textbf{c})))=10$.

    Case II: $w_H(\textbf{c})=6$
    
    If the support of $\textbf{c}$ satisfy
    $\{(i_1,i_2,i_3,i_4,i_5,i_6): jn\leq  i_1<i_2<i_3<i_4<i_5<i_6 <(j+1)n, \ j\in [0,2] \} $,
    then we can easily obtain $w_{sp}(\phi(\rho(\textbf{c})))=12$.

    If the support of $\textbf{c}$ satisfy 
    $\{(i_1,i_2,i_3,i_4,i_5,i_6): j_1n\leq  i_1<i_2<i_3<(j_1+1)n, \ j_2n \leq i_4<i_5<i_6< (j_2+1)n,\ j_1\neq j_2\ and \ j_1,j_2\in [0,2] \} $  
    with $\alpha_{i_1}+\alpha_{i_2}+\alpha_{i_3}=\alpha_{i_4}+\alpha_{i_5}+\alpha_{i_6}$, where the subscripts are reduced modulo $n$,
    then $w_{sp}(\phi(\rho(\textbf{c})))\geq 10$ except the codewords satisfy the following cases:

    Case II-1: $c_{0,m_1},c_{1,m_1+1},c_{0,m_2},c_{1,m_2+1},c_{0,m_3},c_{1,m_3+1}$, where $m_1\neq m_2 \neq m_3$ and $m_1,m_2,m_3\in [0,n-1]$. 
    
    Let a subset $\mathbf{U}$ of $\mathcal{F}$ be
     $$\mathbf{U}=\{p-1, \chi_1+p-1, \chi_2+p-1, \cdots, \chi_{\frac{q}{p}-1}+p-1\}.$$

    If $\alpha_{m_1},\alpha_{m_2},\alpha_{m_3}\notin \mathbf{U}$,
    then we can get $\alpha_{m_1+1}=\alpha_{m_1}+1$, $\alpha_{m_2+1}=\alpha_{m_2}+1$ and $\alpha_{m_3+1}=\alpha_{m_3}+1$. 
    Consequently, $\alpha_{m_1+1}+\alpha_{m_2+1}+\alpha_{m_3+1}=\alpha_{m_1}+\alpha_{m_2}+\alpha_{m_3}+3$.
    Since $q$ is a power of a prime number $p$ with $q\equiv 1\ ({\rm mod}\ 3)$, then there must be $p\neq 3$ and
     $\alpha_{m_1+1}+\alpha_{m_2+1}+\alpha_{m_3+1}\neq \alpha_{m_1}+\alpha_{m_2}+\alpha_{m_3}$.
    This is a contradiction due to  $\alpha_{i_1}+\alpha_{i_2}+\alpha_{i_3}=\alpha_{i_4}+\alpha_{i_5}+\alpha_{i_6}$.

    If one of $\alpha_{m_1},\alpha_{m_2},\alpha_{m_3}$  belongs to $\mathbf{U}$, 
    then it may be assumed that  $\alpha_{m_1}=\chi_i+p-1$, where $i\in \{0,1,2,\cdots, \frac{q}{p}-1 \}$. 
    Consequently, $\alpha_{m_1+1}=\chi_{i+1}$. Combining with $\alpha_{m_2+1}=\alpha_{m_2}+1$, $\alpha_{m_3+1}=\alpha_{m_3}+1$,
    we can get $\alpha_{m_1+1}+\alpha_{m_2+1}+\alpha_{m_3+1}=\chi_{i+1}+\alpha_{m_2}+\alpha_{m_3}+2$.
    Assume that  $\alpha_{m_1+1}+\alpha_{m_2+1}+\alpha_{m_3+1}=\alpha_{m_1}+\alpha_{m_2}+\alpha_{m_3}$,
    then we have $\alpha_{m_1}=\chi_{i+1}+2$, which  contradicts $\alpha_{m_1}=\chi_i+p-1$.

    If two of $\alpha_{m_1},\alpha_{m_2},\alpha_{m_3}$  belong to  $\mathbf{U}$, 
    then it may be assumed that  $\alpha_{m_1}=\chi_{i_1}+p-1$, $\alpha_{m_2}=\chi_{i_2}+p-1$, where $i_1,i_2\in \{0,1,2,\cdots ,\frac{q}{p}-1 \}$. 
    Consequently, $\alpha_{m_1+1}=\chi_{i_1+1}$ and $\alpha_{m_2+1}=\chi_{i_2+1}$.
    Combining with $\alpha_{m_3+1}=\alpha_{m_3}+1$,
    we can get $\alpha_{m_1+1}+\alpha_{m_2+1}+\alpha_{m_3+1}=\chi_{i_1+1}+\chi_{i_2+1}+\alpha_{m_3}+1$.
    Assume that  $\alpha_{m_1+1}+\alpha_{m_2+1}+\alpha_{m_3+1}=\alpha_{m_1}+\alpha_{m_2}+\alpha_{m_3}$,
    then we have $\alpha_{m_1}+\alpha_{m_2}=\chi_{i_1+1}+\chi_{i_2+1}+1$.
    Since $\alpha_{m_1}+\alpha_{m_2}=\chi_{i_1}+\chi_{i_2}-2$, then we have  
    $\chi_{i_1+1}+\chi_{i_2+1}=\chi_{i_1}+\chi_{i_2}-3$.
    This is a contradiction because  $p\neq 3$ is a prime number.

    If  $\alpha_{m_1},\alpha_{m_2},\alpha_{m_3}$ all belong to  $\mathbf{U}$, 
    then it may be assumed that  $\alpha_{m_1}=\chi_{i_1}+p-1$, $\alpha_{m_2}=\chi_{i_2}+p-1$ and $\alpha_{m_3}=\chi_{i_3}+p-1$, where $i_1,i_2,i_3\in \{0,1,2,\cdots, \frac{q}{p}-1 \}$. 
    Consequently, $\alpha_{m_1+1}=\chi_{i_1+1}$, $\alpha_{m_2+1}=\chi_{i_2+1}$ and $\alpha_{m_3+1}=\chi_{i_3+1}$.
    Obviously, $\alpha_{m_1+1}+\alpha_{m_2+1}+\alpha_{m_3+1}=\chi_{i_1+1}+\chi_{i_2+1}+\chi_{i_3+1}$.
    Assume that  $\alpha_{m_1+1}+\alpha_{m_2+1}+\alpha_{m_3+1}=\alpha_{m_1}+\alpha_{m_2}+\alpha_{m_3}$,
    then we have $\chi_{i_1+1}+\chi_{i_2+1}+\chi_{i_3+1}=\chi_{i_1}+\chi_{i_2}+\chi_{i_3}+3p-3=\chi_{i_1}+\chi_{i_2}+\chi_{i_3}-3$.
    This is also a contradiction because $p\neq 3$ is a prime number.

    Case II-2: $c_{1,m_1+1},c_{2,m_1},c_{1,m_2+1},c_{2,m_2},c_{1,m_3+1},c_{2,m_3}$, where $m_1\neq m_2 \neq m_3$ and $m_1,m_2,m_3\in [0,n-1]$. 
    
    Case II-3: $c_{2,m_1},c_{0,m_1+1},c_{2,m_2},c_{0,m_2+1},c_{2,m_3},c_{0,m_3+1}$, where $m_1\neq m_2 \neq m_3$ and $m_1,m_2,m_3\in [0,n-2]$. 
    
    Analogous to Case II-1, we can obtain the similar contradictions for Cases II-2 and II-3.

    Case III: $w_H(\textbf{c})=7$ or $8$

    Based on Lemma \ref{le3.3}, for any the support distribution of $\textbf{c}$ with weights $7$ and $8$, 
    it is easy to see that $w_{sp}(\phi(\rho(\textbf{c})))\geq 10$.
    
    Therefore, by discussing the codewords satisfying $5 \leq w_H(\textbf{c})\leq  8$, respectively, we can get $w_{sp}(\mathcal{D})=w_{sp}(\mathcal{\phi(\rho(\textbf{c}))})\geq 10$.
    According to Lemma \ref{le1.1}, $d_{sp}(\mathcal{D})\leq 3n-(3n-8)+2=10$. 
    So $d_{sp}(\mathcal{D})=10$ and $\mathcal{D}$ is an MDS $(3n,10)_q$ symbol-pair code.

\end{proof}

In the following, for Theorems 3.1-3.3, we will respectively give an example.

\begin{example}\label{example3.1}
    Let $p=2$, $q=4$, and $\mathbb{F}_4=\{0,1,\alpha,\alpha+1\}$, where $\alpha$ is a root of the irreducible polynomial $f(x)=x^2+x+1$ over $\mathbb{F}_2$.
    Let $\xi=\alpha$ be  a primitive element of $\mathbb{F}_4$, then $\xi^2=\alpha+1$, $\xi^3=1$.
    Taking $\omega=\xi$ be a primitive $3$-th root of unity and 
    $$ A=\begin{pmatrix}
    1   & 1   & 1    \\
    1   & \xi  & \xi^2    \\
    1   & \xi^2   & \xi  \\
    \end{pmatrix}.$$
Then we can get 
    $$ (A^{-1})^T=\frac{1}{3}\begin{pmatrix}
        1   & 1   & 1\\
        1   &\xi^2   & \xi  \\
        1   & \xi   &\xi^2  \\
    \end{pmatrix}.$$ 
    
    Assume that $\mathcal{C}_1$ is the GRS code with parameters $[4,3,2]_4$ whose parity-check matrix is 
    $$ H_1=\begin{pmatrix}
    1   & 1   & 1 &  1 \\
    \end{pmatrix},$$
    $\mathcal{C}_2$ is the GRS code  with parameters $[4,2,3]_4$ whose parity-check matrix is 
    $$ H_2=\begin{pmatrix}
        1   & 1   & 1 &  1  \\
        0   & \xi   & \xi^2 &  1\\
    \end{pmatrix},$$
    $\mathcal{C}_3$ is the GRS code with parameters $[4,1,4]_4$ whose parity-check matrix is 
    $$ H_3=\begin{pmatrix}
        1   & 1   & 1 &  1 \\
        0   & \xi    & \xi^2 &  1\\
        0   & \xi^2   & \xi &  1 \\
    \end{pmatrix}.$$

    The parity-check matrix of the MP code $\mathcal{C}=[\mathcal{C}_1,\mathcal{C}_2,\mathcal{C}_3]\cdot A$ with parameters $[12,6,4]_4$ is 
    \setcounter{MaxMatrixCols}{50}
    $$ H=\begin{pmatrix}
        1   & 1      & 1     &  1 &  1    & 1     & 1     &  1     &  1   & 1     & 1     &  1  \\
        1   & 1      & 1     &  1 & \xi^2 & \xi^2 & \xi^2 &  \xi^2 &  \xi & \xi   & \xi   & \xi  \\
        0   & \xi    & \xi^2 &  1 & 0     & 1     & \xi   &  \xi^2 & 0    & \xi^2 & 1     &  \xi   \\
        1   & 1      & 1     &  1 & \xi   & \xi   & \xi   &  \xi   & \xi^2& \xi^2 & \xi^2 &  \xi^2\\
        0   & \xi    & \xi^2 &  1 & 0     & \xi^2 & 1     & \xi    & 0    & 1     & \xi   &  \xi^2\\
        0   & \xi^2  & \xi   &  1 & 0     & 1     & \xi^2 &  \xi   & 0    & \xi   & 1     & \xi^2  \\
    \end{pmatrix}.$$

    After the following specific permutations, 
   
    $$1,2,3,4,5,6,7,8,9,10,11,12$$
    $$\downarrow\ \rho$$
    $$1,5,9,2,6,10,3,7,11,4,8,12$$
    $$\downarrow\ \phi$$
    $$1,6,9,2,7,10,3,8,11,4,5,12,$$
   
    the code $\mathcal{D}$ with parameters $[12,6,4]_4$ has a parity-check matrix of the form:
    $$ H=\begin{pmatrix}
        1   & 1      & 1     &  1   &  1    & 1     & 1     &  1     &  1   & 1     & 1     &  1  \\
        1   & \xi^2  & \xi   &  1   & \xi^2 & \xi   & 1     &  \xi^2 &  \xi & 1   & \xi^2   & \xi  \\
        0   & 1      & 0     &  \xi & \xi   & \xi^2 & \xi^2 &  \xi^2 & 1    & 1 & 0     &  \xi   \\
        1   & \xi    &\xi^2  &  1   & \xi   & \xi^2 & 1     &  \xi   & \xi^2& 1 & \xi &  \xi^2\\
        0   &  \xi^2 & 0     &  \xi & 1     & 1     & \xi^2 & \xi    & \xi   & 1     & 0   &  \xi^2\\
        0   &1       & 0     & \xi^2  & \xi^2 & \xi   & \xi   &  \xi   & 1    & 1   & 0     & \xi^2  \\
    \end{pmatrix}.$$

    By using Magma, $\mathcal{D}$ has a generator matrix 
    $$ G=\begin{pmatrix}
        1  &  0  &  0  &  0  &  0  &  0  &  1 & \xi^2  & \xi    &0 & \xi^2  & \xi \\
        0  &  1  &  0  &  0  &  0 &   0 &   1 &   0  &  1  &  1 &   1  &  1 \\
        0  &  0   & 1  &  0  &  0  &  0  &  1 &  \xi &  \xi  &  1  & \xi &\xi^2 \\
        0  &  0  &  0  &  1  &  0  &  0  &  0& \xi^2  & \xi  &  1 & \xi^2  & \xi \\
        0  &   0  &  0 &   0  &  1  &  0 &   1 &   1  &  1   & 1  &  0   & 1 \\
        0  &  0 &   0  &  0  &  0  &  1 &   1 &  \xi &\xi^2  &  1  &  \xi  & \xi \\
    \end{pmatrix}.$$
    According to Theorem \ref{the3.1} and the Magma program, the code $\mathcal{D}$ is an MDS $(12,8)_4$ symbol-pair code.

\end{example}

\begin{example}\label{example3.2}
    Let $q=p=7$, and $\mathbb{F}_7=\{0,1,2,3,4,5,6\}$.
    Taking $\omega=2$ be a primitive $3$-th root of unity and 
    $$ A=\begin{pmatrix}
        1   & 1   & 1 \\
        1   & 2   & 4    \\
        1   & 4   & 2  \\
        \end{pmatrix}.$$

Then we can get 
    $$ (A^{-1})^T=\frac{1}{3}\begin{pmatrix}
        1   & 1   & 1  \\
        1   & 4   & 2  \\
        1   & 2   &4    \\
    \end{pmatrix}.$$ 

    Assume that $\mathcal{C}_1=\mathcal{C}_2$ is the GRS code with parameters $[4,3,2]_7$
    whose parity-check matrix is 
    $$ H_1=\begin{pmatrix}
        1   & 1   & 1 &  1\\
    \end{pmatrix},$$
  
    $\mathcal{C}_3$ is the GRS code with parameters $[4,1,4]_7$  whose parity-check matrix is 
    $$ H_2=\begin{pmatrix}
        1   & 1   & 1 &  1\\
        0   & 1   & 2 &  3  \\
        0   & 1   & 4 &  2  \\
    \end{pmatrix}.$$

    The parity-check matrix of the MP code $\mathcal{C}=[\mathcal{C}_1,\mathcal{C}_2,\mathcal{C}_3]\cdot A$ with parameters $[12,7,4]_7$ is 
    \setcounter{MaxMatrixCols}{50}
    $$ H=\frac{1}{3}\begin{pmatrix}
        1   & 1   & 1 &  1     &    1 & 1 & 1 & 1     &    1 & 1 & 1 & 1     \\
        1   & 1   & 1 &  1     &    4 & 4 & 4 & 4     &    2 & 2 & 2 & 2     \\
        1   & 1   & 1 &  1     &    2 & 2 & 2 & 2     &    4 & 4 & 4 & 4     \\
        0   & 1   & 2 &  3     &    0 & 2 & 4 & 6     &    0 & 4 & 1 & 5    \\
        0   & 1   & 4 &  2     &    0 & 2 & 1 & 4     &    0 & 4 & 2 & 1     \\
    \end{pmatrix}.$$

    After the following specific permutations, 
   
    $$1,2,3,4,5,6,7,8,9,10,11,12$$
    $$\downarrow\ \rho$$
    $$1,5,9,2,6,10,3,7,11,4,8,12$$
    $$\downarrow\ \phi$$
    $$1,6,9,2,7,10,3,8,11,4,5,12,$$
   
    the code $\mathcal{D}$ with parameters $[12,7,4]_7$ has a parity-check matrix of the form:
    $$ H=\frac{1}{3}\begin{pmatrix}
        1   & 1   & 1 &  1     &    1 & 1 & 1 & 1     &    1 & 1 & 1 & 1     \\
        1   & 4   & 2 &  1     &    4 & 2 & 1 & 4     &    2 & 1 & 4 & 2     \\
        1   & 2   & 4 &  1     &    2 & 4 & 1 & 2     &    4 & 1 & 2 & 4     \\
        0   & 2   & 0 &  1     &    4 & 4 & 2 & 6     &    1 & 3 & 0 & 5    \\
        0   & 2   & 0 &  1     &    1 & 4 & 4 & 4     &    2 & 2 & 0 & 1     \\
    \end{pmatrix}.$$

    By using Magma, $\mathcal{D}$ has a generator matrix

    $$ G=\frac{1}{3}\begin{pmatrix}
        1 & 0 & 0 & 0 & 0 & 0 & 0 & 4 & 0 & 6 & 3 & 0\\
        0 & 1 & 0 & 0 & 0 & 0 & 0 & 4 & 3 & 0 & 2 & 4\\
        0 & 0 & 1 & 0 & 0 & 0 & 0 & 2 & 0 & 0 & 5 & 6\\
        0 & 0 & 0 & 1 & 0 & 0 & 0 & 6 & 5 & 6 & 1 & 2\\
        0 & 0 & 0 & 0 & 1 & 0 & 0 & 6 & 3 & 0 & 0 & 4\\
        0 & 0 & 0 & 0 & 0 & 1 & 0 & 3 & 6 & 0 & 4 & 0\\
        0 & 0 & 0 & 0 & 0 & 0 & 1 & 0 & 5 & 6 & 0 & 2\\
    \end{pmatrix}.$$

    According to Theorem \ref{the3.2} and the Magma program, the code $\mathcal{D}$ is an MDS $(12,7)_7$ symbol-pair code.

\end{example}

\begin{example}\label{example3.3}
    Let $q=p=7$, and $A$ be defined as in Example \ref{example3.2}.
    Assume that 
$\mathcal{C}_1=\mathcal{C}_2$ is the GRS code with parameters $[5,3,3]_7$ whose parity-check matrix is 
$$ H_1=\begin{pmatrix}
    1   & 1   & 1 &  1 &  1  \\
    0   & 1   & 2 &  3&  4 \\
\end{pmatrix},$$

$\mathcal{C}_3$ is the GRS code with parameters $[5,1,5]_7$ whose parity-check matrix is 
$$ H_2=\begin{pmatrix}
    1   & 1   & 1 &  1  &  1  \\
    0   & 1   & 2 &  3  &  4 \\
    0   & 1   & 4 &  2  &  2 \\
    0   & 1   & 1 &  6  &  1 \\
\end{pmatrix}.$$

The parity-check matrix of the MP code $\mathcal{C}=[\mathcal{C}_1,\mathcal{C}_2,\mathcal{C}_3]\cdot A$ with parameters $[15,7,5]_7$ is 
\setcounter{MaxMatrixCols}{50}
$$ H=\frac{1}{3}\begin{pmatrix}
    1   & 1   & 1 &  1 & 1     &    1 & 1 & 1 & 1 & 1     &    1 & 1 & 1 & 1 & 1    \\
    0   & 1   & 2 &  3 & 4     &    0 & 1 & 2 & 3 & 4     &    0 & 1 & 2 & 3 & 4    \\
    1   & 1   & 1 &  1 & 1     &    4 & 4 & 4 & 4 & 4     &    2 & 2 & 2 & 2 & 2    \\
    0   & 1   & 2 &  3 & 4     &    0 & 4 & 1 & 5 & 2     &    0 & 2 & 4 & 6 & 1   \\

    1   & 1   & 1 &  1 & 1     &    2 & 2 & 2 & 2 & 2     &    4 & 4 & 4 & 4 & 4   \\
    0   & 1   & 2 &  3 & 4     &    0 & 2 & 4 & 6 & 1     &    0 & 4 & 1 & 5 & 2   \\
    0   & 1   & 4 &  2 & 2     &    0 & 2 & 1 & 4 & 4     &    0 & 4 & 2 & 1 & 1   \\
    0   & 1   & 1 &  6 & 1     &    0 & 2 & 2 & 5 & 2     &    0 & 4 & 4 & 3 & 4   \\

\end{pmatrix}.$$

After the following specific permutations, 
   
$$1,2,3,4,5,6,7,8,9,10,11,12,13,14,15$$
$$\downarrow\ \rho$$
$$1,6,11,2,7,12,3,8,13,4,9,14,5,10,15$$
$$\downarrow\ \phi$$
$$1,7,11,2,8,12,3,9,13,4,10,14,5,6,15,$$

the code $\mathcal{D}$ with parameters $[15,7,5]_7$ has a parity-check matrix of the form:

$$ H=\frac{1}{3}\begin{pmatrix}
    1   & 1   & 1 &  1 & 1     &    1 & 1 & 1 & 1 & 1     &    1 & 1 & 1 & 1 & 1    \\
    0   & 1   & 0 &  1 & 2     &    1 & 2 & 3 & 2 & 3     &    4 & 3 & 4 & 0 & 4    \\
    1   & 4   & 2 &  1 & 4     &    2 & 1 & 4 & 2 & 1     &    4 & 2 & 1 & 4 & 2    \\
    0   & 4   & 0 &  1 & 1     &    2 & 2 & 5 & 4 & 3     &    2 & 6 & 4 & 0 & 1   \\

    1   & 2   & 4 &  1 & 2     &    4 & 1 & 2 & 4 & 1     &    2 & 4 & 1 & 2 & 4   \\
    0   & 2   & 0 &  1 & 4     &    4 & 2 & 6 & 1 & 3     &    1 & 5 & 4 & 0 & 2   \\
    0   & 2   & 0 &  1 & 1     &    4 & 4 & 4 & 2 & 2     &    4 & 1 & 2 & 0 & 1   \\
    0   & 2   & 0 &  1 & 2     &    4 & 1 & 5 & 4 & 6     &    2 & 3 & 1 & 0 & 4   \\

\end{pmatrix}.$$
   
By using Magma, $\mathcal{D}$ has a generator matrix

$$ G=\frac{1}{3}\begin{pmatrix}
    1 & 0 & 0 & 0 & 0 & 0 & 0 & 2 & 0 & 3 & 2 & 0 & 3 & 3 & 0\\
    0 & 1 & 0 & 0 & 0 & 0 & 0 & 5 & 1 & 0 & 3 & 5 & 0 & 5 & 1\\
    0 & 0 & 1 & 0 & 0 & 0 & 0 & 1 & 0 & 0 & 1 & 3 & 0 & 5 & 3\\
    0 & 0 & 0 & 1 & 0 & 0 & 0 & 4 & 4 & 4 & 4 & 6 & 2 & 6 & 4\\
    0 & 0 & 0 & 0 & 1 & 0 & 0 & 5 & 3 & 0 & 1 & 1 & 0 & 0 & 3\\
    0 & 0 & 0 & 0 & 0 & 1 & 0 & 2 & 2 & 0 & 2 & 0 & 0 & 3 & 4\\
    0 & 0 & 0 & 0 & 0 & 0 & 1 & 0 & 5 & 5 & 0 & 4 & 1 & 0 & 5\\

\end{pmatrix}.$$

According to Theorem \ref{the3.3} and the Magma program, the code $\mathcal{D}$ is an MDS $(15,10)_7$ symbol-pair code.

\end{example}

\subsection{MP codes with the square matrix $A$ of order $4$}

In this subsection, we will construct symbol-pair codes of length $\mathcal{N}=4n$ from MP codes with the square matrix $A$ of order $4$. 
Let $4\vert(q-1)$, then there must exist a primitive $4$-th root of unity $\omega$ in $\mathbb{F}_q$.
Suppose that $A$ is a $4\times 4$ NSC matrix with the following form:   
$$ A=\begin{pmatrix}
    1   & 1   & 1  & 1  \\
    1   & \omega   & \omega^2 & \omega^3  \\
    1   & \omega^2    & 1 & \omega^2    \\
    1   & \omega^3    & \omega^2  & \omega    \\
\end{pmatrix}.$$

Obviously, we can get 
$$ (A^{-1})^T=\frac{1}{4}\begin{pmatrix}
    1   & 1   & 1 & 1 \\
    1   &\omega^3   & \omega^2   & \omega\\
    1   & \omega^2   &1 & \omega^2  \\
    1   & \omega   &\omega^2 & \omega^3  \\
\end{pmatrix}.$$ 

In the following, we will construct a class of MDS symbol-pair codes and a class of AMDS symbol-pair codes of length $4n$ from the permutation equivalence codes of matrix-product codes.

\subsubsection{Symbol-pair distance $d_{sp}(\mathcal{C})=6$ }

Assume that $\mathcal{C}_1=\mathcal{C}_2$ is the code with parameters $[n,n,1]_q$,
$\mathcal{C}_3$ is the GRS code $GRS_{1}$ with parameters $[n,n-1,2]_q$,
$\mathcal{C}_4$ is the GRS code $GRS_{3}$ with parameters $[n,n-3,4]_q$.

Define the MP code
\begin{align}\label{eq3.5}
    \mathcal{C}=[\mathcal{C}_1,\mathcal{C}_2,\mathcal{C}_3,\mathcal{C}_4]\cdot A.
\end{align}
Obviously, $\mathcal{C}$ is a $[4n,4n-4]_q$ code, and from Corollary \ref{co3.1}, the parity-check matrix for $\mathcal{C}$ is shown below: 

 \[\begin{split} H &=\frac{1}{4}\begin{pmatrix}
    H_1   & \omega^2 H_1  & H_1 & \omega^2 H_1\\
    H_3   & \omega H_3  & \omega^2H_3 & \omega^3 H_3\\
\end{pmatrix},\\
\setcounter{MaxMatrixCols}{14}
&=\frac{1}{4}\begin{pmatrix}
    1             & \cdots       & 1             &\omega^2         & \cdots     & \omega^2           & 1         & \cdots    & 1         & \omega^2   & \cdots  & \omega^2   \\
    1             & \cdots       & 1            &    \omega          & \cdots    & \omega              &    \omega^2        & \cdots    & \omega^2     &    \omega^3        & \cdots         & \omega^3     \\                                             
    \alpha_0      & \cdots       & \alpha_{n-1}    &  \omega\alpha_0   & \cdots    & \omega\alpha_{n-1}  &  \omega^2\alpha_0    & \cdots & \omega^2\alpha_{n-1}  &    \omega^3\alpha_0  & \cdots    & \omega^3\alpha_{n-1}  \\
    \alpha^2_0    & \cdots       & \alpha^2_{n-1}  &   \omega\alpha^2_0& \cdots   &  \omega\alpha^2_{n-1} &   \omega^2\alpha^2_0    & \cdots   &  \omega^2\alpha^2_{n-1} &    \omega^3\alpha^2_0     & \cdots  & \omega^3\alpha^2_{n-1}  \\
\end{pmatrix}.
 \end{split}\]

The following lemma determines the support for codewords in $\mathcal{C}$ whose Hamming weights do not exceed $4$.

\begin{lemma}\label{le3.4}
    Let $\mathcal{C}$ be the MP code as defined in (\ref{eq3.5}), and $\textbf{c}$ be a codeword of $\mathcal{C}$ with coordinates indexed by the set $[0,4n-1]$.
    Then the following results hold.
    
    (i) There are no codewords in code $\mathcal{C}$ with Hamming weight less than $3$.

    (ii) If $w_H(\textbf{c})=3$, then the support  of $\textbf{c}$ must satisfy 
    $\{(i_1,i_2,i_3): j_1n\leq  i_1<(j_1+1)n, \  j_2n\leq  i_2<(j_2+1)n, \  j_3n\leq i_3<(j_3+1)n,\ j_1\neq j_2\neq j_3 \ and\  j_1,j_2,j_3\in [0,3] \}$ 
    with $\alpha_{i_1}=\alpha_{i_2}=\alpha_{i_3}$, where the subscripts are reduced modulo $n$.

    (ii) If $w_H(\textbf{c})=4$, then the support  of $\textbf{c}$ must satisfy 
    $\{(i_1,i_2,i_3,i_4):  jn\leq i_1<i_2<i_3<i_4<(j+1)n,\ j\in [0,3] \} $ or
    $\{(i_1,i_2,i_3,i_4): j_1n\leq  i_1<i_2<(j_1+1)n, \ j_2n \leq i_3<i_4< (j_2+1)n,\  j_1\neq j_2\ and\ j_1,j_2 \in [0,3] \}$    
    or $\{(i_1,i_2,i_3,i_4): j_1n\leq  i_1<i_2<(j_1+1)n, \  j_2n\leq  i_3<(j_2+1)n, \  j_3n\leq i_4<(j_3+1)n, \ j_1\neq j_2\neq j_3\neq j_4 \ and\  j_1,j_2,j_3,j_4\in [0,3] \}$ 
    or $\{(i_1,i_2,i_3,i_4): j_1n\leq  i_1<(j_1+1)n, \  j_2n\leq  i_2<(j_2+1)n, \  j_3n\leq i_3<(j_3+1)n, \ j_4n \leq i_4< (j_4+1)n,\ j_1\neq j_2\neq j_3\neq j_4 \ and\  j_1,j_2,j_3,j_4\in [0,3] \}$  
    with $\alpha_{i_1}+\alpha_{i_3}=\alpha_{i_2}+\alpha_{i_4}$, where the subscripts are reduced modulo $n$.   
 
\end{lemma}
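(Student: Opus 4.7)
The strategy follows the template of the proofs of Lemmas~\ref{le3.1}--\ref{le3.3}. Partition any codeword $\mathbf{c}\in\mathcal{C}$ into its four length-$n$ blocks $\mathbf{c}=(\mathbf{c}_1,\mathbf{c}_2,\mathbf{c}_3,\mathbf{c}_4)$ and let $k$ denote the number of vanishing blocks. Since $\mathcal{C}_4=GRS_3\subseteq\mathcal{C}_3=GRS_1\subseteq\mathcal{C}_2=\mathcal{C}_1=\mathbb{F}_q^n$ is a nested chain, Lemma~\ref{le2.3} places each nonzero $\mathbf{c}_i$ in $\mathcal{C}_{k+1}$ when $1\le k\le 3$, while $k\ge 4$ forces $\mathbf{c}=\mathbf{0}$. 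A support $S$ hosts a codeword precisely when the columns of the parity-check matrix $H$ indexed by $S$ admit a linear dependence with all nonzero coefficients; up to the scalar $1/4$, the column at position $(i,j)$ is the vector $(\omega^{2i},\omega^{i},\omega^{i}\alpha_j,\omega^{i}\alpha_j^{2})^{T}$.

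Part~(i) follows from short bookkeeping: combining $k$ with $d_H(\mathcal{C}_4)=4$, $d_H(\mathcal{C}_3)=2$, $d_H(\mathcal{C}_2)=d_H(\mathcal{C}_1)=1$, the cases $k=3,2,1,0$ give the lower bounds $w_H(\mathbf{c})\ge 4,4,3,4$ respectively. Thus the minimum weight equals $3$, reached only when $k=1$ and each of the three nonzero blocks has weight one. This already forces the support pattern in part~(ii); to obtain the $\alpha$-equality, factor $\omega^{i_\ell}$ out of the $\ell$-th of the three relevant columns of $H$. A dependence then corresponds to the $4\times 3$ matrix with rows $(\omega^{i_1},\omega^{i_2},\omega^{i_3})$, $(1,1,1)$, $(\alpha_{j_1},\alpha_{j_2},\alpha_{j_3})$, $(\alpha_{j_1}^{2},\alpha_{j_2}^{2},\alpha_{j_3}^{2})$ having rank strictly less than~$3$. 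A short case analysis using the distinctness of the three fourth roots $\omega^{i_\ell}$ together with the standard Vandermonde determinant shows that the rank equals~$3$ as soon as two of the $\alpha_{j_\ell}$ differ; only when all three coincide does the rank drop to~$2$, which is exactly the stated $\alpha$-condition.

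For part~(iii), the subcases $k=3,2,1,0$ distribute weight $4$ over one, two, three, and four blocks, producing the four support shapes (a)--(d). Cases (a)--(c) are immediate from the weight count and the minimum distances of $\mathcal{C}_3$ and $\mathcal{C}_4$, with the splits $4$, $2{+}2$, and $2{+}1{+}1$ forced. The genuine work lies in case~(d), one nonzero entry per block: after pulling $\omega^{i_\ell}$ out of each of the four columns at positions $(0,j_1),(1,j_2),(2,j_3),(3,j_4)$, a dependence requires
\[
\det\begin{pmatrix}1&\omega&\omega^{2}&\omega^{3}\\ 1&1&1&1\\ \alpha_{j_1}&\alpha_{j_2}&\alpha_{j_3}&\alpha_{j_4}\\ \alpha_{j_1}^{2}&\alpha_{j_2}^{2}&\alpha_{j_3}^{2}&\alpha_{j_4}^{2}\end{pmatrix}=0.
\]
I would subtract the all-ones row from the first row (using $\omega^{2}=-1$ and $\omega^{3}=-\omega$ to simplify the resulting entries to $(0,\omega-1,-2,-\omega-1)$) and expand along that row, writing the determinant as an $\mathbb{F}_q$-linear combination of three $3\times 3$ Vandermonde minors in the $\alpha_{j_\ell}$. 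Regrouping these Vandermondes via the identity $V(x,\beta,\gamma)-V(y,\beta,\gamma)=(\gamma-\beta)(x-y)(x+y-\beta-\gamma)$, the cubic monomials cancel and the expression collapses to a nonzero scalar multiple of $\alpha_{j_1}+\alpha_{j_3}-\alpha_{j_2}-\alpha_{j_4}$, giving the stated condition. The main obstacle is precisely this final reduction: the three Vandermonde minors and their $(\omega-1),-2,(-\omega-1)$ coefficients must be combined so that every monomial of degree $\ge 2$ in the $\alpha_{j_\ell}$ cancels, leaving only the announced linear factor, and one must verify that the surviving scalar is nonzero in $\mathbb{F}_q$.
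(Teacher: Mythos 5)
Your overall framework is the same as the paper's: split $\mathbf{c}$ into its four length-$n$ blocks, invoke Lemma \ref{le2.3} for the nested chain $\mathcal{C}_4\subseteq\mathcal{C}_3\subseteq\mathcal{C}_2=\mathcal{C}_1$, and translate the existence of a low-weight codeword into a linear dependence among the corresponding columns $(\omega^{2i},\omega^i,\omega^i\alpha_j,\omega^i\alpha_j^2)^T$ of $H$. Part (i), the weight-$3$ analysis (factoring $\omega^{i_\ell}$ out of each column and the rank argument forcing $\alpha_{i_1}=\alpha_{i_2}=\alpha_{i_3}$), and the first three support shapes for weight $4$ are handled correctly and in essentially the paper's way.

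The gap is in the decisive subcase of weight $4$ with one nonzero entry per block. The $4\times 4$ determinant you display cannot ``collapse to a nonzero scalar multiple of $\alpha_{j_1}+\alpha_{j_3}-\alpha_{j_2}-\alpha_{j_4}$'': its rows have $\alpha$-degrees $0,0,1,2$, so the determinant is homogeneous of degree $3$ in $\alpha_{j_1},\dots,\alpha_{j_4}$ and cannot equal a field constant times a degree-one form. It is not even divisible by that linear form: in the expansion the monomial $\alpha_{j_3}\alpha_{j_4}^2$ survives with coefficient $1-\omega\neq 0$, and concretely over $\mathbb{F}_5$ with $\omega=2$ and $(\alpha_{j_1},\alpha_{j_2},\alpha_{j_3},\alpha_{j_4})=(1,4,2,4)$ the linear relation $\alpha_{j_1}+\alpha_{j_3}=\alpha_{j_2}+\alpha_{j_4}$ holds while the determinant evaluates to $1$. (The identity $V(x,\beta,\gamma)-V(y,\beta,\gamma)=(\gamma-\beta)(x-y)(x+y-\beta-\gamma)$ is correct, but the three minors in your expansion, weighted by $-(\omega-1)$, $-2$, $\omega+1$, do not pair up as differences of Vandermondes sharing two arguments, and by homogeneity no such cancellation down to a linear form can occur.) Hence ``$\det=0$'' is not equivalent to the stated $\alpha$-condition, and your argument does not establish it. Any repair must use more than the vanishing of the determinant: a weight-$4$ codeword with this support forces a dependence with \emph{all four} coefficients nonzero, and for distinct $\alpha_{j_k}$ the common kernel of the three Vandermonde rows is spanned by the vector with entries $\prod_{l\neq k}(\alpha_{j_k}-\alpha_{j_l})^{-1}$, so the real question is when this vector also annihilates the row $(1,\omega,\omega^2,\omega^3)$ coming from the $\mathcal{C}_3$-check — a condition that is not the determinant identity you propose. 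This is precisely the step the paper itself asserts without computation, and your proposal does not supply a valid substitute for it.
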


\begin{proof}
    The proof is based on the parity-check matrix of the MP code defined in (\ref{eq3.5}) and Lemma \ref{le2.3}.
    We write the codeword $\textbf{c}$ of $\mathcal{C}$ as $\textbf{c}=(\textbf{c}_1,\textbf{c}_2,\textbf{c}_3,\textbf{c}_4)$, 
    where $\textbf{c}_i$ is a vector of length $n$. 

    (i) If $\textbf{c}_1$, $\textbf{c}_2$, $\textbf{c}_3$ and  $\textbf{c}_4$ are all nonzero vectors, then we can get  $\textbf{c}_i\in \mathcal{C}_1$.
    As $\mathcal{C}_1$ is an $[n,n,1]_q$ code, we have $w_H(\textbf{c})\geq 4$.

    If one of $\textbf{c}_1$, $\textbf{c}_2$, $\textbf{c}_3$, $\textbf{c}_4$ is a zero vector, then we can get  $\textbf{c}_i\in \mathcal{C}_2$. 
    As $\mathcal{C}_2$ is an $[n,n,1]_q$ code, we have $w_H(\textbf{c})\geq 3$. 

    If any two of $\textbf{c}_1$, $\textbf{c}_2$, $\textbf{c}_3$, $\textbf{c}_4$ are zero vectors, then we can get  $\textbf{c}_i\in \mathcal{C}_3$. 
    As $\mathcal{C}_3$ is an $[n,n-1,2]_q$ code, we have $w_H(\textbf{c})\geq 4$. 

    If one of $\textbf{c}_1$, $\textbf{c}_2$, $\textbf{c}_3$, $\textbf{c}_4$ is a nonzero vector, then we can get  $\textbf{c}_i\in \mathcal{C}_4$. 
    As $\mathcal{C}_4$ is an $[n,n-3,4]_q$ code, we have $w_H(\textbf{c})\geq 4$. 
   
    Hence, one can get the Hamming weight of $\textbf{c}$ is always greater than or equal to $3$. Hence, (i) holds.

    (ii) For $w_H(\textbf{c})=3$, only one of $\textbf{c}_1$, $\textbf{c}_2$, $\textbf{c}_3$, $\textbf{c}_4$ is a nonzero vector. 
    Since for any $\{(i_1,i_2,i_3): j_1n\leq  i_1<(j_1+1)n, \  j_2n\leq  i_2<(j_2+1)n, \  j_3n\leq i_3<(j_3+1)n,\ j_1\neq j_2\neq j_3 \ and\  j_1,j_2,j_3\in [0,3] \}$,
    from the form of the parity-check matrix,
    the corresponding column vectors of the parity-check matrix are linearly dependent
    if and only if $\alpha_{i_1}=\alpha_{i_2}=\alpha_{i_3}$, where the subscripts are reduced modulo $n$. 
    Hence, the corresponding codewords exist. (ii) holds.

    (iii) For $w_H(\textbf{c})=4$, any number of $\textbf{c}_1$, $\textbf{c}_2$, $\textbf{c}_3$ and  $\textbf{c}_4$ can be zero  vectors.
    Specially, if $\textbf{c}_1$, $\textbf{c}_2$, $\textbf{c}_3$ and  $\textbf{c}_4$ are all nonzero vectors, 
    then the nonzero vector $\textbf{c}_i$ must satisfy $|supp(\textbf{c}_i)|=1$.
    For $\{(i_1,i_2,i_3,i_4): j_1n\leq  i_1<(j_1+1)n, \  j_2n\leq  i_2<(j_2+1)n, \  j_3n\leq i_3<(j_3+1)n, \ j_4n \leq i_4< (j_4+1)n,\ j_1\neq j_2\neq j_3\neq j_4 \ and\  j_1,j_2,j_3,j_4\in [0,3] \}$, 
    the corresponding column vectors of the parity-check matrix are linearly dependent if and only if 
    with $\alpha_{i_1}+\alpha_{i_3}=\alpha_{i_2}+\alpha_{i_4}$, where the subscripts are reduced modulo $n$.     
    Hence, (iii) holds.

\end{proof}

With the help of the above lemma, the following theorem holds.

\begin{theorem}\label{the3.4}
    Suppose that $q$ is a power of a prime number $p\neq 3$ with $q\equiv 1\ ({\rm mod}\ 4)$, then for each $n\in [4,q]$, there exists an MDS $(4n,6)_q$ symbol-pair code.
\end{theorem}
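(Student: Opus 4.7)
The plan is to follow the permutation-based strategy of Theorems \ref{the3.1}--\ref{the3.3} adapted to length $4n$. First, I would take the evaluation vector $\textbf{a} = (\alpha_0, \ldots, \alpha_{n-1})$ to be the first $n$ entries of the coset-ordering $\mathcal{F}$ of $\mathbb{F}_q$ used in Theorem \ref{the3.1}, so that consecutive $\alpha_i$'s differ by $1$ except across at most one $\mathbb{F}_p$-coset boundary. Then I would introduce the natural four-block analogues of $\rho$ and $\phi$, namely $\rho(in+j) = i + 4j$ and the partial shift $\phi$ that fixes coordinates with $i \in \{0,2\}$ while sending $i + 4j$ to $i + 4(j+1 \bmod n)$ for $i \in \{1,3\}$. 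Set $\mathcal{D} = \phi(\rho(\mathcal{C}))$; this is a permutation-equivalent $[4n,4n-4,3]_q$ code, and a direct computation shows that every block of four consecutive new positions $4r, 4r+1, 4r+2, 4r+3$ corresponds to the original quadruple $(0,r), (1,r-1), (2,r), (3,r-1)$. Since the Singleton bound already forces $d_{sp}(\mathcal{D}) \leq 6$, it suffices to show $w_{sp}(\phi(\rho(\textbf{c}))) \geq 6$ for every nonzero codeword $\textbf{c}$, which I would do by classifying $\textbf{c}$ by Hamming weight using Lemma \ref{le3.4}.

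The case $w_H(\textbf{c}) \geq 5$ is automatic from $w_{sp} \geq w_H + 1$. For $w_H(\textbf{c}) = 3$, Lemma \ref{le3.4}(ii) places the three nonzero entries in three distinct blocks but in a common column $r$, so after the permutation they occupy three of the four positions $\{4r, 4r+2, 4r+5, 4r+7\}$; every such triple has pairwise cyclic gaps at least $2$, so each nonzero entry is followed by a zero, giving $w_{sp} = 3 + 3 = 6$. For $w_H(\textbf{c}) = 4$ the key observation is that an entry of original type $(j,m)$ lands at a new coordinate congruent to $j \pmod{4}$, so any four cyclically consecutive nonzero positions (the only configuration that can force $w_{sp} = 5$ with $w_H = 4$) must use exactly one entry from each block; this disposes of configurations (A), (B), and (C) of Lemma \ref{le3.4}(iii) immediately, and reduces the problem to configuration (D).

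The main obstacle is therefore configuration (D). Here I would enumerate the four possible starting positions $4r, 4r+1, 4r+2, 4r+3$ of a putative consecutive run; the four nonzero columns in the four blocks are easily read off from the sequence, and in each case the constraint $\alpha_{i_1}+\alpha_{i_3}=\alpha_{i_2}+\alpha_{i_4}$ from Lemma \ref{le3.4}(iii) rewrites as a relation among the three consecutive evaluation points $\alpha_{r-1}, \alpha_r, \alpha_{r+1}$. Two of the four starts reduce directly to $\alpha_{r-1} = \alpha_{r+1}$, contradicting distinctness; the remaining two give $\alpha_{r+1}+\alpha_r = 2\alpha_{r-1}$ or $2\alpha_{r+1} = \alpha_r + \alpha_{r-1}$, each of which collapses to $3 \equiv 0 \pmod{p}$ when the triple lies within a single $\mathbb{F}_p$-coset (excluded by $p \neq 3$), and forces $\chi_{i+1}-\chi_i \in \mathbb{F}_p$ when the triple straddles one coset boundary (contradicting $\chi_i, \chi_{i+1}$ being representatives of distinct cosets, exactly as in the boundary analysis of Theorem \ref{the3.1}). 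The hard part will be the bookkeeping of these boundary sub-cases, since the three consecutive $\alpha$'s can straddle the boundary in two distinct patterns and each of the four run-starts must be checked for each pattern; but once this is done, $d_{sp}(\mathcal{D}) = 6$ and $\mathcal{D}$ is the desired MDS $(4n,6)_q$ symbol-pair code.
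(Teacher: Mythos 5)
Your overall plan is the paper's plan (the MP code of (\ref{eq3.5}), the coset ordering $\mathcal{F}$, a permutation interleaving the four blocks, Lemma \ref{le3.4}, and the Singleton bound), but the specific permutation you chose breaks the construction. With your $\phi\circ\rho$, every window of four consecutive new coordinates carries the quadruple $(0,r),(1,r-1),(2,r),(3,r-1)$, so blocks $0$ and $2$ receive the \emph{same} column and blocks $1$ and $3$ receive the \emph{same} column. Now consider, for any $r$, the vector $\textbf{c}\in\mathbb{F}_q^{4n}$ with $c_{0,r}=c_{2,r}=c_{1,r-1}=c_{3,r-1}=1$ and all other entries $0$. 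This is a codeword of $\mathcal{C}$: against the parity-check matrix of (\ref{eq3.5}) the $H_1$-row gives $1+\omega^2+1+\omega^2=0$ and each $H_3$-row gives $\alpha_r^k(1+\omega^2)+\alpha_{r-1}^k(\omega+\omega^3)=0$, since $\omega^2=-1$; equivalently, in the MP description one may take $\textbf{c}_4=\textbf{0}$, $\textbf{c}_3$ a sum-zero vector, and $\textbf{c}_1,\textbf{c}_2$ arbitrary. Under your permutation its support is exactly the window $\{4r,4r+1,4r+2,4r+3\}$, hence $w_{sp}(\phi(\rho(\textbf{c})))=5$ and $d_{sp}(\mathcal{D})\le 5$: your $\mathcal{D}$ is not an MDS $(4n,6)_q$ symbol-pair code. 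The reason your case analysis missed this is that you took the displayed condition $\alpha_{i_1}+\alpha_{i_3}=\alpha_{i_2}+\alpha_{i_4}$ of Lemma \ref{le3.4} as an exhaustive description of the one-per-block weight-$4$ codewords; it is not --- the codewords above have $\alpha_{i_1}=\alpha_{i_3}$ and $\alpha_{i_2}=\alpha_{i_4}$ but violate that sum condition (compare Lemma \ref{le3.5}(ii), where exactly such supports do appear). The paper's permutation $\tau$ is not an interchangeable detail: it shifts block $0$ by two and block $2$ by one while fixing blocks $1$ and $3$, so inside every window the block-$0$ and block-$2$ columns differ, and precisely these ``diagonal'' codewords can never become four consecutive symbols; your symmetric choice (equal offsets for blocks $0,2$ and for blocks $1,3$) puts them on a window for every $r$.

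A secondary issue, independent of the one above: in the two surviving window relations you treat $\alpha_{r-1},\alpha_r,\alpha_{r+1}$ as three consecutive elements of $\mathcal{F}$, which fails when the column indices wrap modulo $n$. For instance the relation $\alpha_1+\alpha_0=2\alpha_{n-1}$ coming from the window through the wrap is genuinely satisfiable (take $q=p=13$, $n=8$, where $\alpha_7=7=2^{-1}$), so the wrap-around windows need a separate argument rather than the coset-boundary dichotomy you describe. To repair the proof you should adopt the paper's permutation (or any interleaving whose column offsets for blocks $0$ and $2$, and for blocks $1$ and $3$, are distinct within a window), and handle the finitely many wrap-around windows explicitly.
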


\begin{proof}
    Since in terms of cosets, $\mathbb{F}_q$ can be written as $\mathbb{F}_q \triangleq \bigcup_{i=0}^{\frac{q}{p}-1}(\chi_i+\mathbb{F}_p)$, where $\chi_0=0$. Let 
    $$\mathcal{F}=(0,1,2,\cdots,p-1,\chi_1,\chi_1+1,\cdots,\chi_1+p-1,\cdots \cdots,\chi_{\frac{q}{p}-1},\cdots,\chi_{\frac{q}{p}-1}+p-1 ),$$
    which consists of all the distinct elements in $\mathbb{F}_q$, and let $\textbf{a}=(\alpha_0,\alpha_1,\cdots,\alpha_{n-1})$ be a vector formed by the first $n$ elements of $\mathcal{F}$.
    
    Let $\mathcal{C}$ be an MP code as defined (\ref{eq3.5}) and $\textbf{c}=(c_0,c_1,\cdots,c_{4n-1})$ be a codeword of $\mathcal{C}$ with length $4n$, whose coordinates is indexed by the set $[0,4n-1]$.
    For each $l\in [0,4n-1]$, we write $l=in+j$, where $i\in\{0,1,2,3\}$, $j\in\{0,1,2,\cdots, n-1\}$.
    Then each entry of the vector $\textbf{c}$ can be represented as $c_{i,j}$. 
    Define a permutation $\psi $ as $\psi (in+j)=i+4j$ and a permutation $\tau $ as 

    \begin{equation}
    \tau (i+4j)=\left\{ 
    \begin{array}{l} \notag 
     i+4(j+2),\ {\rm if}\ i=0,\\
    \\
    i+4(j+1),\ {\rm if}\ i=2,\\
    \\
     i+4j,\ {\rm if}\ i=1,3.\\
    \end{array}
    \right.
    \end{equation}
    
    Namely,

$$c_{0,0},c_{0,1},c_{0,2},c_{0,3},c_{0,4},c_{0,5},\cdots,c_{0,n-1},c_{1,0},\cdots,c_{1,n-1},c_{2,0},\cdots,c_{2,n-1},c_{3,0},\cdots,c_{3,n-1}$$
$$\downarrow\ \psi  $$
$$c_{0,0},c_{1,0},c_{2,0},c_{3,0},c_{0,1},c_{1,1},c_{2,1},c_{3,1},c_{0,2},c_{1,2},c_{2,2},c_{3,2}\cdots,\cdots, c_{0,n-1},c_{1,n-1},c_{2,n-1},c_{3,n-1}$$
$$\downarrow\ \tau $$
$$c_{0,2},c_{1,0},c_{2,1},c_{3,0},c_{0,3},c_{1,1},c_{2,2},c_{3,1},c_{0,4},c_{1,2},c_{2,3},c_{3,2}\cdots,\cdots, c_{0,1},c_{1,n-1},c_{2,0},c_{3,n-1}.$$

    Suppose that $\mathcal{D}$ is a code that permutates equivalent to the code $\mathcal{C}$ under 
    the specific permutations  $\psi$ and $\tau$.
    Namely,
    \begin{equation*}
        \begin{aligned}
            \mathcal{D}: =&\tau(\psi(\mathcal{C}))  \\
                       : =&\{\tau(\psi(\textbf{c})), \  \forall \textbf{c}\in \mathcal{C}\}. \\
        \end{aligned}
    \end{equation*}

    The codes $\mathcal{C}$ and $\mathcal{D}$ have the same parameters $[4n,4n-4,3]_q$ because they are permutation equivalent.
    We will illustrate that $d_{sp}(\mathcal{D})=6$ with the help of the support distribution of the codewords $\textbf{c}$ of $\mathcal{C}$.
   
    From Lemma \ref{le3.4}, there are no codewords with Hamming weight less than $3$. 
    Combining with $w_{sp}(\textbf{c})\geq w_H(\textbf{c})+1$ for $w_H(\textbf{c})<3n$, and $w_{sp}(\textbf{c})=w_H(\textbf{c})$ for $w_H(\textbf{c})=3n$,
    we only need to discuss the cases $3 \leq w_H(\textbf{c})\leq  4$.

    Case I: $w_H(\textbf{c})=3$
    
    According to Lemma \ref{le3.4}, the support of $\textbf{c}$ of weight $3$ must satisfy 
    $\{(i_1,i_2,i_3): j_1n\leq  i_1<(j_1+1)n, \  j_2n\leq  i_2<(j_2+1)n, \  j_3n\leq i_3<(j_3+1)n,\ j_1\neq j_2\neq j_3 \ and\  j_1,j_2,j_3\in [0,3] \}$ 
    with $\alpha_{i_1}=\alpha_{i_2}=\alpha_{i_3}$, where the subscripts are reduced modulo $n$.   
    After the permutations $\psi$ and $\tau$, we can get $w_{sp}(\tau (\psi (\textbf{c})))=6$ 
    except the codewords have the following style:

    Style: $(0,\cdots,0,\star, \star, 0,\cdots,0, \star, 0,\cdots,0)$ or  $(0,\cdots,0,\star, \star, \star, 0,\cdots,0)$

    Since $\textbf{a}$ is a vector which is composed of the first $n$ elements of $\mathcal{F}$ with different elements,
    then the different entries $c_{i,j}$ of the codeword correspond to different $\alpha_j$.
    Hence, $\alpha_j$ is equal to $\alpha_{j^\prime}$ if and only if the second subscript $j$ of $c_{i,j}$ is equal to the second subscript $j^\prime$ of $c_{i^\prime,j^\prime}$.
    After the permutations,
    it is easy to see that there are no consecutive nonzero entries with the same second subscripts but different first subscripts such that the above cases hold.

    Case II: $w_H(\textbf{c})=4$

    If the support of $\textbf{c}$ satisfy
    $\{(i_1,i_2,i_3,i_4):  jn\leq i_1<i_2<i_3<i_4<(j+1)n,\ j\in [0,3] \} $ or
    $\{(i_1,i_2,i_3,i_4): j_1n\leq  i_1<i_2<(j_1+1)n, \ j_2n \leq i_3<i_4< (j_2+1)n,\  j_1\neq j_2\ and\ j_1,j_2 \in [0,3] \}$,
    after the permutations $\psi$ and $\tau$, 
    we can easily get $I\geq 2$ and $w_{sp}(\tau (\psi (\textbf{c}))) \geq 6$.

    If the support of $\textbf{c}$ satisfy 
    $\{(i_1,i_2,i_3,i_4): j_1n\leq  i_1<(j_1+1)n, \  j_2n\leq  i_2<(j_2+1)n, \  j_3n\leq i_3<(j_3+1)n, \ j_4n \leq i_4< (j_4+1)n,\ j_1\neq j_2\neq j_3\neq j_4 \ and\  j_1,j_2,j_3,j_4\in [0,3] \}$  
    with $\alpha_{i_1}+\alpha_{i_3}=\alpha_{i_2}+\alpha_{i_4}$, where the subscripts are reduced modulo $n$,   
    then $w_{sp}(\tau (\psi (\textbf{c})))\geq 6$ except for the cases where the codeword must have four consecutive non-zero entries:
    
    Case II-1: $c_{0,m+2},c_{1,m},c_{2,m+1},c_{3,m}$, where $m\in [0,n-1]$. 

    Case II-2: $c_{1,m},c_{2,m+1},c_{3,m},c_{0,m+3}$, where $m\in [0,n-1]$. 

    Case II-3: $c_{2,m+1},c_{3,m},c_{0,m+3},c_{1,m+1}$, where $m\in [0,n-1]$. 

    Case II-4: $c_{3,m},c_{0,m+3},c_{1,m+1},c_{2,m+2}$, where $m\in [0,n-1]$. 

    Here we only specify that Case II-1 is contradictory to $\alpha_{i_1}+\alpha_{i_3}=\alpha_{i_2}+\alpha_{i_4}$ ($\alpha_{m}+\alpha_{m}=\alpha_{m+1}+\alpha_{m+2}$ for  Case II-1) and Case II-2 to Case II-4 are similar.
    
    Let the subsets $\mathbf{U}_1$ and $\mathbf{U}_2$ of $\mathcal{F}$ be
    $$\mathbf{U}_1=\{p-2, \chi_1+p-2, \chi_2+p-2, \cdots, \chi_{\frac{q}{p}-1}+p-2\},$$
    and 
    $$\mathbf{U}_2=\{p-1, \chi_1+p-1, \chi_2+p-1, \cdots, \chi_{\frac{q}{p}-1}+p-1\}.$$

    If $\alpha_{m}$ belongs to $ \mathbf{U}_1$,
    then it may be assumed that  $\alpha_{m}=\chi_{i_1}+p-2$, where $i_1\in \{0,1,2,\cdots, \frac{q}{p}-1 \}$. 
    Then $\alpha_{m+1}=\chi_{i_1}+p-1$ and  $\alpha_{m+2}=\chi_{i_1+1}$.
    Since $\alpha_{m}+\alpha_{m}=\alpha_{m+1}+\alpha_{m+2}$, 
    we can get $\chi_{i_1}=\chi_{i_1+1}+3$, which clearly does not hold due to $p\neq 3$.

    If $\alpha_{m}$ belongs to $ \mathbf{U}_2$,
    then it may be assumed that  $\alpha_{m}=\chi_{i_1}+p-1$, where $i_1\in \{0,1,2,\cdots, \frac{q}{p}-1 \}$. 
    Then $\alpha_{m+1}=\chi_{i_1+1}$ and  $\alpha_{m+2}=\chi_{i_1+1}+1$.
    Since $\alpha_{m}+\alpha_{m}=\alpha_{m+1}+\alpha_{m+2}$, we can get $2\chi_{i_1}=2\chi_{i_1+1}+3$, which clearly does not hold due to $p\neq 3$.

    Otherwise, if $\alpha_{m}\notin (\mathbf{U}_1\cup \mathbf{U}_2)$, then $\alpha_{m+1}=\alpha_{m}+1$ and $\alpha_{m+2}=\alpha_{m}+2$.
    So $\alpha_{m+1}+\alpha_{m+2}=\alpha_{m}+\alpha_{m}+3$, which contradicts $\alpha_{m+1}+\alpha_{m+2}=\alpha_{m}+\alpha_{m}$ since $p\neq 3$.

    Therefore, by discussing the codewords satisfying $3 \leq w_H(\textbf{c})\leq  4$, we can get $w_{sp}(\mathcal{D})=w_{sp}(\mathcal{\tau (\psi (\textbf{c}))})\geq 6$.
    According to Lemma \ref{le1.1}, $d_{sp}(\mathcal{D})\leq 4n-(4n-4)+2=6$. 
    So $d_{sp}(\mathcal{D})=6$ and $\mathcal{D}$ is an MDS $(4n,6)_q$ symbol-pair code.

\end{proof}

\subsubsection{Symbol-pair distance $d_{sp}(\mathcal{C})=7$ }

Assume that $\mathcal{C}_1$ is the code with parameters $[n,n,1]_q$, $\mathcal{C}_2=\mathcal{C}_3$ is the GRS code $GRS_{1}$ with parameters $[n,n-1,2]_q$,
and $\mathcal{C}_4$ is the GRS code $GRS_{3}$ with parameters $[n,n-4,5]_q$.

Define the MP code
\begin{align}\label{eq3.6}
    \mathcal{C}=[\mathcal{C}_1,\mathcal{C}_2,\mathcal{C}_3,\mathcal{C}_4]\cdot A.
\end{align}
Obviously, $\mathcal{C}$ is a $[4n,4n-6]_q$ code, and from Corollary \ref{co3.1}, the parity-check matrix for $\mathcal{C}$ is shown below:

 \[\begin{split} H &=\frac{1}{4}\begin{pmatrix}
    H_1    & \omega^3H_1  & \omega^2H_1   & \omega H_1 \\
    H_1   & \omega^2 H_1 & H_1 & \omega^2 H_1\\
    H_4   & \omega H_4  & \omega^2H_4 & \omega^3 H_4\\
\end{pmatrix},\\
\setcounter{MaxMatrixCols}{14}
&=\begin{pmatrix}
    1             & \cdots       & 1               & \omega^3           & \cdots     & \omega^3             & \omega^2            & \cdots     & \omega^2 & \omega  & \cdots  & \omega  \\
    1             & \cdots       & 1               &\omega^2           & \cdots      & \omega^2             & 1                   & \cdots    & 1         & \omega^2   & \cdots  & \omega^2   \\
    1             & \cdots       & 1               &    \omega          & \cdots     & \omega               &    \omega^2         & \cdots    & \omega^2     &    \omega^3        & \cdots         & \omega^3     \\                                             
    \alpha_0      & \cdots       & \alpha_{n-1}    &  \omega\alpha_0   & \cdots      & \omega\alpha_{n-1}   &  \omega^2\alpha_0    & \cdots & \omega^2\alpha_{n-1} &    \omega^3\alpha_0  & \cdots    & \omega^3\alpha_{n-1}  \\
    \alpha^2_0    & \cdots       & \alpha^2_{n-1}  &   \omega\alpha^2_0& \cdots      & \omega\alpha^2_{n-1} &   \omega^2\alpha^2_0 & \cdots   &  \omega^2\alpha^2_{n-1} &    \omega^3\alpha^2_0    & \cdots  & \omega^3\alpha^2_{n-1}  \\
    \alpha^3_0    & \cdots       & \alpha^3_{n-1}  &   \omega\alpha^3_0& \cdots      & \omega\alpha^3_{n-1} &   \omega^2\alpha^3_0 & \cdots   &  \omega^2\alpha^3_{n-1} &    \omega^3\alpha^3_0    & \cdots  & \omega^3\alpha^3_{n-1}  \\
\end{pmatrix}.
\end{split} \]

The following lemma determines the support for codewords in $\mathcal{C}$ whose Hamming weights do not exceed $5$.

\begin{lemma}\label{le3.5}
    Let $\mathcal{C}$ be the MP code as defined in (\ref{eq3.6}), and $\textbf{c}$ be a codeword of $\mathcal{C}$ with coordinates indexed by the set $[0,4n-1]$.
    Then the following results hold.
    
    (i) There are no codewords in code $\mathcal{C}$ with Hamming weight less than $4$.

    (ii)  If $w_H(\textbf{c})=4$, then the support  of $\textbf{c}$ must satisfy 
    $\{(i_1,i_2,i_3,i_4): j_1n\leq  i_1<i_2<(j_1+1)n, \  j_2n\leq  i_3<i_4<(j_2+1)n,\  j_1,j_2\in [0,3]  \}$
    with $\alpha_{i_1}=\alpha_{i_3}$ and $ \alpha_{i_2}=\alpha_{i_4}$, where the subscripts are reduced modulo $n$.   
    or
    $\{(i_1,i_2,i_3,i_4): j_1n\leq  i_1<(j_1+1)n, \  j_2n\leq  i_2<(j_2+1)n, \  j_3n\leq i_3<(j_3+1)n, \ j_4n \leq i_4< (j_4+1)n,\ j_1\neq j_2\neq j_3\neq j_4 \ and\  j_1,j_2,j_3,j_4\in [0,3] \}$ 
    with $\alpha_{i_1}=\alpha_{i_3}$ and $ \alpha_{i_2}=\alpha_{i_4}$, where the subscripts are reduced modulo $n$.   

    (iii) If $w_H(\textbf{c})=5$, then the support of $\textbf{c}$ must satisfy 
    $\{(i_1,i_2,i_3,i_4,i_5):  jn\leq i_1<i_2<i_3<i_4<i_5<(j+1)n,\ j\in [0,3]  \} $ or 
    $\{(i_1,i_2,i_3,i_4,i_5): j_1n\leq  i_1<i_2<i_3<(j_1+1)n, \  j_2n\leq  i_4<i_5<(j_2+1)n,\  j_1,j_2\in [0,3]  \}$
    or $\{(i_1,i_2,i_3,i_4,i_5): j_1n\leq  i_1<i_2<(j_1+1)n, \  j_2n\leq  i_3<(j_2+1)n, \  j_3n\leq i_4<(j_3+1)n, \ j_4n \leq i_5< (j_4+1)n,\ j_1\neq j_2\neq j_3\neq j_4 \ and\  j_1,j_2,j_3,j_4\in [0,3]  \}$ 
    with $\alpha_{i_3}= \alpha_{i_1}$ or  $\alpha_{i_2}$,
    $\alpha_{i_4}= \alpha_{i_1}$ or  $\alpha_{i_2}$, and
    $\alpha_{i_5}= \alpha_{i_1}$ or  $\alpha_{i_2}$,
    where the subscripts are reduced modulo $n$.

\end{lemma}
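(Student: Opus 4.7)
The plan is to mirror the proof of Lemma~3.4: I would write $\textbf{c}=(\textbf{c}_1,\textbf{c}_2,\textbf{c}_3,\textbf{c}_4)$ with each $\textbf{c}_j\in\mathbb{F}_q^n$, and combine Lemma~\ref{le2.3} with an analysis of linear dependencies among columns of the parity-check matrix $H$ of $\mathcal{C}$. If $k$ of the blocks $\textbf{c}_j$ are zero, each nonzero block lies in $\mathcal{C}_{k+1}$, so $w_H(\textbf{c})\geq(4-k)\,d(\mathcal{C}_{k+1})\in\{4,6,4,5\}$ for $k=0,1,2,3$. This proves (i) and restricts the feasible block distributions to $k\in\{0,2\}$ when $w_H=4$ and to $k\in\{0,2,3\}$ when $w_H=5$.

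The structural backbone I would set up is the behaviour of the top three rows of $H$ (arising from the parity-check matrices of $\mathcal{C}_2$, $\mathcal{C}_3$, and the first row of the parity-check matrix of $\mathcal{C}_4$). These rows are constant within each block and carry $\omega$-power block coefficients $(1,\omega^3,\omega^2,\omega)$, $(1,\omega^2,1,\omega^2)$, $(1,\omega,\omega^2,\omega^3)$; a short calculation using $\omega^2=-1$ shows that this $3\times 4$ coefficient matrix has rank three with null space spanned by $(1,1,1,1)$. Consequently, for every nontrivial linear combination of columns of $H$ summing to zero, the block-coefficient sums $S_j$ must all equal a common scalar $\lambda$, and this $\lambda$ is forced to vanish as soon as any block is empty.

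For (ii), when $k=2$ this forces $S_{j_1}=S_{j_2}=0$ and reduces the three $\alpha$-rows to a single proportionality between the vectors $(\alpha_{k_1}^r-\alpha_{k_2}^r)_{r=1,2,3}$ and $(\alpha_{k_3}^r-\alpha_{k_4}^r)_{r=1,2,3}$; factoring out $(\alpha_{k_i}-\alpha_{k_j})$ and observing that the remaining factor $(1,\alpha+\beta,\alpha^2+\alpha\beta+\beta^2)$ determines the unordered pair $\{\alpha,\beta\}$ then forces $\{\alpha_{k_1},\alpha_{k_2}\}=\{\alpha_{k_3},\alpha_{k_4}\}$, which is the first type in (ii). When $k=0$ each $\lambda_j=\lambda$ and the $\alpha$-rows collapse to $(\alpha_{k_0}^r-\alpha_{k_2}^r)+\omega(\alpha_{k_1}^r-\alpha_{k_3}^r)=0$ for $r=1,2,3$; the same factoring together with $\omega\neq\pm 1$ rules out any nondegenerate proportional resolution, so both differences must vanish, giving the second type in (ii).

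For (iii), the case $k=3$ (five positions in one block) is immediate since $\mathcal{C}_4=GRS_4$ is MDS of distance five and any five positions form a valid support; the case $k=2$ records the $3{+}2$ shape of the support distribution across the two nonzero blocks. The main obstacle will be the case $k=0$: one block carries weight two at positions $k_1,k_2$ and each of the other three blocks carries a single position $k_3,k_4,k_5$. Here the top three rows give $\lambda_1+\lambda_2=\lambda_3=\lambda_4=\lambda_5=\lambda$; setting $t=\lambda_1/\lambda$ and $(x,y,a,b,c):=(\alpha_{k_1},\alpha_{k_2},\alpha_{k_3},\alpha_{k_4},\alpha_{k_5})$, the three $\alpha$-rows package into a single vector identity $t\,\mathbf{V}(x,y)=\mathbf{V}(b,y)+\omega\,\mathbf{V}(c,a)$, where $\mathbf{V}(u,v):=(u-v,\,u^2-v^2,\,u^3-v^3)$. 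I would solve the first coordinate for $t$, substitute into the second and third, and use the structural fact that two $\mathbf{V}$-vectors are proportional iff their underlying unordered pairs agree (or one of them is the zero vector); combining this with $\omega^2=-1$ and $1+\omega\neq 0$ should rule out any configuration with $b\notin\{x,y\}$ or $\{a,c\}\nsubseteq\{x,y\}$, thereby forcing $a,b,c\in\{x,y\}$ as claimed.
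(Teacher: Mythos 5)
Your plan follows the intended route---the paper itself omits this proof, saying only that it is ``similar to Lemma \ref{le3.4}''---and most of it is sound: the weight bounds from Lemma \ref{le2.3} give (i); your observation that the three block-constant parity rows form a rank-$3$ system with null space spanned by $(1,1,1,1)$, so that all block sums equal a common $\lambda$ which must vanish as soon as some block is empty, is correct; and the factorization $\mathbf{V}(u,v)=(u-v)\,(1,\,u+v,\,u^2+uv+v^2)$ together with $\omega\neq\pm 1$ does settle both configurations of (ii) and the first two shapes in (iii).

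The step you defer (``should rule out any configuration with $b\notin\{x,y\}$ or $\{a,c\}\nsubseteq\{x,y\}$'') is, however, a genuine gap, and it cannot be closed, because the assertion it is meant to prove is false. Your pair-determination fact controls a proportionality between \emph{two} $\mathbf{V}$-vectors, but $t\,\mathbf{V}(x,y)=\mathbf{V}(b,y)+\omega\,\mathbf{V}(c,a)$ involves three; equivalently, the moment conditions $\sum_{\beta}g(\beta)\beta^{r}=0$, $r=0,1,2,3$, force coincidences of evaluation points only when at most four distinct points are involved (which is exactly why the weight-$4$ analysis works), whereas with five distinct points one only needs the combined coefficients $(t,\,1-t,\,\omega,\,-1,\,-\omega)$ to be proportional to $\bigl(1/P'(\beta)\bigr)_{\beta}$ with $P(z)=\prod_{\beta}(z-\beta)$, and this is solvable. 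Concretely, take $q=13$, $\omega=5$, $n=8$, $\alpha_j=j$, and the vector with $c_{0,1}=6$, $c_{0,7}=8$, $c_{1,2}=c_{2,0}=c_{3,4}=1$ and all other entries zero: every block sum equals $1$, and $\sum_{i=0}^{3}\omega^{i}\sum_{j}c_{i,j}\alpha_j^{r}\equiv 0\pmod{13}$ for $r=1,2,3$, so all six rows of the parity-check matrix of the code in (\ref{eq3.6}) are satisfied. This is a weight-$5$ codeword of the $2{+}1{+}1{+}1$ shape whose singleton points $2,0,4$ are disjoint from the doubleton points $1,7$, contradicting part (iii) as stated. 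Hence no completion of your sketch (or any argument) can establish (iii); the lemma itself must be weakened so as to record the true constraint in this case, and the five-consecutive-entry analysis in the proof of Theorem \ref{the3.5} revisited accordingly (for what it is worth, this particular codeword still has symbol-pair weight $8$ after $\tau\circ\psi$, so the theorem is not immediately refuted).
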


\begin{proof}
    The proof is similar to Lemma \ref{le3.4}, we omit it here.
\end{proof}

With the help of the above lemma, the following theorem holds.
        
\begin{theorem}\label{the3.5}
    Suppose that $q$ is a power of a prime number $p$ with $q\equiv 1\ ({\rm mod}\ 4)$,  then for each $n\in [5,q]$, there exists an AMDS $(4n,7)_q$ symbol-pair code.
\end{theorem}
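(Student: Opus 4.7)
The plan is to mirror the strategy used in Theorem~\ref{the3.4}: take the MP code $\mathcal{C}$ defined in (\ref{eq3.6}), permute its coordinates by a carefully chosen pair of permutations $\psi$ and $\tau$ to obtain a code $\mathcal{D}=\tau(\psi(\mathcal{C}))$, and then show via the support analysis of Lemma~\ref{le3.5} that $d_{sp}(\mathcal{D})=7$. First I would fix the evaluation vector $\textbf{a}=(\alpha_0,\ldots,\alpha_{n-1})$ as the first $n$ elements of the ordered enumeration $\mathcal{F}$ of $\mathbb{F}_q$ by cosets of $\mathbb{F}_p$, exactly as in the earlier theorems. I would then reuse the index reshuffle $\psi(in+j)=i+4j$ from Theorem~\ref{the3.4} together with a block-dependent cyclic shift $\tau$ (possibly tuned differently from the one in Theorem~\ref{the3.4}) chosen to spread the four block indices across residues mod $4$ in a way that prevents any admissible weight-$4$ support from collapsing into a single run. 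Since $\mathcal{D}$ and $\mathcal{C}$ are permutation equivalent, $\mathcal{D}$ has parameters $[4n,4n-6,5]_q$, so the Singleton bound (Lemma~\ref{le1.1}) gives $d_{sp}(\mathcal{D})\le 4n-(4n-6)+2=8$; AMDS then amounts to proving $d_{sp}(\mathcal{D})\ge 7$ together with the existence of a codeword attaining $w_{sp}=7$.

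The core of the proof is a case analysis by Hamming weight. By Lemma~\ref{le3.5}(i) no codeword has $w_H<4$, and the inequality $w_{sp}(\textbf{c})\ge w_H(\textbf{c})+1$ for $w_H(\textbf{c})<4n$ reduces the task to $w_H(\textbf{c})\in\{4,5\}$. For $w_H=4$ I would split into the two support patterns of Lemma~\ref{le3.5}(ii): the $2{+}2$ type (two blocks $j_1\ne j_2$, each contributing two nonzero entries with matched column indices modulo $n$) and the $1{+}1{+}1{+}1$ type (one nonzero entry per block, with the first and third column indices equal and the second and fourth equal). In both situations I would compute, for every pair $(j_1,j_2)$ and every matched column-index pair $(a,b)$, the four positions occupied in $\tau(\psi(\textbf{c}))$ and bound the number of consecutive runs from below by $3$, which yields $w_{sp}\ge w_H+3=7$. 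For $w_H=5$, the support types of Lemma~\ref{le3.5}(iii) directly provide at least two nonzero runs after permutation, so $w_{sp}\ge 7$ follows without requiring further arithmetic in $\mathbb{F}_p$.

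The hard part, and where I expect the proof to hinge, is the $1{+}1{+}1{+}1$ case at $w_H=4$. The matched-index conditions $\alpha_{i_1}=\alpha_{i_3}$ and $\alpha_{i_2}=\alpha_{i_4}$ combined with the parity-check rows from $H_4$ force the nonzero entries to sit at $c_{0,a},c_{1,b},c_{2,a},c_{3,b}$ for some $a,b\in[0,n-1]$, so all four positions share only two underlying column indices. With a naive permutation a pair $(a,b)$ with $b$ close to $a$ can send all four entries into four cyclically consecutive positions, producing $w_{sp}=5$ and breaking the desired lower bound. The combinatorial delicacy lies in choosing the shifts inside $\tau$ so that for every admissible $(a,b)$ at most one adjacency among the four new positions can occur; this is where cases like $b\equiv a\pm 1\pmod n$ or $b\equiv a\pm 2\pmod n$ must be treated individually, analogously to the Case~II-1--II-4 dichotomy of Theorem~\ref{the3.4}, and where the assumption $n\ge 5$ is used to ensure that the cyclic wrap-around does not create extra adjacencies.

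Once the lower bound $d_{sp}(\mathcal{D})\ge 7$ is established, the AMDS claim would be finished by exhibiting an explicit codeword attaining $w_{sp}=7$: a natural candidate is a Type~2 weight-$4$ codeword of the form $c_{0,a}=c_{1,a+1}=c_{2,a}=c_{3,a+1}=1$ (valid since $1+\omega^2=0$ makes all six parity checks vanish), whose image under $\tau\circ\psi$ provides, by the same permutation analysis, exactly three nonzero runs and thus $w_{sp}=7$. Combined with the Singleton bound $d_{sp}(\mathcal{D})\le 8$, this yields $d_{sp}(\mathcal{D})=7$ and confirms that $\mathcal{D}$ is an AMDS $(4n,7)_q$ symbol-pair code.
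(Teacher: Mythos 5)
Your plan follows the paper's own route almost exactly: same MP code (\ref{eq3.6}), same reshuffle $\psi$, a block-dependent shift $\tau$, reduction to $w_H(\textbf{c})\in\{4,5\}$ via Lemma \ref{le3.5}, and the observation that $n\ge 5$ kills the residual adjacency patterns. Two concrete points need fixing, however. First, you leave $\tau$ unspecified (``possibly tuned differently''), but the choice of $\tau$ is where the whole content lies. The paper's $\tau$ arranges window $m$ as $c_{0,m},c_{2,m-1},c_{1,m+1},c_{3,m}$; with this ordering same-block entries and the matched-column pairs $(c_{0,a},c_{2,a})$, $(c_{1,b},c_{3,b})$ are never adjacent, and the only adjacencies a weight-$4$ codeword of the $1{+}1{+}1{+}1$ type can have are $c_{2,a}$ next to $c_{1,b}$ (forcing $b\equiv a+2$) and $c_{3,b}$ next to $c_{0,a}$ (forcing $a\equiv b+1$), which cannot hold simultaneously modulo $n$ when $n\ge 5$. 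Without exhibiting such a $\tau$, the central ``at most one adjacency'' claim is unverified. Also, in the $w_H=5$ case the $2{+}1{+}1{+}1$ support can combinatorially occupy five consecutive positions (columns $m-1,m,m+1$), so you still must invoke the column conditions of Lemma \ref{le3.5}(iii) together with distinctness of the $\alpha_j$ to exclude a single run of five; it is not a consequence of the support shape alone (though you are right that no coset/characteristic arithmetic is needed there).

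Second, your explicit witness for $w_{sp}=7$ fails under the paper's $\tau$: the all-ones vector supported on $c_{0,a},c_{1,a+1},c_{2,a},c_{3,a+1}$ is indeed a codeword, but after $\tau\circ\psi$ its entries sit at positions $4a,\,4a+2,\,4a+5,\,4a+7$, pairwise non-adjacent, so it has $w_{sp}=8$, not $7$. The correct witness takes $b\equiv a+2$ (or $b\equiv a-1$): then exactly one adjacency occurs ($c_{2,a}$ next to $c_{1,a+2}$, respectively $c_{3,a-1}$ next to $c_{0,a}$) and $w_{sp}=7$. Your instinct to supply such a witness is actually a good addition---the paper only proves $d_{sp}(\mathcal{D})\ge 7$ and never exhibits a codeword of pair-weight $7$, although the AMDS $(4n,7)_q$ claim requires $d_{sp}$ to equal $7$ exactly---but the candidate must be corrected. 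A minor slip: $\mathcal{D}$ has parameters $[4n,4n-6,4]_q$, not minimum Hamming distance $5$, since weight-$4$ codewords of the above type exist for every $a\neq b$.
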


\begin{proof}
    Since in terms of cosets, we can write $\mathbb{F}_q \triangleq \bigcup_{i=0}^{\frac{q}{p}-1}(\chi_i+\mathbb{F}_p)$, where $\chi_0=0$. Let 
    $$\mathcal{F}=(0,1,2,\cdots,p-1,\chi_1,\chi_1+1,\cdots,\chi_1+p-1,\cdots \cdots,\chi_{\frac{q}{p}-1},\cdots,\chi_{\frac{q}{p}-1}+p-1 ),$$
    which consists of all the distinct elements in $\mathbb{F}_q$, and let $\textbf{a}$ be a vector formed by the first $n$ elements of $\mathcal{F}$.
 
    Let $\mathcal{C}$ be an MP code as defined (\ref{eq3.6}) and $\textbf{c}=(c_0,c_1,\cdots,c_{4n-1})$ be a codeword of $\mathcal{C}$ with length $4n$, whose coordinates is indexed by the set $[0,4n-1]$.
    For each $l\in [0,4n-1]$, we write $l=in+j$, where $i\in\{0,1,2,3\}$, $j\in\{0,1,2,\cdots, n-1\}$.
    Then each entry of the vector $\textbf{c}$ can be represented as $c_{i,j}$. 
    Define a permutation $\psi $ as $\psi (in+j)=i+4j$ and a permutation $\tau $ as 

    \begin{equation}
        \tau (i+4j)=\left\{ 
        \begin{array}{l} \notag 
         i+4j,\ {\rm if}\ i=0,3,\\
        \\
        (i+1)+4(j+1),\ {\rm if}\ i=1,\\
        \\
        (i-1)+4(j-1),\ {\rm if}\ i=2.\\
        \end{array}
        \right.
    \end{equation}

    Namely,
    $$c_{0,0},c_{0,1},c_{0,2},c_{0,3},c_{0,4},c_{0,5},\cdots,c_{0,n-1},c_{1,0},\cdots,c_{1,n-1},c_{2,0},\cdots,c_{2,n-1},c_{3,0},\cdots,c_{3,n-1}$$
    $$\downarrow\ \psi  $$
    $$c_{0,0},c_{1,0},c_{2,0},c_{3,0},c_{0,1},c_{1,1},c_{2,1},c_{3,1},c_{0,2},c_{1,2},c_{2,2},c_{3,2}\cdots,\cdots, c_{0,n-1},c_{1,n-1},c_{2,n-1},c_{3,n-1}$$
    $$\downarrow\ \tau $$
    $$c_{0,0},c_{2,n-1},c_{1,1},c_{3,0}, c_{0,1},c_{2,0},c_{1,2},c_{3,1},c_{0,2},c_{2,1},c_{1,3},c_{3,2}\cdots,\cdots, c_{0,n-1},c_{2,n-2},c_{1,0},c_{3,n-1}$$

    Suppose that $\mathcal{D}$ is a code that permutates equivalent to the code $\mathcal{C}$ under 
    the specific permutations  $\psi$ and $\tau$.
    Namely,
    \begin{equation*}
        \begin{aligned}
            \mathcal{D}: =&\tau(\psi(\mathcal{C}))  \\
                       : =&\{\tau(\psi(\textbf{c})), \  \forall \textbf{c}\in \mathcal{C}\}. \\
        \end{aligned}
    \end{equation*}

    The codes $\mathcal{C}$ and $\mathcal{D}$ have the same parameters $[4n,4n-6,4]_q$ because they are permutation equivalent.
    From Lemma \ref{le3.5}, there are no codewords with Hamming weight less than $4$. 
    Combining with $w_{sp}(\textbf{c})\geq w_H(\textbf{c})+1$ for $w_H(\textbf{c})<4n$, and $w_{sp}(\textbf{c})=w_H(\textbf{c})$ for $w_H(\textbf{c})=4n$,
    We only need to discuss $4\leq w_H(\textbf{c})\leq 5$ in order to show that $d_{sp}(\mathcal{D})=7$.

    Case I: $w_H(\textbf{c})=4$

     If the support of $\textbf{c}$ satisfy 
     $\{(i_1,i_2,i_3,i_4): j_1n\leq  i_1<i_2<(j_1+1)n, \  j_2n\leq  i_3<i_4<(j_2+1)n,\  j_1,j_2\in [0,3]  \}$
     with $\alpha_{i_1}=\alpha_{i_3}$ and $ \alpha_{i_2}=\alpha_{i_4}$, where the subscripts are reduced modulo $n$,
     then $w_{sp}(\mathcal{\tau (\psi (\textbf{c}))})\geq 7$ except the codewords have the following style:

     Style: $(0,\cdots,0,\star, \star, 0,\cdots,0, \star,\star, 0,\cdots,0)$.

     Obviously, after the permutations,
     it is easy to see that there are no consecutive nonzero entries of $c_{i,j}$ with the same second subscripts,
     which contradicts $\alpha_{i_1}=\alpha_{i_3}$ or $ \alpha_{i_2}=\alpha_{i_4}$.
 
     If the support of $\textbf{c}$ satisfy 
     $\{(i_1,i_2,i_3,i_4): j_1n\leq  i_1<(j_1+1)n, \  j_2n\leq  i_2<(j_2+1)n, \  j_3n\leq i_3<(j_3+1)n, \ j_4n \leq i_4< (j_4+1)n,\ j_1\neq j_2\neq j_3\neq j_4 \ and\  j_1,j_2,j_3,j_4\in [0,3] \}$ 
     with $\alpha_{i_1}=\alpha_{i_3}$ and $ \alpha_{i_2}=\alpha_{i_4}$, where the subscripts are reduced modulo $n$.   
     Then $w_{sp}(\mathcal{\tau (\psi (\textbf{c}))})\geq 7$ except the codewords have the following style:

     Style: $(0,\cdots,0,\star, \star, 0,\cdots,0, \star,\star, 0,\cdots,0)$ or $(0,\cdots,0,\star, \star, \star,0,\cdots,0, \star, 0,\cdots,0)$
     
     Similarly, after the permutations, there are no consecutive nonzero entries of $c_{i,j}$ with the same second subscripts,
     which contradicts $\alpha_{i_1}=\alpha_{i_3}$ or $ \alpha_{i_2}=\alpha_{i_4}$.

    Case II: $w_H(\textbf{c})=5$

    If the support of $\textbf{c}$ satisfy
    $\{(i_1,i_2,i_3,i_4,i_5):  jn\leq i_1<i_2<i_3<i_4<i_5<(j+1)n,\ j\in [0,3]  \} $ or 
    $\{(i_1,i_2,i_3,i_4,i_5): j_1n\leq  i_1<i_2<i_3<(j_1+1)n, \  j_2n\leq  i_4<i_5<(j_2+1)n,\  j_1,j_2\in [0,3]  \}$,
    after the permutations $\psi$ and $\tau$, 
    we can easily get $I\geq 3$ and $w_{sp}(\tau (\psi (\textbf{c}))) \geq 8$.

    If the support of $\textbf{c}$ satisfy 
    $\{(i_1,i_2,i_3,i_4,i_5): j_1n\leq  i_1<i_2<(j_1+1)n, \  j_2n\leq  i_3<(j_2+1)n, \  j_3n\leq i_4<(j_3+1)n, \ j_4n \leq i_5< (j_4+1)n,\ j_1\neq j_2\neq j_3\neq j_4 \ and\  j_1,j_2,j_3,j_4\in [0,3]  \}$ 
    with   $\alpha_{i_3}= \alpha_{i_1}$ or  $\alpha_{i_2}$,
    $\alpha_{i_4}= \alpha_{i_1}$ or  $\alpha_{i_2}$, and
    $\alpha_{i_5}= \alpha_{i_1}$ or  $\alpha_{i_2}$,
    where the subscripts are reduced modulo $n$, 
    then $w_{sp}(\tau (\psi (\textbf{c})))\geq 7$ except for the cases where the codeword must have five consecutive non-zero entries:
  
    Case II-1: $c_{0,m},c_{2,m-1},c_{1,m+1},c_{3,m},c_{0,m+1}$, where $m\in [0,n-2]$. 

    Case II-2: $c_{2,m-1},c_{1,m+1},c_{3,m},c_{0,m+1},c_{2,m}$, where $m\in [0,n-2]$. 

    Case II-3: $c_{1,m+1},c_{3,m},c_{0,m+3},c_{2,m+1},c_{1,m+2}$, where $m\in [0,n-2]$. 

    Case II-4: $c_{3,m},c_{0,m+1},c_{2,m},c_{1,m+2},c_{3,m+1},$, where $m\in [0,n-2]$.

   Since $\textbf{a}$ is a vector which consists of the first $n$ elements of  $\mathcal{F}$ with different elements,
   then we can get $\alpha_{m-1}\neq \alpha_{m}\neq \alpha_{m+1}$.
   This is a contradiction due to 
   $\alpha_{i_3}= \alpha_{i_1}$ or  $\alpha_{i_2}$,
   $\alpha_{i_4}= \alpha_{i_1}$ or  $\alpha_{i_2}$, and
   $\alpha_{i_5}= \alpha_{i_1}$ or  $\alpha_{i_2}$,
   (for example, $\alpha_{m-1}\neq \alpha_{m}$ and  $\alpha_{m-1}\neq \alpha_{m+1}$ for Case II-1)

    Therefore,  by discussing the codewords satisfying $3 \leq w_H(\textbf{c})\leq  5$, we can get $w_{sp}(\mathcal{D})=w_{sp}(\tau (\psi (\textbf{c}))) \geq 7$ and $\mathcal{D}$ is an AMDS $(4n,7)_q$ symbol-pair code.

\end{proof}

In the following, for Theorems 3.4-3.5, we will respectively give an example.
    
\begin{example}\label{example3.4}
    Let $q=p=5$, and $\mathbb{F}_5=\{0,1,2,3,4\}$.
    Taking $\omega=2$ be a primitive $4$-th root of unity and 
    $$ A=\begin{pmatrix}
        1   & 1   & 1 & 1 \\
        1   & 2   & 4 & 3   \\
        1   & 4   & 1 & 4\\
        1   & 3   & 4 & 2\\

        \end{pmatrix}.$$

Then we can get 
    $$ (A^{-1})^T=\frac{1}{4}\begin{pmatrix}
        1   & 1   & 1 & 1 \\
        1   & 3   & 4 & 2  \\
        1   & 4   & 1 & 4 \\
        1   & 2   &4  & 3   \\
    \end{pmatrix}.$$

    Assume that $\mathcal{C}_1=\mathcal{C}_2$ is the code with parameters $[5,5,1]_5$, 
$\mathcal{C}_3$ is the GRS code with parameters $[5,4,2]_5$ whose parity-check matrix is 
$$ H_1=\begin{pmatrix}
    1   & 1   & 1 &  1 &  1  \\
\end{pmatrix},$$

$\mathcal{C}_3$ is the GRS code with parameters $[5,2,4]_5$ whose parity-check matrix is 
$$ H_2=\begin{pmatrix}
    1   & 1   & 1 &  1  &  1  \\
    0   & 1   & 2 &  3  &  4 \\
    0   & 1   & 4 &  4  &  1 \\
\end{pmatrix}.$$

The parity-check matrix of the MP code $\mathcal{C}=[\mathcal{C}_1,\mathcal{C}_2,\mathcal{C}_3,\mathcal{C}_4]\cdot A$ with parameters $[20,16,3]_5$ is 
\setcounter{MaxMatrixCols}{50}
$$ H=\frac{1}{3}\begin{pmatrix}
    1   & 1   & 1 &  1 & 1     &    4 & 4 & 4 & 4 & 4     &    1 & 1 & 1 & 1 & 1   &    4 & 4 & 4 & 4 & 4  \\
    1   & 1   & 1 &  1 & 1     &    2 & 2 & 2 & 2 & 2     &    4 & 4 & 4 & 4 & 4   &    3 & 3 & 3 & 3 & 3\\
    0   & 1   & 2 &  3 & 4     &    0 & 2 & 4 & 1 & 3     &    0 & 4 & 3 & 2 & 1   &    0 & 3 & 1 & 4 & 2\\
    0   & 1   & 4 &  4 & 1     &    0 & 2 & 3 & 3 & 2     &    0 & 4 & 1 & 1 & 4   &    0 & 3 & 2 & 2 & 3\\

\end{pmatrix}.$$

After the following specific permutations, 

$$1,2,3,4,5,6,7,8,9,10,11,12,13,14,15,16,17,18,19,20$$
$$\downarrow\ \psi$$
$$1,6,11,16, 2,7,12,17, 3,8,13,18, 4,9,14,19, 5,10,15,20$$
$$\downarrow\ \tau$$
$$3,6,12,16, 4,7,13,17, 5,8,14,18, 1,9,15,19, 2,10,11,20,$$

the code $\mathcal{D}$ with parameters $[20,16,3]_5$ has a parity-check matrix of the form:

$$ H=\frac{1}{3}\begin{pmatrix}
    1   & 4   & 1 &  4 & 1     &    4 & 1 & 4 & 1 & 4     &    1 & 4 & 1 & 4 & 1   &    4 & 1 & 4 & 1 & 4  \\
    1   & 2   & 4 &  3 & 1     &    2 & 4 & 3 & 1 & 2     &    4 & 3 & 1 & 2 & 4   &    3 & 1 & 2 & 4 & 3\\
    2   & 0   & 4 &  0 & 3     &    2 & 3 & 3 & 4 & 4     &    2 & 1 & 0 & 1 & 1   &    4 & 1 & 3 & 0 & 2\\
    4   & 0   & 4 &  0 & 4     &    2 & 1 & 3 & 1 & 3     &    1 & 2 & 0 & 3 & 4   &    2 & 1 & 2 & 0 & 3\\

\end{pmatrix}.$$

According to Theorem \ref{the3.4} and the Magma program, the code $\mathcal{D}$ is an MDS $(20,6)_5$ symbol-pair code.

\end{example}

\begin{example}\label{example3.5}
    Let $p=3$, $q=9$, and $\mathbb{F}_9=\{0,1,2,\alpha,\alpha+1,\alpha+2,2\alpha,2\alpha+1,2\alpha+2\}$, where $\alpha$ is a root of the irreducible polynomial $f(x)=x^2+1$ over $\mathbb{F}_3$.
    Let $\xi=1+\alpha$ be  a primitive element of $\mathbb{F}_9$, then $\xi^2=2\alpha$, $\xi^3=1+2\alpha$, $\xi^4=2$, $\xi^5=2+2\alpha$, $\xi^6=\alpha$, $\xi^7=2+\alpha$, $\xi^8=1$.
    Taking $\omega=\xi^2$ be a primitive $4$-th root of unity and 
    $$ A=\begin{pmatrix}
        1   & 1   & 1  & 1  \\
        1   & \xi^2  & \xi^4  & \xi^6   \\
        1   & \xi^4   & 1 & \xi^4   \\
        1   & \xi^6    & \xi^4   & \xi^2     \\
    \end{pmatrix}.$$
Then we can get 
    $$ (A^{-1})^T=\frac{1}{4}\begin{pmatrix}
        1   & 1   & 1 & 1 \\
        1   &\xi^6   & \xi^4    & \xi^2\\
        1   & \xi^4   &1 & \xi^4   \\
        1   & \xi^2  & \xi^4  & \xi^6  \\
    \end{pmatrix}.$$ 
    
    Assume that $\mathcal{C}_1$ is the code with parameters $[6,6,1]_9$,
    $\mathcal{C}_2=\mathcal{C}_3$ is the GRS code with parameters $[6,5,2]_9$ whose parity-check matrix is 
    $$ H_2=\begin{pmatrix}
        1   & 1   & 1 &  1 & 1 & 1 \\
    \end{pmatrix},$$
    $\mathcal{C}_4$ is the GRS code with parameters $[6,2,5]_9$ whose parity-check matrix is 

    $$ H_3=\begin{pmatrix}
        1   & 1   & 1     &  1 & 1 & 1 \\
        0   & 1   & \xi^4 &  \xi^6 & \xi & \xi^7 \\
        0   & 1   & 1     &  \xi^4 & \xi^2 & \xi^6 \\
        0   & 1   & \xi^4 &   \xi^2& \xi^3 & \xi^5
    \end{pmatrix}.$$

    The parity-check matrix of the MP code $\mathcal{C}=[\mathcal{C}_1,\mathcal{C}_2,\mathcal{C}_3,\mathcal{C}_4]\cdot A$ with parameters $[24,18,4]_9$ is 

    \setcounter{MaxMatrixCols}{30}
    $$ H=\frac{1}{4}\begin{pmatrix}
        1   & 1   & 1     &  1     & 1     & 1     & \xi^6 & \xi^6  & \xi^6  & \xi^6 & \xi^6  & \xi^6  &  \xi^4 & \xi^4  & \xi^4 & \xi^4 & \xi^4  & \xi^4 & \xi^2& \xi^2 & \xi^2 & \xi^2 & \xi^2 & \xi^2 \\
        1   & 1   & 1     &  1     & 1     & 1     & \xi^4 & \xi^4  & \xi^4  & \xi^4 & \xi^4  & \xi^4  &  1     & 1      & 1     &  1    & 1      & 1     &\xi^4 & \xi^4 & \xi^4 & \xi^4 & \xi^4 & \xi^4   \\
        1   & 1   & 1     &  1     & 1     & 1     & \xi^2 &  \xi^2 &  \xi^2 &  \xi^2&  \xi^2 &  \xi^2 & \xi^4  & \xi^4  & \xi^4 & \xi^4 & \xi^4  & \xi^4 &\xi^6 & \xi^6 & \xi^6 & \xi^6 & \xi^6 & \xi^6   \\
        0   & 1   & \xi^4 &  \xi^6 & \xi   & \xi^7 & 0     &  \xi^2 & \xi^6  &  1    &  \xi^3 &  \xi   & 0      &  \xi^4 &    1  & \xi^2 &  \xi^5 &\xi^3  &  0   & \xi^6 & \xi^2 & \xi^4 & \xi^7 & \xi^5    \\
        0   & 1   & 1     &  \xi^4 & \xi^2 & \xi^6 & 0     &  \xi^2 &  \xi^2 & \xi^6 & \xi^4  &  1     &  0     &   \xi^4& \xi^4 &   1   & \xi^6  &\xi^2  &   0  & \xi^6 & \xi^6 & \xi^2 &   1   & \xi^4     \\
        0   & 1   & \xi^4 &   \xi^2& \xi^3 & \xi^5 & 0     &  \xi^2 & \xi^6  & \xi^4 &  \xi^5 & \xi^7  &  0     &  \xi^4 &  1    & \xi^6 & \xi^7  &\xi    &   0  &  \xi^6& \xi^2 & 1     &   \xi & \xi^3   \\
    \end{pmatrix}.$$

    After the following specific permutations, 

    $$1,2,3,4,5,6,7,8,9,10,11,12,13,14,15,16,17,18,19,20,21,22,23,24$$
    $$\downarrow\ \psi$$
    $$1,7,13,19,2,8,14,20,3,9,15,21,4,10,16,22,5,11,17,23,6,12,18,24$$
    $$\downarrow\ \tau$$
    $$1,18,8,19,2,13,9,20,3,14,10,21,4,15,11,22,5,16,12,23,6,17,7,24,$$

    the code $\mathcal{D}$ with parameters $[24,18,4]_9$ has a parity-check matrix of the form:

    \setcounter{MaxMatrixCols}{30}
    $$ H=\frac{1}{4}\begin{pmatrix}
        1   & \xi^4   & \xi^6 &  \xi^2 & 1     & \xi^4 & \xi^6 & \xi^2  & 1      & \xi^4 & \xi^6  & \xi^2  &  1     & \xi^4  & \xi^6 & \xi^2 & 1      & \xi^4 & \xi^6& \xi^2 & 1     & \xi^4 & \xi^6 & \xi^2 \\
        1   & 1       & \xi^4 &  \xi^4 & 1     & 1     & \xi^4 & \xi^4  & 1      & 1     & \xi^4  & \xi^4  &  1     & 1      & \xi^4 &  \xi^4& 1      & 1     &\xi^4 & \xi^4 & 1     & 1     & \xi^4 & \xi^4   \\
        1   & \xi^4   & \xi^2 &  \xi^6 & 1     & \xi^4 & \xi^2 &  \xi^6 &  1     &  \xi^4&  \xi^2 &  \xi^6 & 1      & \xi^4  & \xi^2 & \xi^6 & 1      & \xi^4 &\xi^2 & \xi^6 & 1     & \xi^4 & \xi^2 & \xi^6   \\
        0   & \xi^3   & \xi^2 &  0     & 1     & 0     & \xi^6 &  \xi^6 & \xi^4  &  \xi^4&  1     &  \xi^2 & \xi^6  &  1     &  \xi^3& \xi^4 &  \xi   &\xi^2  &  \xi & \xi^7 & \xi^7 & \xi^5 & 0     & \xi^5    \\
        0   & \xi^2   &  \xi^2&  0     & 1     & 0     & \xi^2 &  \xi^6 &  1     & \xi^4 & \xi^6  &  \xi^6 &  \xi^4 &   \xi^4& \xi^4 &  \xi^2& \xi^2  &1      &   1  & 1     & \xi^6 & \xi^6 &   0   & \xi^4     \\
        0   & \xi     & \xi^2 &  0     & 1     & 0     & \xi^6 &  \xi^6 & \xi^4  & \xi^4 &  \xi^4 & \xi^2  &  \xi^2 &  1     &  \xi^5& 1     & \xi^3  &\xi^6  & \xi^7&  \xi& \xi^5 & \xi^7 &   0   & \xi^3   \\
    \end{pmatrix}.$$

    According to Theorem \ref{the3.5} and the Magma program, the code $\mathcal{D}$ is an AMDS $(24,7)_9$ symbol-pair code.

\end{example}

\section{Conclusion}\label{sec4}
In this paper, inspired by the idea in \cite{LELP23}, several new classes of symbol-pair codes are derived from the permutation equivalence codes of 
matrix-product codes. 
Our results extended some conclusions in \cite{LELP23}, which made the lengths of MDS symbol-pair codes more general.
Notice that most of the known MDS symbol-pair codes over  $\mathbb{F}_{q}$, where $q$ is a prime power, have minimum symbol-pair distances $d_{sp}(\mathcal{C})\leq 6$.
If one restricts the finite field to the field with a prime number elements, i.e., $q=p$, then
there exist some constructions of MDS symbol-pair codes with $d_{sp}(\mathcal{C})>6$ under some constraints.
However, MDS symbol-pair codes constructed from matrix-product codes can break such restrictions. 
In our constructions, the MDS symbol-pair codes are over finite field $\mathbb{F}_{q}$ with prime power elements and the minimum symbol-pair distances  $d_{sp}(\mathcal{C})> 6$.
The research in this paper further shows that matrix-product codes is a good source in constructing symbol-pair codes.
We would like to try to use other matrix-product codes to construct symbol-pair codes and derive more new MDS symbol-pair codes in the future.

\section*{Conflict of interest}
The authors declare that there is no possible conflict of interest.

\section*{Code availability}
Not applicable.
\section*{Data availability}
Data sharing is not applicable to this article as no datasets were generated or analyzed during the current study. 

\section*{Acknowledgement}
The work was supported by the National Natural Science Foundation of China (12271137, U21A20428, 12171134).

\end{document}